\documentclass{elsart}

\usepackage{amssymb}
\usepackage{amsmath}
\usepackage{verbatim}
\usepackage{epsfig}
\usepackage{color}
\definecolor{Blue}{rgb}{1,0,0}
\usepackage{ulem}
\usepackage{cancel}
\usepackage{bbm}
\usepackage{relsize}
\usepackage{tikz}

\newcommand{\lb}{\mathopen{[\![ }}
\newcommand{\rb}{\mathclose{]\!]}}

\makeatletter
\newcommand{\subalign}[1]{%
  \vcenter{%
    \Let@ \restore@math@cr \default@tag
    \baselineskip\fontdimen10 \scriptfont\tw@
    \advance\baselineskip\fontdimen12 \scriptfont\tw@
    \lineskip\thr@@\fontdimen8 \scriptfont\thr@@
    \lineskiplimit\lineskip
    \ialign{\hfil$\m@th\scriptstyle##$&$\m@th\scriptstyle{}##$\crcr
      #1\crcr
    }%
  }
}
\makeatother

\setlength{\textwidth}{435 pt}
\setlength{\parindent}{0 pt}

\DeclareMathOperator{\seq}{\textbf{\textup{Seq}}}
\DeclareMathOperator{\para}{\textbf{\textup{Para}} }
\DeclareMathOperator{\width}{width}
\DeclareMathOperator{\depth}{depth}
\DeclareMathOperator{\supp}{supp}

\DeclareMathOperator{\id}{\textbf{{Id}}}
\DeclareMathOperator{\pom}{\textbf{{Pom}}}

\newcommand{\llbracket}{\mathopen{[\![ }}
\newcommand{\rrbracket}{\mathclose{]\!]}}

\begin{document}

\begin{frontmatter}

\title{Completeness Theorems for Pomset Languages and Concurrent Kleene Algebras}

\author[shef]{Michael R Laurence}
\author[shef]{Georg Struth}
\address[shef]{ Department of Computer Science, University of Sheffield}

\date{\today}

\begin{abstract}
Pomsets constitute one of the most basic models of concurrency. A pomset is a generalisation of a word over an alphabet in that letters may be partially ordered rather than totally ordered.  A  term $ t $ using the bi-Kleene operations   $0,1, +,  \cdot\, ,^*, \parallel, ^{(*)}$ 
 defines a set $ \mathopen{[\![ } t \mathclose{]\!] } $ of pomsets in a natural way. 
  We prove that every valid universal equality over pomset languages using these operations is a consequence of the equational theory of regular languages (in which parallel multiplication and iteration are undefined) plus that of the commutative-regular languages (in which sequential multiplication and iteration are undefined). We also show that the class of \textit{rational} pomset languages (that is, those languages generated from singleton pomsets  using the bi-Kleene operations) is closed under all Boolean operations. 
  
   An \textit{ideal} of a pomset $p$ is a pomset using the letters of $p$, but having an ordering at least as strict as $p$. A bi-Kleene term $t$ thus defines 
 the set  $\id (\mathopen{[\![ } t \mathclose{]\!] }) $ of ideals of pomsets in  $ \mathopen{[\![ } t \mathclose{]\!] } $. We prove that if  $t$ does not contain commutative iteration $^{(*)}$ (in our terminology, $t$ is bw-rational) then 
  $\id (\mathopen{[\![ } t \mathclose{]\!] }) \cap \pom_{sp}$, where $ \pom_{sp}$ is the set of  pomsets generated from singleton pomsets using sequential and parallel multiplication ($ \cdot$ and $ \parallel$) is defined by a bw-rational term, and 
 if  two such terms $t,t'$  define the same ideal language, then $t'=t$ is provable from the Kleene  axioms for  $0,1, +,  \cdot\, ,^*$  plus the commutative idempotent semiring axioms for  $0,1, +, \parallel$  plus the exchange law 
 $ (u \parallel v)\cdot ( x \parallel y) \le    v \cdot y   \parallel u \cdot x $.  
\end{abstract}
 
\end{frontmatter}

\def\gr{\mathit{Flowchart}}
\def\edgtyp{\mathit{edgeType}}
\def\labs{\mathit{Labels}}

\section{Introduction}

Pomsets may be regarded as a generalisation of both words over an alphabet and commutative words over an alphabet as studied by Conway \cite[Chapter 11]{ConwayJH:71:regafm}. Words of the former kind are generated using sequential multiplication $(\cdot)$, whereas commutative words
are generated using parallel multiplication $( \parallel )$. Both operations are defined on the set of pomsets.  Pomsets have been widely used to model the behaviour of concurrent systems \cite{Pratt:82:OCP,Pratt:85:COTM,Brookes:02:TPFFA,Gaston:Mislove:TCS:2002:pomset,Zhao:10:TPSSV}.

A pomset over an alphabet $ \Sigma$ is defined by a finite labelled partially ordered set; that is, a finite partially ordered set (or poset) $V$ on which a labelling function into $ \Sigma$ is defined. Since the focus is on the labelling rather than the elements of $V$, isomorphic labelled posets are regarded as defining the same pomset. For pomsets $p_1,p_2$ defined by posets $V_1,V_2$, the sequential and parallel products $p_1 \cdot p_2$ and $p_1 \parallel p_2$ are defined, respectively, by placing  the elements of $V_1$ below those of $V_2$, and placing  the elements of $V_1$ and  $V_2$ side by side. 

Given any monoid $(M, \cdot ,1)$, the operation $ \cdot $ can be extended pointwise to the  power set $2^M$ of $M$, and if  the regular operations $0,1,+,  \cdot , ^*$ are defined in the usual way for $2^M$ (in particular, $ P^* = \cup_{i \ge 0} P^i$), then the algebra thus defined is an example of a Kleene algebra (Definition \ref{kleene.alg.defn}). 
Since the set of pomsets over an alphabet $\Sigma$ is a monoid with respect to the operations $\cdot, 1$ and a commutative monoid with respect to $ \parallel, 1$, 
 the class of languages (sets) of pomsets over  $\Sigma$
 is thus  a bi-Kleene algebra with respect to the bi-Kleene  operations $0,1,+, \cdot, ^* \parallel, ^{(*)} $, where parallel iteration $^{(*)}$ is defined analogously to $^*$, but using parallel multiplication. 
 A pomset language is \textit{rational} if is defined by a bi-Kleene term over an alphabet $\Sigma$. 
 This is a simplification of the phrase series-parallel-rational used by Lodaya and Weil \cite{Lodaya:weil:98:SPPAAl:27,Lodaya98:weil:98:SPLBW}. If $t$ is a bi-Kleene term, then we use $ \lb t  \rb$ to denote the language that it defines.

In this paper we prove the following theorems, for bi-Kleene terms $t,t'$ over an alphabet $ \Sigma$; 

\begin{itemize}
\item
The language $ \lb t \rb - \lb t' \rb$ is rational.
\item
It is decidable whether  $ \lb t \rb = \lb t' \rb$ holds.
\item
 If $ \lb t \rb = \lb t' \rb$ holds, then $t=t'$ holds in every bi-Kleene algebra. Equivalently, the algebra of pomset languages generated by the bi-Kleene operations from the singleton pomsets with label in $ \Sigma$ is the free bi-Kleene algebra with basis $ \Sigma$. 
\end{itemize}

This latter theorem is, in effect, a strengthening of Gischer~\cite[Theorem 4.3]{Gischer:1988:EqThPo}, in which neither of the two Kleene stars $^*,\, ^{(*)}$ was considered. 
Bi-Kleene algebras have been proposed as tools  for the verification of concurrent programs \cite{CKA}. Our completeness and decidability results can make reasoning about such programs simpler and less problematic.

\subsection{New theorems for pomset ideals and bw-rational operations}

Given a pomset $p$, an \textit{ideal} of $p$ is a pomset  that may be represented using the same vertex set as $p$, with the same labelling, but whose partial ordering is at least as strict as that for $p$. We write $\id(L)$ for a pomset language $L$ to denote the set of ideals of elements of $L$. The function $\id$ was first defined by Grabowski \cite{Grabowski}, who associated  pomset ideals (that is, pomset languages closed under $\id$) with a reachability condition between markings of a Petri net.

 The class of pomset ideals  over $\Sigma$ is a Kleene algebra with respect to the Kleene operations, but is not a Kleene algebra with respect to the commutative Kleene operations $0,1,+, \parallel, ^{(*)} $, since $ L \parallel L'$ is not an ideal if $L,L' \nsubseteq \{ 1\}$, but it can be made into a bi-Kleene  algebra if $ \parallel $ is interpreted as $(L, L') \mapsto \id(L \parallel L')$ and parallel iteration $ ^{(*)}$ is defined analogously.
 Additionally, the class of pomset ideals satisfies the \textit{{exchange law}}:
 \begin{equation}
 (u \parallel v)\cdot ( x \parallel y) \le    v \cdot y   \parallel u \cdot x  \label{exch.eqn}
\end{equation}     
where we use the abbreviation 
 \begin{equation}
  t \le t'  \overset{\text{defn}}{ \iff} t +t' =t'. \label{prec.abbrev.eqn}
\end{equation} 
 We have failed to prove an analogous result for ideals to the freeness theorem given for pomset languages above, but by abandoning the  parallel iteration operation $ ^{(*)}$ we have the following partial results. 
 We will refer to   $0,1,+, \cdot, ^* \parallel $ as \textit{bw-rational} operations (`bw' meaning bounded width) and we call algebras over  the  bw-rational  operations
that satisfy both the Kleene axioms for $0,1,+, \cdot, ^*$    and the idempotent commutative semiring axioms for   $0,1,+,  \parallel$ bw-rational algebras 
 and refer to a term  in the  bw-rational operations as a bw-rational term. 
We say that a pomset is \textit{series-parallel} if it is generated from the set of singleton pomsets using only sequential and parallel multiplication, and use $ \pom_{sp}$ to denote the set of series-parallel pomsets.  
 With these definitions, we prove for bw-rational terms $t,t'$ over an alphabet $ \Sigma$ that

\begin{itemize} 
\item
 the  language $\id( \lb t \rb) \cap \pom_{sp}$ is representable by a  bw-rational term, and  
\item
 suppose that $\id( \lb t \rb )=\id( \lb t' \rb)$, or equivalently $\id( \lb t \rb ) \cap \pom_{sp}   =\id( \lb t' \rb)  \cap \pom_{sp}$. 
 Then $t=t'$ is a consequence of the bw-rational axioms plus the exchange law (\ref{exch.eqn}). Hence the algebra of pomset ideals generated by the bi-Kleene operations from the singleton pomsets with labels in $ \Sigma$ is the free  algebra with basis $ \Sigma$ with respect to the class of bw-rational algebras satisfying the exchange law. 
\end{itemize}

This freeness result is, in effect, a generalisation of Gischer \cite[Theorem 5.9]{Gischer:1988:EqThPo}, which gave the analogous result for idempotent bi-semirings, in which the Kleene star $^*$ was not considered.  
 

\subsection{Organisation of the paper}

In Section \ref{sect.main.defns}, we give most of the basic definitions and results that will be used throughout the paper. In Section \ref{sect.bool.closed.pomset.sp}, we prove our first main theorem for rational pomset languages; in particular, we show that if $L,L'$ are rational languages, then so is $L \setminus L'$. We also show that a bi-Kleene term defining  $L \setminus L'$ can be computed from terms defining $L$ and $L'$.  In Section \ref{sect.biKleene.terms.equal.imply.langs}, we prove our second main theorem; that if two bi-Kleene terms define the same rational language, then they define the same element of every bi-Kleene algebra. 
In Section \ref{sect.main.thms.ideals}, we give further definitions for pomset ideals. We also prove that  
the set of pomset ideals defines a bi-Kleene algebra, provided that the operations $ \parallel, \, ^{(*)}$
are suitably modified. Section \ref{subsect.summary.proof.ideal.thms} gives a summary of the method of proof of our remaining theorems, which occupies Sections \ref{sect.auto.lems}--\ref{sect.main.thms.ideals}. 
In Section \ref{sect.conclusions} we give our conclusions.

\section{Kleene algebra and pomset definitions}

\label{sect.main.defns}

\begin{defn}[bi-Kleene algebras and bw-rational algebras] \rm \label{kleene.alg.defn}
A monoid, as usual, is an algebra with an associative binary operation $ \cdot$ and identity $1$. 
A bimonoid is an algebra with operations $ \cdot, \parallel, 1$ that is a monoid with respect to  $ \cdot,1$ and a commutative monoid with respect to $ \parallel, 1$. 

A Kleene algebra  is an algebra $K$ with constants $0,1$, a binary addition operation $+$, a  multiplication operation $ \cdot$ (usually omitted) and a unary iteration operation $^*$, such that the following hold; $(K, 1, \cdot) $ is a monoid, $(K,0,+)$ is a  commutative monoid  and also, for all $ x,y,z \in K$, 
\begin{align}
  & x+x =x, \qquad x(y + z) = xy + xz, \qquad (y+z) x = yx + yz , \label{dioid.eqn}
\\
& 1 + xx^* = 1 + x^*x = x^*,  \label{eqn.kleene.star.add}  
 \\
 & xy \le y \Rightarrow  x^*y \le y, \qquad yx \le y \Rightarrow  yx^* \le y,  \label{induct.eqn}
\end{align}
where (\ref{prec.abbrev.eqn}) is assumed. 
The identities (\ref{induct.eqn}) are normally called the induction axioms. The identities in  (\ref{dioid.eqn})  together with the preceding conditions amount to stating that $K$ is an idempotent semiring, or dioid.  We say that $K$ is a commutative Kleene algebra if $ \cdot$ is commutative. 

A bi-Kleene algebra is an algebra with  operations $0,1, +, \cdot, ^*, \parallel, ^{(*)}$ that is a Kleene algebra with respect to    $0,1, +, \cdot, ^* $ and a commutative Kleene algebra with respect to  $0,1, +,  \parallel, ^{(*)}$, with $ \parallel$ and $ ^{(*)}$ playing the role of $\cdot$ and $^*$ respectively in the Kleene axioms given above.  For the purposes of this paper,  we need to define \textit{bw-rational} algebras, which have operations $0,1, +, \cdot, ^*, \parallel, $ and satisfy only the conditions on the definition of a bi-Kleene algebra given above that do not mention $^{(*)}$; thus, a bw-rational algebra is a  Kleene algebra with respect to   $0,1, +, \cdot, ^* $  and is
 a commutative idempotent semiring with respect to the operations $0,1, + , \parallel$; that is, it
satisfies (\ref{dioid.eqn}) with $ \cdot$ replaced by $ \parallel$ and is a commutative monoid with respect to  $1,  \parallel$.

Given a set $ \Sigma$, we use $T_{Reg}(\Sigma)$,  $T_{ComReg}(\Sigma)$, $T_{bimonoid}(\Sigma)$, $T_{bi-KA}(\Sigma)$, and $T_{bw-Rat}(\Sigma)$ to denote the sets of terms generated from $ \Sigma$ using, respectively, the regular operations $0,1, +, \cdot, ^* $, the commutative-regular operations  $0,1, +,  \parallel, ^{(*)}$, the bimonoid operations $1, \cdot, \parallel$, the bi-Kleene operations 
$0,1, +, \cdot, ^*, \parallel, ^{(*)}$ and the bw-rational operations $0,1, +, \cdot, ^*, \parallel $.
 \end{defn}

An important class of naturally arising Kleene algebras is given by Proposition \ref{prop.powerset.kleene.natural}.

\begin{prop}[Kleene algebras defined on power sets of monoids]   \label{prop.powerset.kleene.natural} \rm
Let 
\\
$(M, 1, \cdot)$ be a monoid. Then $(2^M, 0,1, +, \cdot, ^*)$, with $0$ defining $ \emptyset$, $1$ defining  $\{  1 \}$, $+$ defining union,  $ \cdot$ given by pointwise multiplication and $S^* \overset{\text{defn}}{=} \cup_{i \ge 0} S^i$, is a Kleene algebra. 
\end{prop}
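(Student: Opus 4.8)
The plan is to check directly that $(2^M, 0, 1, +, \cdot, ^*)$ satisfies each of the conditions listed in Definition \ref{kleene.alg.defn}. First observe that in this algebra the derived order $\le$ of (\ref{prec.abbrev.eqn}) is exactly set inclusion, since $S + T = T$ holds iff $S \cup T = T$ iff $S \subseteq T$; I will use this throughout. That $(2^M, 1, \cdot)$ is a monoid is immediate from associativity of $\cdot$ on $M$ and the fact that $1 \in M$ is an identity, applied elementwise to $A \cdot B = \{\, ab : a \in A,\, b \in B \,\}$; that $(2^M, 0, +)$ is a commutative monoid, and that $S + S = S$, are the corresponding facts about set union. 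For the distributivity laws in (\ref{dioid.eqn}) I would prove the stronger statement that pointwise multiplication distributes over arbitrary unions on each side, $S \cdot \bigcup_j T_j = \bigcup_j (S \cdot T_j)$ and $(\bigcup_j T_j) \cdot S = \bigcup_j (T_j \cdot S)$, both of which follow at once by unwinding the definition of $\cdot$; the binary cases are the instances with two indices.

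Next I would establish the star unfolding law (\ref{eqn.kleene.star.add}). Writing $S^* = \bigcup_{i \ge 0} S^i$ with $S^0 = \{1\}$, split off the $i = 0$ term and reindex to get $S^* = \{1\} \cup \bigcup_{i \ge 0} S^{i+1}$. Since $S^{i+1} = S \cdot S^i$, the infinite distributivity just noted gives $\bigcup_{i \ge 0} S \cdot S^i = S \cdot \bigcup_{i \ge 0} S^i = S \cdot S^*$, so $S^* = \{1\} \cup S S^* = 1 + S S^*$; the identity $S^* = 1 + S^* S$ is obtained symmetrically from $S^{i+1} = S^i \cdot S$ and right distributivity over the union. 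For the induction axioms (\ref{induct.eqn}), assume $S T \subseteq T$ and show by induction on $i$ that $S^i T \subseteq T$: the case $i = 0$ reads $T \subseteq T$, and $S^i T \subseteq T$ implies $S^{i+1} T = S (S^i T) \subseteq S T \subseteq T$, using monotonicity of $\cdot$ in its second argument, which follows from the distributivity laws. Taking the union over $i$ yields $S^* T = \bigcup_{i \ge 0} S^i T \subseteq T$, i.e.\ $S^* T \le T$; the other induction axiom follows by the symmetric argument from $T S \subseteq T$.

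All conditions of the definition having been verified, the proposition follows. No step is a genuine obstacle; the only point worth isolating is the infinite distributivity of pointwise multiplication over unions, which is what makes the two star axioms and the two induction axioms go through, and which holds automatically in any power-set algebra of this form.
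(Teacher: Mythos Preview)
Your proof is correct and takes the same approach as the paper, namely direct verification of the axioms in Definition~\ref{kleene.alg.defn}; the paper simply writes ``Straightforward'' without supplying details, so you have filled in precisely what the paper leaves implicit. Your isolation of infinite distributivity of pointwise multiplication over unions as the single fact driving both the unfolding and induction axioms is a clean way to organise the argument.
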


\begin{proof}
Straightforward. \qed
\end{proof}

\begin{defn}[commutative words] \rm
A commutative word  over an alphabet $ \Sigma $ is a multiset over $ \Sigma $; that is, a function from $ \Sigma $ into the set of non-negative integers. A commuting word may be represented by a word $ \sigma_1 \parallel \cdots \parallel \sigma_m $ with each    $ \sigma_i \in  \Sigma$, with two such words representing the same commutative word if and only if for each  $ \sigma \in \Sigma$, they contain the same number of occurrences of $ \sigma$. Thus the set of commutative words forms  a commutative monoid with $ \parallel$ as multiplication and the empty word $1$ as identity.  
\end{defn}

It follows from Proposition \ref{prop.powerset.kleene.natural} that the set of languages of strings over an alphabet $ \Sigma$ is a Kleene algebra, and the set of languages of commutative words over $ \Sigma$ is a commutative Kleene algebra with respect to the commutative-regular operations  $0,1, +,  \parallel, ^{(*)}$, when these are interpreted as given in the Proposition; in particular, $S^{(*)}=  \cup_{i \ge 0} S^{(i)}$, where we define 
\begin{equation}
S^{(0)} = 1, \qquad  S^{(1)} = S , \qquad
S^{(2)} = S \parallel S, \qquad S^{(3)} = S \parallel S \parallel S, \ldots \label{power.exp.eqn}
\end{equation}

\begin{defn}[pomsets and the $ \supp$ function] \rm
A labelled partial order is a $3$-tuple $(V, \le, \mu)  $, where $V$ is a set of vertices, $ \le $ is a partial ordering on the set $V$ and $ \mu : V \to \Sigma $ for an alphabet $ \Sigma$  is a labelling function. 
Two labelled partial orders  $(V, \le, \mu)  $ and $(V', \le', \mu')  $ are isomorphic if there is a bijection $ \tau : V \to V' $ that preserves ordering and labelling; that is, for $v,w \in V$, $ v \le w \iff  \tau(v) \le' \tau (w)$ and $ \mu(v) = \mu'( \tau(v))$ holds. 
A pomset is an isomorphism class of finite labelled partial orders, and a set of pomsets is usually called a language. We write $\pom( \Sigma)$ to denote the set of all pomsets with labels in an alphabet $ \Sigma$.
If $p$ is a pomset, then $ \supp(p)$ is the set of labels occurring in $p$, and if $L$ is a pomset language, then we define  $ \supp(L)= \cup_{p \in L} \supp(p)$. 
\end{defn}

Observe that a pomset whose ordering $ \le$  is total is simply a word, in the usual sense, over its labelling alphabet $ \Sigma $. Thus the word $ \sigma $ of length one for $ \sigma \in \Sigma$  is  
 the pomset with a single vertex having label $ \sigma$. On the other hand, a pomset over  $ \Sigma $  whose order relation is empty is, in effect, a commutative word $ \sigma_1 \parallel \ldots \parallel \sigma_m $ with each    $ \sigma_i \in  \Sigma$.

\begin{defn}[sequential and parallel multiplication of pomsets] \rm
 For pomsets $ p_1,p_2$ represented by the $3$-tuples $(V_1, \le_1, \mu_1)  $ and $(V_2, \le_2, \mu_2)  $ respectively, their sequential product $p_1  \cdot p_2$ and parallel product $ p_1 \parallel p_2$ are given as follows;  these definitions can easily be shown to be well-defined; that is, independent of the choice of representative $3$-tuple of each pomset $p_i$.
\begin{itemize}
\item 
 $p_1 \cdot p_2$ (usually written simply $p_1 p_2$) is represented by the $3$-tuple $(V_1 \cup V_2,  \ll , \mu )$, where the function $\mu$ agrees with each function $ \mu_i$ on the set $V_i$ and $ v \ll w$ holds if and only if either both vertices $v,w$ lie in one set $V_i$ for $i \in \{ 1,2 \}$ and $ v \le_i w$ holds,  or $ v \in V_1$ and $ w \in V_2$. 
\item
the pomset $p_1 \parallel p_2$  is represented by the $3$-tuple $(V_1 \cup V_2, \prec , \eta )$, where the function $\eta$ agrees with each function $ \mu_i$ on the set $V_i$ and $ v \prec w$ holds if and only if  both vertices $v,w$ lie in one set $V_i$ for $i \in \{ 1,2 \}$.
\end{itemize}
\end{defn}

\subsection{The bi-Kleene algebra of pomset languages}

It follows from Proposition \ref{prop.powerset.kleene.natural} that the set of pomset languages over $ \Sigma $ is a bi-Kleene algebra when equipped with the constant operations $0,1$, the operations  $ +,\,  \cdot,\, \parallel $ of arity two and the operations $ ^*  $ and  
 $^{(*)}$ of arity one, with interpretations as given in the Proposition; in particular,  
$1$ denotes the singleton  containing  the empty pomset, also denoted by  $ 1$, whose vertex set is empty, 
 the sequential and parallel products of pomset languages are defined from those of pomsets by pointwise extension, and 
for a  pomset language $P$, we define $P^* = \bigcup_{i \ge 0} P^i $ and $P^{(*)} =  \bigcup_{i \ge 0} P^{(i)}$, where $P^{(i)}$ is defined as indicated in (\ref{power.exp.eqn}).

\subsection{Series-parallel pomsets and rational pomset languages}

For an alphabet $ \Sigma$ and $ \sigma \in \Sigma$, we use $\sigma$ to refer to   the  pomset having only one vertex with label  $  \sigma$, and  for any $t \in T_{bi-KA}( \Sigma)$, we write $ \lb t \rb $ to denote
the pomset language defined by $t$, with operations interpreted as above.  Thus if 
$t \in T_{Reg}( \Sigma)$ then $\lb t \rb $ is regular; by analogy, if $t \in T_{ComReg}( \Sigma)$ then we say that  $\lb t \rb $ is commutative-regular. 
If a pomset $p$ satisfies $ \{p \}  = \lb t \rb $ for $t \in T_{bimonoid}( \Sigma)$, then we say that $p$ is a \textit{series-parallel} pomset. We write $ \pom_{sp}$ and $ \pom_{sp}(\Sigma)$ to denote, respectively, the set of all series-parallel pomsets and the set of all series-parallel pomsets with labels in $ \Sigma$. Fig. \ref{4602f.N.shape.pomset.grisly.fig} gives an example of a pomset that does not lie in $ \pom_{sp}$.

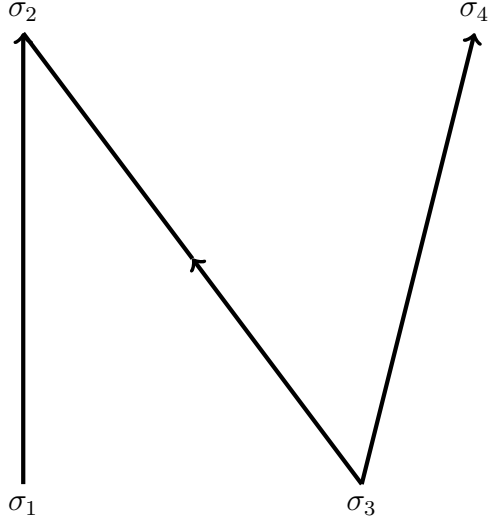
\begin{figure}[t]
\begin{center}
\begin{tikzpicture} [scale= 3]

\draw [->] [ultra thick]  (0,0)   node[below]{$ \sigma_1 $} --(0,2) node[above]{$ \sigma_2 $}   ;
\draw [->] [ultra thick] (1.5,0) --(0.75,1)   ;
\draw [ultra thick] (0.75,1) -- (0,2)  ;
\draw [->] [ultra thick] (1.5,0)  node[below]{$ \sigma_3 $} --(2,2) node[above]{$ \sigma_4 $}   ;

\end{tikzpicture}
\end{center}
\caption{An example of a pomset that is not series-parallel.} \label{4602f.N.shape.pomset.grisly.fig}
\end{figure}
We say that a pomset language $L$ is \textit{rational} if $ L = \lb t \rb $ for $t \in T_{bi-KA}( \Sigma)$; if $t \in T_{bw-Rat}( \Sigma)$, we say that $L$ is bw-rational. 
The following freeness results for the algebras of regular and commutative-regular languages have been proved.

\begin{thm} \label{kozen.390o.freeness.reg.comreg.thm}
Let $ \Sigma$ be an alphabet. If $t,t' \in T_{Reg}(\Sigma)$ and $ \lb t \rb = \lb t' \rb $ holds, then $t=t'$ holds in every Kleene algebra. If instead, $t,t' \in T_{ComReg}(\Sigma)$ and $ \lb t \rb = \lb t' \rb $ holds, then $t=t'$ holds in every commutative Kleene algebra.
\end{thm}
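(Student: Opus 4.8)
Both halves are known; the plan is to reconstruct their standard proofs and isolate the one step in each that carries the weight.

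\emph{The regular case.} This is Kozen's completeness theorem for Kleene algebra, and I would follow its automata-theoretic proof. Fix $t,t'\in T_{Reg}(\Sigma)$ with $\lb t\rb=\lb t'\rb$. First I would set up a syntactic version of Kleene's theorem: to each regular term $u$ one assigns a finite automaton --- a transition matrix $M_u$ over the free Kleene algebra on $\Sigma$ together with an initial row vector $\vec a$ and a final column vector $\vec b$ --- and proves, by induction on the structure of $u$, that the least solution $\vec s$ of the linear system $\vec s = M_u\,\vec s + \vec b$ satisfies $u = \vec a\cdot\vec s$ provably from the axioms of Definition \ref{kleene.alg.defn}; the converse passage, collapsing an automaton back to a single term by Gaussian elimination, uses only provable identities such as the unfolding \eqref{eqn.kleene.star.add} and the denesting/sliding laws, all derivable from \eqref{dioid.eqn}--\eqref{induct.eqn}. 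Next I would determinise (subset construction) and minimise (Myhill--Nerode quotient) to obtain, for the language $\lb t\rb$, a \emph{canonical} deterministic automaton $D$ depending only on that language. The heart of the argument is then a lemma stating that a simulation of automata $A\to B$ induces a provable inequality between the associated solution terms, and a bisimulation a provable equality; this is precisely where the induction axioms \eqref{induct.eqn} enter (they are the ``bisimulation rule'' in disguise). Applying it with $A$ the automaton of $t$ and $B = D$ gives $t = (\text{term of }D)$ in every Kleene algebra, and the same for $t'$ since $\lb t'\rb = \lb t\rb$ produces the same minimal automaton; chaining the two equalities yields $t = t'$.

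\emph{The commutative case.} Here the languages in play are the rational --- equivalently semilinear --- subsets of $\mathbb N^\Sigma$. If $|\Sigma| = 1$ every Kleene algebra is automatically commutative and $\lb t\rb$ is an ordinary language over a one-letter alphabet, so the claim is just the one-letter instance of the case already proved. For $|\Sigma|\ge 2$ I would aim at a normal-form theorem inside commutative Kleene algebra: using \eqref{dioid.eqn} with $\cdot$ replaced by $\parallel$, idempotence, the unfolding $p^{(*)} = 1 + p\parallel p^{(*)}$ and the induction axioms, rewrite every $u\in T_{ComReg}(\Sigma)$ to a finite sum $\sum_i\bigl(c_i\parallel p_{i,1}^{(*)}\parallel\cdots\parallel p_{i,k_i}^{(*)}\bigr)$ of ``linear blocks'', with the $c_i,p_{i,j}$ commutative words --- the provable counterpart of the semilinear decomposition of $\lb u\rb$. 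One then has to show that two such normal forms denoting the same semilinear set agree up to the commutative Kleene axioms, via a uniqueness argument for bases and periods once the blocks have been put in a standard shape (periods reduced, linear pieces made pairwise incomparable, redundant generators absorbed). Transitivity through the normal forms of $t$ and $t'$ then gives $t = t'$.

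\emph{Main obstacle.} In the regular case the syntactic Kleene theorem and the simulation-to-proof lemma are long but routine; the one genuinely delicate point is that the minimal automaton is unique \emph{and} reachable by simulations from both sides (determinise, then quotient by the largest bisimulation), so that nothing beyond \eqref{induct.eqn} is needed to bridge $t$ and $t'$. In the commutative case the real difficulty is the canonical form: semilinear sets have no unique (base, period) presentation, so the step from ``same set'' to ``provably equal normal forms'' demands a careful normalisation of bases and periods, and every normalisation move must be justified from the commutative Kleene axioms rather than merely observed to hold semantically. That normalisation, not the initial translation into normal form, is where I expect the work to concentrate.
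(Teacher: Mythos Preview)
The paper does not actually prove this theorem: its entire proof is a two-line citation, attributing the regular case to Kozen and the commutative-regular case to Conway (Chapter~11). Your proposal goes well beyond what the paper does, by sketching reconstructions of those external proofs.

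Your regular-case outline is indeed Kozen's argument in broad strokes, so in that sense it matches the cited source. For the commutative case you correctly locate the difficulty in normalising semilinear presentations, which is where Conway's Chapter~11 does the work; your plan is consistent with that source, though Conway's treatment is phrased somewhat differently (via ``ultimately periodic'' descriptions rather than explicit base--period normal forms) and the passage from his results to the precise statement here is, as the paper admits, only implicit. One small correction: your remark that for $|\Sigma|=1$ ``every Kleene algebra is automatically commutative'' is not right as stated --- the target algebra need not be commutative; what is true is that the subalgebra generated by a single element together with $0,1$ is commutative in any Kleene algebra, which is all you need. None of this affects the comparison with the paper, which simply defers both halves to the literature.
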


\begin{proof}
The assertion for regular languages was proved by Kozen \cite{kozen:complete.kleene.reg}. For commutative-regular languages, the result is implicit in the work of Conway \cite[chap.11]{ConwayJH:71:regafm}. \qed
\end{proof}

\begin{defn}[parallel and sequential pomset languages] \rm 
A pomset $p$ is 
\\
$$\begin{cases}
\text{sequential} &  \text{if } p= q_1 q_2 \\
\text{parallel} & \text{if } p= q_1 \parallel q_2
\end{cases}$$
for pomsets $q_1,q_2$ with each $q_i \not= 1$ in each case. A pomset language $L$ is sequential if every element of $L$ is sequential and non-sequential if none of its elements are sequential; we define a language to be parallel analogously. We define $\seq$ and $\para $ to be the language of all sequential and parallel  pomsets  respectively. Further, for any $ i \ge 1$ we define the language $\para_i =  \{ q_1  \parallel \cdots \parallel q_i \vert\, \text{each } q_i\not= 1  \text{ and not parallel} \}$. 
Thus 
$$ \para = \cup_{i \ge 2} \para_i $$ 
holds. 
\end{defn}

\subsection{The bi-Kleene algebra of rational pomset languages is free with respect to bi-Kleene algebras defined by power sets of bimonoids}

Lemma \ref{flamingo.5925g.lem} shows that a pomset cannot be both sequential and parallel, and hence a sequential pomset language and a parallel pomset language do not intersect.  
 
 \begin{lem} \label{flamingo.5925g.lem}
Let $p_1, p_2,q_1,q_2$ be pomsets and suppose that each $p_i \not= 1, q_j \not= 1$. Then $p_1 \parallel p_2 \not= q_1 q_2$ holds. 
\end{lem}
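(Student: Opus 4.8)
The plan is to attach to every pomset an isomorphism invariant that tells parallel products apart from sequential ones: the connectedness of its \emph{comparability graph}. Given a pomset $p$ represented by a labelled partial order $(V,\le,\mu)$, let $G(p)$ be the undirected graph with vertex set $V$ and an edge $\{v,w\}$ whenever $v\neq w$ and either $v\le w$ or $w\le v$. Any isomorphism of labelled partial orders preserves the comparability relation, so the number of connected components of $G(p)$ depends only on $p$ and not on the chosen representative $(V,\le,\mu)$. It therefore suffices to prove that $G(q_1q_2)$ is connected whereas $G(p_1\parallel p_2)$ is disconnected, since a connected graph and a disconnected graph are not isomorphic.

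For the parallel case, represent $p_1,p_2$ by $(V_1,\le_1,\mu_1)$ and $(V_2,\le_2,\mu_2)$ with $V_1,V_2$ disjoint; by hypothesis $p_i\neq 1$, so each $V_i\neq\emptyset$. By the definition of $\parallel$, a vertex of $V_1$ is never comparable with a vertex of $V_2$, so $G(p_1\parallel p_2)$ contains no edge joining $V_1$ to $V_2$. Picking any $a\in V_1$ and $b\in V_2$, every walk from $a$ to $b$ would have to traverse such an edge, so no path between $a$ and $b$ exists and $G(p_1\parallel p_2)$ is disconnected.

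For the sequential case, represent $q_1,q_2$ by $(V_1,\le_1,\mu_1)$ and $(V_2,\le_2,\mu_2)$ with $V_1,V_2$ disjoint and, again, both nonempty since $q_j\neq 1$; in $q_1q_2$ every vertex of $V_1$ lies below every vertex of $V_2$. Fix $a\in V_1$ and $b\in V_2$. For arbitrary $x,y$ in the vertex set $V_1\cup V_2$ of $q_1q_2$ I would exhibit a path in $G(q_1q_2)$: if $x$ and $y$ lie on opposite sides they are comparable, hence adjacent; if $x,y\in V_1$ then $x$ and $y$ are both below $b$, so $x\,{-}\,b\,{-}\,y$ is a path; and symmetrically if $x,y\in V_2$ then $x\,{-}\,a\,{-}\,y$ is a path. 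Hence $G(q_1q_2)$ is connected, and combining the two cases gives $p_1\parallel p_2\neq q_1q_2$.

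I do not expect a genuine obstacle here; this is essentially the classical comparability-graph observation. The only points that need care are the well-definedness remark (that connectedness of $G(p)$ is an isomorphism invariant, so that it is legitimate to speak of $G(p)$ for a pomset $p$) and the careful use of the hypotheses $q_j\neq 1$, i.e. $V_j\neq\emptyset$, both to guarantee the existence of the ``bridge'' vertices $a$ and $b$ in the sequential case and to make the separation of $V_1$ from $V_2$ genuinely nontrivial in the parallel case.
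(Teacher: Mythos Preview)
Your proof is correct and takes a genuinely different route from the paper. The paper argues by contradiction: assuming $p_1\parallel p_2=q_1q_2$, it realises the common pomset on a single poset $(Z,\le)$ carrying two nontrivial partitions $Z=V_1\uplus V_2$ (from the sequential decomposition, so every $V_1$-element lies below every $V_2$-element) and $Z=W_1\uplus W_2$ (from the parallel decomposition, so elements of $W_1$ and $W_2$ are incomparable). A short case analysis then produces $w_1\in W_1\cap V_1$ and $w_2\in W_2\cap V_2$, which must be simultaneously comparable and incomparable.

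You instead introduce an isomorphism invariant, the connectedness of the comparability graph, and show that $G(q_1q_2)$ is connected while $G(p_1\parallel p_2)$ is not. This is slightly less direct in that it names an auxiliary object, but it is conceptually cleaner: it explains \emph{why} the two constructions can never coincide rather than merely locating a contradictory pair of vertices, and the invariant you isolate (indeed, the number of components of $G(p)$) carries more information than is needed here. The paper's argument, by contrast, is entirely self-contained within the two partitions and avoids any graph-theoretic language. Both proofs use the hypotheses $p_i\neq1$, $q_j\neq1$ in the same essential way, to guarantee that the relevant vertex sets are nonempty.
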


\begin{proof}
Suppose that $p_1 \parallel p_2 = q_1 q_2$ holds, and let $(Z, \le)$ be a poset defining $q_1 q_2$.  Thus $Z$ can be partitioned non-trivially as
$Z= V_1 \uplus V_2 = W_1 \uplus U_2$, where $x_1 \le x_2$ if each $ x_i \in V_i$ and $y_1, y_2$ are incomparable with respect to $\le$  if each $ y_i \in W_i$. Suppose $ W_1 \subseteq V_2$; then $ W_2 \supseteq V_1$, giving a contradiction since the sets $V_i, W_i$ are non-empty and so $ W_1 \cap V_2, \, W_2 \cap V_1 \not= \emptyset$. Thus $ W_1 \nsubseteq V_2$ and so $ W_1 \cap V_1 \not= \emptyset$. Similarly $ W_2 \cap V_2 \not= \emptyset$ also holds, again giving a contradiction. Thus the conclusion follows. \qed
\end{proof}

 \begin{lem}[uniqueness of pomset decomposition] \label{lem.pomset.decomp.unique} \mbox{}
 \begin{enumerate}
\item 
Let $p_1 \parallel \cdots \cdots \parallel p_m = q_1 \parallel \cdots \cdots \parallel q_n $ be a pomset and assume that no pomset $p_i$ or $q_j$ is parallel. Then $m=n $ and there is a permutation $ \theta$ on $ \{ 1, \ldots ,m \}$ such that each $p_i = q_{ \theta(i)}$. 
\item
Let $p_1  \ldots \ldots  p_m = q_1  \ldots \ldots q_n $ be a pomset and assume that no pomset $p_i$ or $q_j$ is sequential. Then $m=n $ and each $ p_i = q_i$. 
\end{enumerate}
 \end{lem}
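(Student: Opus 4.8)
The plan is to handle the two parts with analogous but distinct structural invariants: for (1) the connected components of the comparability graph, and for (2) the linearly ordered family of ``sequential cuts''. Throughout I assume, as is evidently intended, that every factor $p_i,q_j$ differs from $1$ (this is already needed for $m=n$, since e.g.\ $1 = 1 \parallel 1 = 1\cdot 1$).

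\textbf{Part (1).} For a pomset $p$ let $G(p)$ be its comparability graph: the undirected graph on the vertex set of $p$ with an edge $\{u,v\}$ whenever $u\neq v$ and $u,v$ are comparable. First I would show that a pomset with at least one vertex fails to be parallel if and only if $G(p)$ is connected: if $G(p)$ has components $C_1,\dots,C_k$ with $k\geq 2$ then all cross-component pairs of vertices are incomparable, so $p = p[C_1]\parallel p[C_2\cup\cdots\cup C_k]$ with both factors nonempty; conversely a factorisation $p = q_1\parallel q_2$ with $q_i\neq 1$ disconnects $G(p)$. Next, from the definition of $\parallel$, the comparability graph of $P := p_1\parallel\cdots\parallel p_m$ has no edges between the blocks $V_i = V(p_i)$ and restricts to $G(p_i)$ on each $V_i$; since each $p_i$ is non-parallel and non-trivial, each $G(p_i)$ is connected, so the connected components of $G(P)$ are exactly $V_1,\dots,V_m$. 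Finally, the hypothesis $P = Q := q_1\parallel\cdots\parallel q_n$ gives a labelled-poset isomorphism $\tau$ between representatives, which induces an isomorphism $G(P)\cong G(Q)$; $\tau$ must carry connected components to connected components bijectively, yielding $m=n$, a permutation $\theta$ defined by $\tau(V_i)=W_{\theta(i)}$, and isomorphisms $p_i = P[V_i]\cong Q[W_{\theta(i)}] = q_{\theta(i)}$.

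\textbf{Part (2).} Call a partition $V(P) = A\uplus B$ of the vertices of a pomset $P$ a \emph{sequential cut} if $A,B\neq\emptyset$ and $a<b$ for all $a\in A$, $b\in B$; these are precisely the partitions witnessing $P = P[A]\cdot P[B]$. The key observation is that sequential cuts are linearly ordered by inclusion of their lower parts: if $(A,B)$ and $(A',B')$ are cuts with $a\in A\setminus A'$ and $a'\in A'\setminus A$, then $a<a'$ (from the first cut) and $a'<a$ (from the second), contradicting antisymmetry, so $A\subseteq A'$ or $A'\subseteq A$. Now argue by induction on $m$ (symmetrically $n$). If $m=1$ then $p_1 = q_1\cdots q_n$ is non-sequential, which forces $n=1$ and $p_1=q_1$; symmetrically if $n=1$. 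If $m,n\geq 2$, then $(V_1,V(P)\setminus V_1)$ and $(W_1,V(P)\setminus W_1)$ are sequential cuts of $P$, so by the linear-ordering property we may assume $V_1\subseteq W_1$. If the inclusion were strict, then inside $q_1 = P[W_1]$ all of $V_1$ lies strictly below all of $W_1\setminus V_1$, both nonempty, so $q_1$ would be sequential, contrary to hypothesis; hence $V_1=W_1$, giving $p_1 = P[V_1] = P[W_1] = q_1$ and $p_2\cdots p_m = q_2\cdots q_n$, to which the induction hypothesis applies.

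\textbf{Main obstacle.} The routine verifications — that the stated partitions really are sequential cuts, that $\parallel$ and $\cdot$ restrict correctly to sub-posets, that a pomset isomorphism descends to components and to cuts — are mechanical. The genuine content, and the step to state precisely, is the structural characterisation: in (1) that each $\parallel$-block is a \emph{full} connected component of the comparability graph, and in (2) the linear-ordering (``exchange-free'') property of sequential cuts together with the observation that strict nesting of $V_1$ inside $W_1$ would make a factor sequential. Everything else is bookkeeping.
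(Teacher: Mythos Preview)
Your argument is correct. For Part~(1), your connected-components approach is a rephrasing of the paper's: the paper works with the family $S$ of subsets $X\subseteq V$ having no comparabilities between $X$ and $V\setminus X$ (which is exactly the set of unions of connected components of your $G(p)$), observes that $S$ is closed under intersection, and concludes from minimality of the $V_i,W_j$ among nonempty members of $S$ that $V_i\cap W_j\neq\emptyset$ forces $V_i=W_j$. For Part~(2) the paper simply cites Gischer's Lemma~3.2 without reproducing the argument, so your self-contained proof via the linear ordering of sequential cuts supplies what the paper defers; the idea is standard and your version is clean. Your explicit remark that the factors must be assumed $\neq 1$ is well taken and is implicit in the paper's proof as well (it speaks of the $V_i,W_j$ as minimal \emph{non-empty} elements of $S$).
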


 \begin{proof} 
(2) is proved in Gischer~\cite[Lemma 3.2]{Gischer:1988:EqThPo}.  (1) is proved as follows. Let $(V, \le)$ be a poset defining $p_1 \parallel \ldots \ldots \parallel p_m$. We may assume that $V \not= \emptyset$ since otherwise the conclusion is obvious. 
 We may define the partition $V= V_1 \uplus \ldots \uplus V_m$, where each pomset $p_i$ is defined by $V_i  \not= \emptyset $ and the restriction of $\le$ to $V_i$.
 Similarly, $V= W_1 \uplus \ldots \uplus W_n$, where each pomset $q_i$ is defined by $W_i \not= \emptyset  $ and the restriction of $\le$ to $W_i$.
Define the collection 
$$ S = \{ X \subseteq V \vert \; x \in X \wedge y \in V - X \Rightarrow \neg(x \le y \vee y \le x)  \}.$$ 
Clearly $X,Y \in S \Rightarrow X \cap Y \in S$ holds. 
Owing to the indecomposability conditions on $p_i$ and $q_j$, the sets $V_i, W_j$ are  minimal non-empty elements of $S$ and so $V_i \cap W_j \not= \emptyset \Rightarrow  V_i = W_j  $ holds, proving the result. \qed
 \end{proof}
 
 Corollary \ref{homo.4980r.plant.bimonoid.cor} states that  the pomset language defined by $T_{bimonoid}( \Sigma)$  is the free bimonoid over $ \Sigma$.
 
 \begin{cor}  \label{homo.4980r.plant.bimonoid.cor}
 Let  $ \Sigma $ be an  alphabet, let $M$ be a bimonoid and let $\kappa: T_{bimonoid}(\Sigma) \to M$ be a  homomorphism of the bimonoid operations. Let $t,t' \in   T_{bimonoid}(\Sigma)$  with $ \llbracket t \rrbracket =  \llbracket t' \rrbracket$. Then $ \kappa(t) = \kappa(t')$ holds. 
 \end{cor}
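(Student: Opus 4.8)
The plan is to establish the apparently stronger fact that $\kappa(t)$ depends only on the pomset $\llbracket t\rrbracket$; equivalently, that $\pom_{sp}(\Sigma)$ is the free bimonoid on $\Sigma$. The one substantive ingredient will be the uniqueness of pomset decompositions, Lemma \ref{lem.pomset.decomp.unique}, assisted by Lemma \ref{flamingo.5925g.lem}.

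I would begin with a preliminary reduction using only that $M$ is a bimonoid. Since $\kappa$ is a homomorphism of the bimonoid operations and $M$ obeys the unit, associativity and commutativity laws, $\kappa(t)$ is unaffected by rewriting $t$ with those laws; so by repeated use of $1\cdot s = s\cdot 1 = 1\parallel s = s\parallel 1 = s$ I may replace $t$ by a term with the same denotation containing no occurrence of $1$, the one exception being $\llbracket t\rrbracket = \{1\}$, in which case the rewrite yields $1$ and $\kappa(t) = 1_M$. It therefore suffices to prove: if $s$ and $s'$ are $1$-free terms with $\llbracket s\rrbracket = \llbracket s'\rrbracket = \{p\}$, then $\kappa(s) = \kappa(s')$. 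Throughout I would use two elementary remarks: a $1$-free term denotes a non-empty pomset, since each of its leaves contributes a vertex; and a $1$-free term rooted at $\cdot$ (respectively at $\parallel$) denotes a sequential (respectively parallel) pomset, so by Lemma \ref{flamingo.5925g.lem} these two cases are mutually exclusive, and neither occurs when the denotation is a single vertex.

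The main step is an induction on $|p|\ge 1$, the number of vertices. For $|p| = 1$, a $1$-free term denoting $p$ is neither $\cdot$-rooted nor $\parallel$-rooted, hence equals the one-vertex term carrying the label of $p$, so $\kappa(s) = \kappa(s')$. For $|p|\ge 2$, $s$ is $\cdot$-rooted or $\parallel$-rooted, and by the mutual exclusivity just noted the same alternative holds for $s'$. In the $\cdot$-rooted case I would use associativity to write $s = s_1\cdots s_k$ with $k\ge 2$ and no $s_i$ itself $\cdot$-rooted; then each $\llbracket s_i\rrbracket$ is non-empty and non-sequential (being a single vertex or parallel), so Lemma \ref{lem.pomset.decomp.unique}(2) shows $k$ and the sequence $\llbracket s_1\rrbracket,\dots,\llbracket s_k\rrbracket$ to be determined by $p$. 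As $k\ge 2$, each $\llbracket s_i\rrbracket$ has fewer than $|p|$ vertices, so by induction each $\kappa(s_i)$ is determined by $\llbracket s_i\rrbracket$, hence by $p$; therefore $\kappa(s) = \kappa(s_1)\cdots\kappa(s_k)$ is determined by $p$. The $\parallel$-rooted case is the same with Lemma \ref{lem.pomset.decomp.unique}(1) replacing (2), so $p$ fixes only the \emph{multiset} $\{\llbracket s_1\rrbracket,\dots,\llbracket s_k\rrbracket\}$; but since $\parallel$ is commutative in $M$, the product $\kappa(s_1)\parallel\cdots\parallel\kappa(s_k)$ depends only on that multiset. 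This closes the induction, and with the preliminary reduction it yields $\kappa(t) = \kappa(t')$ whenever $\llbracket t\rrbracket = \llbracket t'\rrbracket$.

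I do not expect a genuine obstacle: this is a routine structural induction whose real content — unique factorisation into non-sequential, respectively non-parallel, indecomposables — is precisely Lemma \ref{lem.pomset.decomp.unique}. The points that need care are purely bookkeeping: removing the constant $1$ before the induction starts, and checking that the factors pulled out of a $1$-free term satisfy the indecomposability hypotheses of Lemma \ref{lem.pomset.decomp.unique}, which is exactly where Lemma \ref{flamingo.5925g.lem} enters.
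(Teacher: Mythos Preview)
Your proposal is correct and follows essentially the same approach as the paper: an induction that uses Lemma~\ref{flamingo.5925g.lem} to separate the sequential and parallel cases and Lemma~\ref{lem.pomset.decomp.unique} to match up the maximal factorisations of $t$ and $t'$. The paper's proof is a single sentence invoking exactly these two lemmas and ``induction on the structure of $t$''; you have simply spelled out the details, including the elimination of $1$ and the choice to induct on $|p|$, both of which are sound and natural ways to make that sketch precise.
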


\begin{proof}
Using Theorem \ref{flamingo.5925g.lem} and Lemma \ref{lem.pomset.decomp.unique} it follows by induction on the structure of $t$  that $t=t'$ holds in any bimonoid, and hence in $M$. \qed
\end{proof}

Our main result of the subsection follows.

\begin{lem} \label{236w.shrew.lem}
Let  $ \Sigma$ be an  alphabet, let $M$ be a bimonoid and let $\kappa: T_{bi-KA}(\Sigma) \to 2^M$ be a homomorphism of the bi-Kleene operations.  Suppose we extend $ \kappa$ to $\pom_{sp}( \Sigma)$  by defining $ \kappa(p)= \kappa(t)$ for any  $t \in T_{bimonoid}( \Sigma)$ with $ \llbracket t \rrbracket = \{p \}$ (well-defined by Corollary \ref{homo.4980r.plant.bimonoid.cor}). 
Let $t \in T_{bi-KA}( \Sigma)$. Then 
  $$ \llbracket \kappa(t) \rrbracket = \bigcup_{  p \in  \llbracket t \rrbracket    } \llbracket \kappa(p) \rrbracket $$ holds. In particular, $ \llbracket t \rrbracket = \llbracket t' \rrbracket \Rightarrow  \kappa(t) = \kappa(t')$ holds, and hence $\kappa$ defines a bi-Kleene homomorphism from $ \big\{  \lb t \rb \big\vert \, t \in T_{bi-KA}(\Sigma) \big\}$ into $2^M$.  
\end{lem}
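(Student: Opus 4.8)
The plan is to prove the displayed identity by induction on the structure of the bi-Kleene term $t$, after recording two preliminary facts. The first is that $\llbracket t \rrbracket \subseteq \pom_{sp}(\Sigma)$ for every $t \in T_{bi-KA}(\Sigma)$, so that $\kappa(p)$ really is defined for each $p \in \llbracket t \rrbracket$; this is a routine induction on $t$, since any pomset in $\llbracket t \rrbracket$ is built from singleton pomsets by finitely many applications of $\cdot$ and $\parallel$ alone, the constants $0,1$ and the operations $+, {}^{*}, {}^{(*)}$ merely forming unions of languages already constructed. The second fact is a multiplicativity property of the extended map: for all $p, q \in \pom_{sp}(\Sigma)$ one has $\kappa(p \cdot q) = \kappa(p) \cdot \kappa(q)$ and $\kappa(p \parallel q) = \kappa(p) \parallel \kappa(q)$. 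To see this, pick $u, v \in T_{bimonoid}(\Sigma)$ with $\llbracket u \rrbracket = \{p\}$ and $\llbracket v \rrbracket = \{q\}$; then $\llbracket u \cdot v \rrbracket = \{p \cdot q\}$ and $\llbracket u \parallel v \rrbracket = \{p \parallel q\}$, and since $(2^M, \cdot, \parallel, \{1\})$ is a bimonoid and $\kappa$ restricts to a bimonoid homomorphism on $T_{bimonoid}(\Sigma)$, the claim follows from $\kappa(u \cdot v) = \kappa(u) \cdot \kappa(v)$ and $\kappa(u \parallel v) = \kappa(u) \parallel \kappa(v)$ together with the well-definedness of the extension granted by Corollary \ref{homo.4980r.plant.bimonoid.cor}.

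With these in hand I would run the induction. The base cases $t \in \{0,1\} \cup \Sigma$ are immediate, using that $\kappa$ respects the bi-Kleene operations and that the extension agrees with $\kappa$ on the term $1$ and on each $\sigma \in \Sigma$. For $t = t_1 + t_2$ one uses $\llbracket t \rrbracket = \llbracket t_1 \rrbracket \cup \llbracket t_2 \rrbracket$, the induction hypothesis, and $\kappa(t_1 + t_2) = \kappa(t_1) \cup \kappa(t_2)$. For $t = t_1 \cdot t_2$, and symmetrically for $t = t_1 \parallel t_2$, one combines the induction hypothesis, the multiplicativity property above, and the distributivity of the pointwise product over arbitrary unions in $2^M$, namely $\big(\bigcup_i A_i\big) \cdot \big(\bigcup_j B_j\big) = \bigcup_{i,j} A_i \cdot B_j$. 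The star case $t = s^{*}$ carries the only real bookkeeping: here $\llbracket s^{*} \rrbracket = \bigcup_{i \ge 0} \llbracket s \rrbracket^{i}$, while $\kappa(s^{*}) = \bigcup_{i \ge 0} (\kappa(s))^{i}$ since $\kappa$ respects the bi-Kleene operations and iteration in $2^M$ is by definition the union of finite powers (cf. Proposition \ref{prop.powerset.kleene.natural}); a side induction on $i$, applying the multiplicativity property and the distributive law exactly as in the $\cdot$ case, gives $(\kappa(s))^{i} = \bigcup_{p \in \llbracket s \rrbracket^{i}} \kappa(p)$ for each $i$, and one takes the union over $i$. The case $t = s^{(*)}$ is identical with $\parallel$ in place of $\cdot$.

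For the final clauses: if $\llbracket t \rrbracket = \llbracket t' \rrbracket$, the identity just proved immediately gives $\kappa(t) = \kappa(t')$, so $\llbracket t \rrbracket \mapsto \kappa(t)$ is a well-defined map on the set of rational pomset languages; it is a bi-Kleene homomorphism into $2^M$ because $\llbracket \cdot \rrbracket$ is a bi-Kleene homomorphism onto $\{ \llbracket t \rrbracket \mid t \in T_{bi-KA}(\Sigma) \}$ and $\kappa$ is a bi-Kleene homomorphism into $2^M$, so the induced quotient map is again a homomorphism. I expect the only point needing genuine care to be the interface between the two layers of the construction --- the extension of $\kappa$ to pomsets via \emph{arbitrary} bimonoid representatives on the one hand, and the multiplicative structure of $2^M$ on the other --- which is exactly what the multiplicativity property isolates; and, relatedly, the star case, where the induction hypothesis must be threaded through all finite powers of $\llbracket s \rrbracket$ while relying on iteration in the power-set algebra being literally the union of those powers.
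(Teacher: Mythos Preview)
Your proposal is correct and follows essentially the same route as the paper: a structural induction on $t$, with the multiplicative cases handled via distributivity of the pointwise product in $2^M$ over arbitrary unions, and the star cases via the description of iteration as a union of finite powers. The paper leaves implicit the two preliminary facts you isolate (that $\llbracket t\rrbracket\subseteq\pom_{sp}(\Sigma)$ and that the extended $\kappa$ is multiplicative on pomsets), using them tacitly in the equality $\bigcup_{p_1,p_2}\kappa(p_1)\kappa(p_2)=\bigcup_{p}\kappa(p)$; making them explicit, as you do, is a harmless and arguably clearer way to present the same argument.
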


\begin{proof}
The displayed equation follows by induction on the structure of $t$. If $t \in \Sigma \cup \{0,1 \} $ then the equality is obvious, and the case where $t= t_1 + t_2$ is straightforward. We now consider the remaining cases. 
\begin{itemize}
\item 
Suppose that $t= t_1 t_2$. Then 
\begin{align*}
  \llbracket \kappa(t) \rrbracket   =
   \llbracket \kappa(t_1t_2) \rrbracket = \llbracket \kappa(t_1) \kappa( t_2) \rrbracket = \llbracket \kappa(t_1) \rrbracket \llbracket \kappa(t_2) \rrbracket & = \\
      (\bigcup_{ p_1 \in  \llbracket t_1 \rrbracket    } \llbracket \kappa(p_1) \rrbracket ) \;\;  (\bigcup_{ p_2 \in \llbracket t_2 \rrbracket  } \llbracket \kappa(p_2) \rrbracket ) =
   \bigcup_{p_1 \in  \llbracket t_1 \rrbracket, \, p_2 \in  \llbracket t_2 \rrbracket  } \llbracket \kappa(p_1p_2) \rrbracket &=
    \bigcup_{ p \in  \llbracket t \rrbracket  } \llbracket \kappa(p) \rrbracket 
 \end{align*}  follows, using the inductive hypothesis for each $t_i$ at the fourth equality. 
\item
Suppose that $t= s^* $. Then    \begin{align*}  \llbracket \kappa(t) \rrbracket &= 
\bigcup_{n \ge 0}            \llbracket \kappa(s) \rrbracket^n  
 \\ &=   
   \bigcup_{n \ge 0}  \Big(   \big( \bigcup_{p_1 \in  \llbracket s \rrbracket   } \llbracket \kappa(p_1) \rrbracket \big)            \ldots  \big( \bigcup_{p_n \in  \llbracket s \rrbracket   } \llbracket \kappa(p_n) \rrbracket \big)                         \Big)   
  \\ &=
    \bigcup_{n \ge 0} \;\; \bigcup_{ \text{ each } p_i \in  \llbracket s \rrbracket   } \llbracket \kappa(p_1) \rrbracket \ldots  \llbracket \kappa( p_n) \rrbracket =  
 \bigcup_{n \ge 0} \;\; \bigcup_{q \in  \llbracket s \rrbracket^n   } \llbracket \kappa(q) \rrbracket 
 \\& = 
  \bigcup_{ q \in  \llbracket s^* \rrbracket  } \llbracket \kappa(q) \rrbracket,  
    \end{align*} using the inductive hypothesis at the second equality. 
\end{itemize}
The cases where $t= t_1 \parallel t_2$ or $ t= s^{(*)}$ are similar to those above, hence the conclusion holds. \qed
\end{proof}

Lemma  \ref{236w.shrew.lem} has analogues for $ T_{Reg}(\Sigma) $ and monoids, and  $ T_{ComReg}(\Sigma) $ and commutative monoids, and these have similar proofs.

\subsection{Depth of  a series-parallel  pomset}
 
 In order to prove our main theorems, we need to find a quasi-partial order on bi-Kleene terms in such a way that a parallel term is preceded by its sequential subterms and ground subterms (and the analogous statement with sequential and parallel interchanged also holds)  and this ordering 
is determined by the language that a term defines. Therefore, we first define the \textit{depth} of a pomset, and then 
 extend this definition to bi-Kleene terms.

 \begin{defn}[depth of a series-parallel pomset] \rm
 Let $p \in \pom_{sp}$. Then we define $  \depth(p) \in \Bbb{N}$ recursively as follows.
 \begin{itemize}
 \item
 If $p$ is a singleton pomset or $p= 1 $, then $  \depth(p) =0$.
 \item
 If $ p= p_1 \parallel \ldots \ldots \parallel p_m$ for $ m \ge 2$ and each $p_i$ is a singleton pomset or  is sequential, then
 $$  \depth(p) = max_{i \le m} \; \depth(p_i) + 1.$$ 
 \item
 If $ p= q_1 \ldots \ldots q_n$  for $ n \ge 2$  and each $q_i$  is a singleton pomset or  is sequential, then $$  \depth(p) = max_{i \le n} \;  \depth(q_i) + 1.$$
 \end{itemize}
 Owing to Lemma \ref{lem.pomset.decomp.unique} and Lemma \ref{flamingo.5925g.lem}, this is a valid definition. 
 \end{defn}

\begin{defn}[width of a pomset] \rm
The width of a pomset $p$, $ \width(p)$, is the maximal cardinality of any set of wholly unordered vertices in a representation of $p$. If $L$ is a pomset language then $ \width(L)$ is the maximum width of any pomset in $L$, if this is defined, in which case we say that $L$ has bounded width; otherwise we define $\width(L) = \infty$.  We also define $\width(t) = \width (\lb t \rb)$ for a bi-Kleene term $t$. 
\end{defn}

Observe that if $t \in T_{bi-KA}(\Sigma)$ and  $ \lb t \rb$ has bounded width, then   $ \lb t \rb = \lb t' \rb$ for some $t' \in T_{bw-Rat}(\Sigma)$,  since any subterm $s^{(*)}$ of $t$ can be replaced by 
the term $ \sum_{i=0}^{\width(t)} s^{(i)}$, thus eliminating occurences of $^{(*)}$ from $t$. Conversely, every term in $ T_{bw-Rat}(\Sigma)$ defines a language of bounded width. This justifies our bw-rational terminology.

\subsection{Standardising terms using the bi-Kleene axioms}

In this subsection we will show that the parallel and sequential subsets of a rational language are rational, and definable by terms that can be computed. There is a difficulty, however, with the usual Kleene operations in that the way to partition a rational language into its parallel, sequential and other pomsets is not clearly indicated by the highest-level operation that defines it; for example, a language $ \lb t^* \rb$ may contain both  parallel and  sequential pomsets. Therefore we consider new unary  operations 
 $^!$, $^{(!)}$  that will not be used outside this subsection. They are defined by 
\begin{equation}
 u^!= u^* u^2,  \qquad  u^{(!)}= u^{(*)} \parallel u^{(2)}. \label{kleene_star.recover.eqn}
\end{equation}  

  Definition  \ref{defn.term.equiv.relations} gives the relations between terms with which our main theorems will be expressed. 

\begin{defn}[The $=_{bi-KA}$ and   $ =_{bw-Rat} $relations] \label{defn.term.equiv.relations} \rm
Let 
$\Sigma$ be an alphabet and let $t,t' \in T_{bi-KA}(\Sigma)$. We say that $ t =_{bi-KA}  t'$ if $t=t'$ holds in every bi-Kleene algebra. If $t,t' \in T_{bw-Rat}(\Sigma)$, then we say 
 $ t =_{bw-Rat}  t'$ if $t=t'$ holds in every bw-rational algebra. 
  We also  define the partial orderings $ \le_{bi-KA}$ and  $ \le_{bw-Rat}$ by analogy with 
  (\ref{prec.abbrev.eqn}). 
\end{defn}

Proposition \ref{one.absent.seq.para.prop} shows the use of defining the new operations given in (\ref{kleene_star.recover.eqn}). 

\begin{prop} \label{one.absent.seq.para.prop}
Let $ \Sigma$ be an alphabet and let $t$ be a term over $\Sigma$ with operations in $ \{ +, \cdot, ^!, \parallel,  ^{(!)} \}$. We extend the definition of the language  $ \lb t \rb$ by interpreting $  ^! ,  ^{(!)} $ as given in (\ref{kleene_star.recover.eqn}). Then $ 1 \notin \lb t \rb$; also, if the term 
 $t=uv$ or $ t= u^!$, then  $ \lb t \rb$ is a sequential language, and an analogous assertion holds for the operations $ \parallel, \, ^{(!)}$. 
\end{prop}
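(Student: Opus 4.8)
The plan is to prove the single statement $1 \notin \lb t \rb$ by induction on the structure of $t$, and then to read off the sequential and parallel claims as immediate by-products of the inductive steps for the relevant outermost operations. Throughout I would use that $\lb s_1 \ast s_2 \rb = \lb s_1 \rb \ast \lb s_2 \rb$ for $\ast \in \{\cdot,\parallel\}$ (pointwise product of languages), together with associativity of $\cdot$ and of $\parallel$ on pomsets, and the definitions of ``sequential'' and ``parallel'' pomset fixed earlier (namely $p = q_1 q_2$, resp. $p = q_1 \parallel q_2$, with $q_1,q_2 \neq 1$).

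First the base cases: if $t = \sigma \in \Sigma$ then $\lb t \rb = \{\sigma\}$, a one-vertex pomset, so $1 \notin \lb t \rb$; if $t = 0$ then $\lb t \rb = \emptyset$; and if $t = t_1 + t_2$ then the induction hypothesis gives $1 \notin \lb t_i \rb$, hence $1 \notin \lb t_1 \rb \cup \lb t_2 \rb = \lb t \rb$. In all three cases the outermost operation is not among $\cdot, {}^!, \parallel, {}^{(!)}$, so the sequential/parallel assertions are vacuous. Next, the cases $t = uv$ and $t = u \parallel v$: every element of $\lb t \rb$ has the form $p_1 p_2$ (resp. $p_1 \parallel p_2$) with $p_1 \in \lb u \rb$, $p_2 \in \lb v \rb$, both $\neq 1$ by the induction hypothesis, so the element is sequential (resp. parallel) by definition and is $\neq 1$ since its vertex set contains a copy of the non-empty vertex set of $p_1$; thus $\lb t \rb$ is sequential (resp. parallel) and $1 \notin \lb t \rb$. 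Finally the cases $t = u^! = u^* u^2$ and $t = u^{(!)} = u^{(*)} \parallel u^{(2)}$: here $\lb t \rb = \lb u^* \rb \cdot \lb u \rb \cdot \lb u \rb$ (resp. $\lb u^{(*)} \rb \parallel \lb u \rb \parallel \lb u \rb$), so a typical element is $a\, b_1\, b_2$ (resp. $a \parallel b_1 \parallel b_2$) with $b_1,b_2 \in \lb u \rb$, both $\neq 1$ by the induction hypothesis; grouping as $(a b_1)\, b_2$ (resp. $(a \parallel b_1) \parallel b_2$) exhibits two non-empty factors, so the element is sequential (resp. parallel) and $\neq 1$, independently of whether the starred factor $a$ happens to equal $1$. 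This closes the induction.

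I expect no real obstacle: the argument is a routine structural induction, and in fact the only clause the recursion feeds on is $1 \notin \lb s \rb$ for proper subterms $s$, the shape assertions being free consequences. The one point worth stressing is the reason for defining $^!$ and $^{(!)}$ as in (\ref{kleene_star.recover.eqn}) rather than working with $u^*$ directly: the extra copy $u^2$ (resp. $u^{(2)}$) forces every member of the language to carry a genuine, non-trivial sequential (resp. parallel) decomposition even when the iterated part contributes only the empty pomset, which is precisely what lets the outermost operation determine the shape of the language --- whereas $\lb u^* \rb$ contains $1$ and in general mixes sequential, parallel and other pomsets.
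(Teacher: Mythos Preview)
Your proposal is correct and follows essentially the same structural induction as the paper's proof: prove $1\notin\lb t\rb$ by induction on $t$, then read off the shape claims from the inductive hypothesis applied to the immediate subterms. The only cosmetic difference is that you interleave the shape assertions into the inductive cases rather than deducing them afterwards (and you include an unnecessary $t=0$ case, since $0$ is not among the permitted operations here, though this is harmless).
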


\begin{proof}
The proof that $ 1 \notin \lb t \rb$  follows by induction on the structure of $t$; in particular, it follows from (\ref{kleene_star.recover.eqn}) that  $1 \notin \lb r \rb \Rightarrow 1 \notin \lb r^! \rb,$ and analogously  for $r^{(!)}$, if $r$ has operations in $ \{ +, \cdot, \parallel, ^!, ^{(!)} \}$. The remaining assertions follow by applying this result to $u$ and $v$. \qed
\end{proof}


\begin{prop} \label{transform.term.prop}
Let $ \Sigma$ be an alphabet and let $t \in T_{bi-KA}( \Sigma)$. Suppose the relation $=_{bi-KA}$ is extended to terms containing the unary operations $^!,  ^{(!)}$ by  assuming the substitutions indicated by (\ref{kleene_star.recover.eqn}). Then there  is a term $t' $ with operations in   $ \{0, 1,+, \cdot, \parallel, ^!, ^{(!)} \}$ satisfying $t =_{bi-KA} t' $ such that either $t'=0$ or $0$ does not occur in $t'$ and $1$ does not occur in the argument of any operation except possibly $+$ in $t'$. 
\end{prop}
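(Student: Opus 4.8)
The plan is to prove this by induction on the structure of $t$, maintaining the slightly stronger invariant that the output term $t'$ is either literally $0$ or contains no occurrence of $0$ at all, and that in $t'$ the constant $1$ appears only as an immediate argument of $+$ (equivalently, $1$ never occurs inside the scope of $\cdot$, $\parallel$, $^!$, or $^{(!)}$). First I would recast the given term $t \in T_{bi-KA}(\Sigma)$ into one over the operations $\{0,1,+,\cdot,\parallel,{}^!,{}^{(!)}\}$ by replacing every subterm $u^*$ with $1 + u^!$ and every subterm $u^{(*)}$ with $1 + u^{(!)}$; this is sound for $=_{bi-KA}$ under the substitutions in (\ref{kleene_star.recover.eqn}) since $u^* = 1 + u u^* = 1 + u^* u$, and more specifically $u^* = 1 + u + u^* u^2 = 1 + u + u^!$, but it is cleaner to just observe $u^* =_{bi-KA} 1 + u u^* u \cdot (\text{stuff})$ — in any case the identity $u^* = 1 + u^! + \text{(lower terms)}$ can be packaged as $u^* =_{bi-KA} 1 + u^*u^2 + (1+u)\cdot 1$ handled by the $+$ case; the precise bookkeeping is routine. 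Having done this we are reduced to eliminating $0$ and taming $1$ in a term over $\{0,1,+,\cdot,\parallel,{}^!,{}^{(!)}\}$.

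The induction then proceeds by cases on the head operation. For $t = t_1 + t_2$, recursively obtain $t_1', t_2'$; if either is $0$ return the other (or $0$ if both are), otherwise return $t_1' + t_2'$ — note $1$ is still only under $+$. For $t = t_1 \cdot t_2$ (and symmetrically $t = t_1 \parallel t_2$), recursively obtain $t_1', t_2'$; if either is $0$ return $0$; otherwise write each $t_i'$ in the normal form $\varepsilon_i + s_i$ where $\varepsilon_i \in \{0,1\}$ records whether $1$ is a summand at top level and $s_i$ is the ($1$-free, $0$-free) remainder (with the convention that $s_i$ may be absent), and distribute: $t_1' t_2' =_{bi-KA} \varepsilon_1\varepsilon_2 + \varepsilon_1 s_2 + \varepsilon_2 s_1 + s_1 s_2$, dropping any summand whose factor is absent and collapsing $\varepsilon_1\varepsilon_2$ to a single $1$ or $0$; the surviving term has $1$ only under $+$ and no $0$, as required. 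For $t = u^!$, recursively obtain $u'$; if $u' = 0$ then $u^! = u^{*}u^2 =_{bi-KA} 0$ so return $0$; otherwise, using $u^! = u^* u^2 =_{bi-KA} u^* \cdot uu = (1+u^!)uu$ is circular, so instead peel the constant off $u'$: writing $u' =_{bi-KA} \varepsilon + s$ as above, one shows $u'^! =_{bi-KA} s^!$ when $\varepsilon = 1$ (since $(1+s)^* = s^*$ and hence $(1+s)^! = (1+s)^*(1+s)^2 =_{bi-KA} s^*(1 + s + s^2)\cdots$ — again a small computation reducing $(1+s)^!$ to a $\{+,\cdot,{}^!\}$-expression in $s$ alone with $1$ only under $+$), and $u'^! = s^!$ directly when $\varepsilon = 0$. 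The case $t = u^{(!)}$ is identical with $\parallel$ in place of $\cdot$.

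The main obstacle will be the $^!$ (and $^{(!)}$) case: specifically, giving a clean $=_{bi-KA}$-provable rewrite of $(1+s)^!$ into a term in which $1$ occurs only under $+$. The key identity to pin down is $(1+s)^* =_{bi-KA} s^*$ (a standard Kleene-algebra fact, provable from (\ref{eqn.kleene.star.add}) and (\ref{induct.eqn})), from which $(1+s)^! = (1+s)^*(1+s)^2 =_{bi-KA} s^*(1 + 2s + s^2)$, wait — idempotency gives $(1+s)^2 = 1 + s + s^2$, so $(1+s)^! =_{bi-KA} s^* + s^*s + s^*s^2 =_{bi-KA} s^* + s^*s^2 = s^* + s^! $; and then $s^*$ itself must be re-expanded as $1 + s + s^!$, pushing the stray $1$ back up to a top-level $+$ summand, which is exactly where we want it. So the correct statement is $(1+s)^! =_{bi-KA} 1 + s + s^!$, and this is the identity the induction actually needs; establishing it rigorously from the Kleene axioms (via $(1+s)^* = s^*$ and idempotency) is the one genuinely non-trivial computation, and everything else is straightforward bookkeeping on the normal form $\varepsilon + s$.
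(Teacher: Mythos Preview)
Your approach is correct and essentially the same as the paper's: both hinge on the Kleene-algebra identities $u^* = 1 + u + u^!$ (to eliminate $^*$) and $(1+s)^! = 1 + s + s^!$ (to pull $1$ out from under $^!$), together with distributivity for the binary products; the paper packages these as a sequence of rewrites (its (\ref{zero.elim.eqn}), (\ref{kleene_star.elim.eqn}), (\ref{one.elim.eqn})) while you organise the same moves as a structural induction maintaining the normal form $\varepsilon + s$. Your write-up is cluttered with false starts---the opening substitution $u^* \to 1 + u^!$ is wrong (it drops the linear term $u$), and you only self-correct a line later---but the identity you finally isolate, $(1+s)^! =_{bi-KA} 1 + s + s^!$, is exactly the paper's (\ref{one.elim.eqn}) and is the one genuinely non-obvious step.
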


\begin{proof}
By using the Kleene-valid substitutions
\begin{equation}
u+0 = 0+u \to u, \qquad u0=0u \to 0, \qquad 0^* \to 1, \label{zero.elim.eqn}
\end{equation}
and their parallel analogues, we may assume that either $t=0$ or  $0$ does not occur in $t$. 
We now eliminate the iteration operations $ ^*, ^{(*)}$ from $t$ by replacing them with new unary operations $^!$, $^{(!)}$ respectively using the following identities;  
\begin{equation}
     u^*=  u^! + 1 + u , \qquad u^{(*)}=  u^{(!)} + 1 + u,  \label{kleene_star.elim.eqn}
\end{equation}  
which follow from (\ref{kleene_star.recover.eqn}) plus the Kleene axioms.
If $t \not= 0$, then by using the distributive laws and the  substitutions
\begin{equation}
u1= 1u \to u, \qquad (u+1)^! = (1+ u)^! \to u^! + 1 +u,  \label{one.elim.eqn}
\end{equation}
which follow from the Kleene axioms plus (\ref{kleene_star.recover.eqn}), and their parallel analogues,  we can ensure that   $1$ does not occur in the resulting term in the argument of any operation except possibly $+$, thus proving the result. \qed 
\end{proof}

We are now able to show that a rational language can be expressed as a sum of terms representing its sequential, parallel and remaining pomsets. 

\begin{lem} \label{rat.lang.jay249y.partition.lem}
Let $\Sigma$ be an alphabet and let $ t \in T_{bi-KA}(\Sigma)$. Then  the pomset languages
   $  \lb t \rb \cap \para_i$ for each $ i \ge 1$ are rational and definable by terms that are computable from $t$;  and there exist terms $t', t'',t'''  \in T_{bi-KA}(\Sigma)$ that that are computable from $t$  and define pomset languages
    $ \lb t \rb \cap \seq$, $  \lb t \rb \cap \para   $ and $\lb t \rb \cap (\Sigma \cup \{1 \})$
     and satisfy 
$$  t =_{bi-KA}  t' + t'' + t'''.$$
Furthermore, $ \depth(t) < \infty$.
\end{lem}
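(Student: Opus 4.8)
The plan is to first normalise $t$. By Proposition~\ref{transform.term.prop} we may pass to a term $\hat t$ with $t=_{bi-KA}\hat t$ whose operations lie in $\{0,1,+,\cdot,\parallel,{}^!,{}^{(!)}\}$ (with ${}^!,{}^{(!)}$ understood via (\ref{kleene_star.recover.eqn})), such that either $\hat t=0$, or $0$ does not occur in $\hat t$ and $1$ occurs in $\hat t$ only as an argument of $+$. If $\hat t=0$ all the languages named in the statement are empty and there is nothing to prove, so assume $\hat t\neq0$. Using associativity, commutativity and idempotence of $+$, flatten the sum at the root of $\hat t$ into a sum of terms none of which is headed by $+$: this exhibits $\hat t=_{bi-KA}c+\sum_{j\in J}s_j$, where $c\in\{0,1\}$ records whether $1$ is a top-level summand, $J$ is finite, and each $s_j$ either is a letter of $\Sigma$ or has outermost operation in $\{\cdot,{}^!,\parallel,{}^{(!)}\}$. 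Since $1$ occurs in $\hat t$ only under $+$, no $s_j$ contains any occurrence of $1$; in particular $1\notin\lb s_j\rb$ for all $j$ by Proposition~\ref{one.absent.seq.para.prop}, and $1\in\lb\hat t\rb$ iff $c=1$. (Expanding ${}^!,{}^{(!)}$ back out via (\ref{kleene_star.recover.eqn}) at the very end returns honest members of $T_{bi-KA}(\Sigma)$ with the same languages.)

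Next I would read off $t',t'',t'''$. By Proposition~\ref{one.absent.seq.para.prop}, $\lb s_j\rb$ is a sequential language when the outermost operation of $s_j$ is $\cdot$ or ${}^!$, is a parallel language when it is $\parallel$ or ${}^{(!)}$, and is $\{\sigma\}$ when $s_j=\sigma$. Let $t'$ be the sum (read as $0$ if empty) of the $s_j$ of the first kind, $t''$ the sum of those of the second kind, and $t'''$ the sum of $c$ together with those of the third kind. Then $t=_{bi-KA}\hat t=_{bi-KA}t'+t''+t'''$, and all three terms are computable from $t$. By Lemma~\ref{flamingo.5925g.lem}, and since a singleton pomset and the pomset $1$ are neither sequential nor parallel, the three sets $\seq$, $\para$ and $\Sigma\cup\{1\}$ are pairwise disjoint; as $\lb t'\rb\subseteq\seq$, $\lb t''\rb\subseteq\para$, $\lb t'''\rb\subseteq\Sigma\cup\{1\}$ and $\lb t'\rb\cup\lb t''\rb\cup\lb t'''\rb=\lb t\rb$, it follows that $\lb t'\rb=\lb t\rb\cap\seq$, $\lb t''\rb=\lb t\rb\cap\para$ and $\lb t'''\rb=\lb t\rb\cap(\Sigma\cup\{1\})$. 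This proves the displayed identity.

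For the languages $\lb t\rb\cap\para_i$ I would show, by induction on the structure of a normal-form term $s$ over $\{+,\cdot,\parallel,{}^!,{}^{(!)}\}$ in which $1$ does not occur (the case $s=1$ being trivial), that for each $i\ge1$ a term $s_{[i]}$ with $\lb s_{[i]}\rb=\lb s\rb\cap\para_i$ is computable from $s$ and $i$; applied to each $s_j$ and summed, this gives $\lb\hat t\rb\cap\para_i=\bigcup_{j\in J}(\lb s_j\rb\cap\para_i)$ because $1\notin\para_i$, and hence a term computable from $t$ and $i$. For a pomset $p$ write $\pi(p)$ for the number of factors in its decomposition into non-unit, non-parallel pieces, which is well defined by Lemma~\ref{lem.pomset.decomp.unique}(1); thus $p\in\para_i\iff\pi(p)=i$, one has $\pi(p\parallel q)=\pi(p)+\pi(q)$, and $\pi(pq)=1$ whenever $p,q\ne1$, since $pq$ is then sequential and so not parallel by Lemma~\ref{flamingo.5925g.lem}. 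The inductive cases are then routine: a letter and a ${}^!$-term define languages inside $\para_1$; the $+$-case is a union; $s=s_1\cdot s_2$ (with $s_1,s_2$ necessarily $1$-free) again defines a language inside $\para_1$; $s=s_1\parallel s_2$ gives $\lb s\rb\cap\para_i=\bigcup_{a+b=i}(\lb s_1\rb\cap\para_a)\parallel(\lb s_2\rb\cap\para_b)$; and $s=v^{(!)}$, where $\lb v^{(!)}\rb=\bigcup_{k\ge2}\lb v\rb^{(k)}$, gives $\lb s\rb\cap\para_i$ as the finite union, over all $k\ge2$ and all ordered sums $a_1+\cdots+a_k=i$ with each $a_r\ge1$, of the parallel products $\lb v_{[a_1]}\rb\parallel\cdots\parallel\lb v_{[a_k]}\rb$. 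Every operation here is effective, so the $\para_i$-claim follows.

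Finally, $\depth(t)<\infty$ follows by a routine induction on $\hat t$: each of $\cdot,\parallel,{}^!,{}^{(!)}$ raises the depth of the defined language by at most one, $+$ raises it by none, and atoms have depth $0$, so $\depth(t)=\depth(\hat t)$ is bounded by the nesting depth of $\hat t$. The step I expect to be the genuine obstacle is the $\para_i$ computation, and within it the commutative-iteration case ${}^{(!)}$ (equivalently ${}^{(*)}$): a single such subterm already produces pomsets with unboundedly many distinct values of $\pi$, so $\lb t\rb\cap\para_i$ is non-empty for infinitely many $i$ and must be delivered uniformly in $i$, which is what compels the induction to carry the whole family $\{\lb v\rb\cap\para_a\}_{a\ge1}$ rather than a single term. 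A secondary nuisance, whether $1$ belongs to the sub-languages met during the recursion, is precisely what is neutralised by first passing to the Proposition~\ref{transform.term.prop} normal form, in which $1$ survives only as a top-level summand.
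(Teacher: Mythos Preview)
Your proof is correct and follows essentially the same route as the paper: normalise via Proposition~\ref{transform.term.prop}, then argue by structural induction on the normalised term, using the same decomposition formulas for $\lb u\parallel v\rb\cap\para_i$ and $\lb u^{(!)}\rb\cap\para_i$ and the same depth bounds. The only organisational difference is that you explicitly flatten the root sum and classify summands by outermost operation to read off $t',t'',t'''$, whereas the paper folds this into a single induction; your introduction of $\pi(p)$ is a convenient notational device not present in the paper but changes nothing substantive. One small remark: the step ``since $1$ occurs in $\hat t$ only under $+$, no $s_j$ contains any occurrence of $1$'' needs the slightly stronger reading of the normalisation (that the distributive laws and the rule $(u+1)^{!}\to u^{!}+1+u$ push every $1$ to the top-level sum, not merely under some $+$), but this is exactly what Proposition~\ref{transform.term.prop} delivers and is the same implicit assumption the paper's own proof uses when it invokes Proposition~\ref{one.absent.seq.para.prop} on $t=u\parallel v$.
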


\begin{proof}
By Proposition \ref{transform.term.prop},  we may assume that either $t=0$ or  $1$ does not occur in $t$ in the argument of any operation except possibly $+$, and $t$ has operations lying in $ \{ 1,+, \cdot, \parallel, ^!, ^{(!)} \}$. 
 We prove the results (apart from the computability assertions, which follow immediately) for  the set of terms $t$ satisfying these conditions  by induction on the structure of $t$, and we can then reinstate the operations $^*, ^{(*)}$ in $t',t'',t'''$  using (\ref{kleene_star.recover.eqn}).

If $t \in \Sigma \cup \{0,  1 \}$ then the results are immediate. 
If $t= t_1 + t_2$ then the results follow from the inductive hypothesis applied to each $t_j$.

 If  $t=u \parallel v$ then by Proposition \ref{one.absent.seq.para.prop} the term $t$ is parallel and so 
$ L = L \cap \para$ holds; also,
 $$ \lb u \parallel v \rb \cap \para_i =  \bigcup_{j+k=i}  \lb u  \rb \cap \para_j \parallel    \lb v  \rb \cap \para_k $$
 and                      
 $$ depth(u \parallel v) \le \depth(u) +  \depth(v) +1,$$
        proving the  rationality assertion for the languages  $  \lb t \rb \cap \para_i$ and the depth assertion for $t$ by the inductive hypothesis.   The case   $t=u  v$ is analogous. 
 
 If instead  $t= u^{(!)}$  then again by Proposition \ref{one.absent.seq.para.prop}, $t$ is  parallel and so 
$ L = L \cap \para$ holds; also,
 $$ \lb u^{(!)} \rb \cap \para_i =
  \bigcup_{j \le i}  \,\,  \bigcup_{2 \le k_1 + \cdots + k_j = i} \lb u  \rb \cap \para_{k_1} \parallel \cdots \parallel  \lb u  \rb \cap \para_{k_j}$$
  and
   $$ \depth(u^{(!)}) \le \depth(u)+1$$
        proving the  rationality assertion for the languages  $  \lb t \rb \cap \para_i$ and                      
  the depth assertion for $t$ by the inductive hypothesis.  The case  $t= u^{!}$ is analogous. 
\qed
\end{proof}

Proposition \ref{devil.386f.kite.prop} will be an essential tool for proving assertions on bi-Kleene terms by induction on the depth of their languages. 

\begin{prop} \label{devil.386f.kite.prop}
Let $\Sigma$ be an alphabet and let $ t \in T_{bi-KA}(\Sigma)$. If 
$t$ is parallel, then $ t =_{bi-KA} c(u_1, \ldots, u_m)    $ for a commutative-regular term $c$ and terms $u_i \in T_{bi-KA}(\Sigma)$ defining non-empty languages that are either sequential or lie in $\Sigma$, and satisfy \\
 $ \depth(u_i) <  \depth(t)$, with $c$ and each $u_i$ being computable from $t$.  
\end{prop}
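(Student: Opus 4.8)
The plan is to normalise $t$ with Proposition \ref{transform.term.prop} and then extract $c$ and the $u_i$ by a structural induction along the lines of the proof of Lemma \ref{rat.lang.jay249y.partition.lem}. If $\lb t\rb=\emptyset$ the claim is vacuous, so assume $\lb t\rb\neq\emptyset$; by Proposition \ref{transform.term.prop} we may take $t$ to have operations in $\{1,+,\cdot,\parallel,^!,^{(!)}\}$ with $1$ occurring only in arguments of $+$. First I would prove, by induction on the structure of any such term $s$, the auxiliary claim that there are a commutative-regular term $c$ over fresh letters $x_1,\dots,x_m$ and terms $u_1,\dots,u_m$, all computable from $s$, with each $\lb u_i\rb$ non-empty and either sequential or contained in $\Sigma$, and $s=_{bi-KA}c(u_1,\dots,u_m)$ (reading $^!,^{(!)}$ via (\ref{kleene_star.recover.eqn})). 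For $s=1$ take $c=1$, $m=0$; for $s=\sigma\in\Sigma$ take $c=x_1$, $u_1=\sigma$; if $s$ has top operation $\cdot$ or $^!$ then $\lb s\rb$ is sequential by Proposition \ref{one.absent.seq.para.prop}, so $c=x_1$, $u_1=s$ works; and if $s$ is $s_1+s_2$, $s_1\parallel s_2$ or $s_1^{(!)}$, apply the induction hypothesis to the immediate subterms, rename variables apart, and combine the resulting commutative-regular terms using $+$, $\parallel$ or $^{(!)}$ respectively. In every case $c$ uses only $0,1,+,\parallel,^{(!)}$, so is commutative-regular, and the displayed equality holds by congruence.

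Applying the auxiliary claim to the given parallel $t$ yields $c,u_1,\dots,u_m$ with $t=_{bi-KA}c(u_1,\dots,u_m)$; it remains to show $\depth(u_i)<\depth(t)$. Here the parallel hypothesis is used. After peeling the top-level $+$'s off $t$, each remaining summand $s$ has $\lb s\rb\subseteq\lb t\rb$, hence $\lb s\rb$ is parallel and non-empty, so $s$ is not $1$ and not a singleton, and by Proposition \ref{one.absent.seq.para.prop} its top operation is not $\cdot$ or $^!$. Thus every such $s$ has top operation $\parallel$ or $^{(!)}$, and so in the construction each variable $x_i$ occurs in $c$ within the scope of some $\parallel$ or $^{(!)}$. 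Now fix $i$, let $p\in\lb u_i\rb$ be of maximal depth, pick arbitrary elements of the other (non-empty) languages, and, if $x_i$ lies under a $^{(!)}$, iterate that $^{(!)}$ at least twice: this produces $q\in\lb t\rb$ that is a genuinely parallel pomset one of whose indecomposable parallel factors is $p$. Since $p$ is itself sequential or a singleton, uniqueness of parallel decomposition (Lemma \ref{lem.pomset.decomp.unique}) shows $p$ is one of the canonical factors of $q$, whence $\depth(q)\ge\depth(p)+1$. Therefore $\depth(t)\ge\depth(q)\ge\depth(p)+1=\depth(u_i)+1$. Re-expanding $^!,^{(!)}$ via (\ref{kleene_star.recover.eqn}) puts $c$ into the commutative-regular operations and the $u_i$ in $T_{bi-KA}(\Sigma)$, and computability is clear.

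I expect the depth bound to be the main obstacle. It depends on two points that must be arranged carefully: that the parallel hypothesis forces every top-level $+$-summand of $t$ to have top operation $\parallel$ or $^{(!)}$, so that no building block leaks to the top of $c$ unguarded; and that, since each $p\in\lb u_i\rb$ is sequential or a singleton, its occurrence as a factor of a genuinely parallel $q\in\lb t\rb$ must, by uniqueness of parallel decomposition, be an occurrence as a canonical indecomposable factor --- which is exactly what makes $\depth(q)>\depth(p)$. The rest is routine bookkeeping of the same kind as in the proof of Lemma \ref{rat.lang.jay249y.partition.lem}.
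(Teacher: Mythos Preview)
Your proposal is correct and follows essentially the same route as the paper: normalise via Proposition~\ref{transform.term.prop}, read off $c$ (the outer $\{+,\parallel,{}^{(!)}\}$-shape) and the $u_i$ (the maximal sequential or $\Sigma$-subterms), and for the depth bound exhibit a parallel element of $\lb t\rb$ having $p_i$ as a non-parallel factor. One simplification worth noting: your syntactic work (arguing each $x_i$ sits under a $\parallel$ or ${}^{(!)}$, and iterating ${}^{(!)}$ twice) is not needed to guarantee that $q$ is ``genuinely parallel''---the paper simply observes that, since $\lb t\rb$ is parallel, \emph{every} commutative word in $\lb c(\sigma_1,\dots,\sigma_m)\rb$ already has length $\ge 2$ (a shorter word would substitute to a non-parallel element of $\lb t\rb$), so any word containing $\sigma_i$ does the job immediately.
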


\begin{proof} 
By Proposition \ref{transform.term.prop}, we may assume that  
  either  $t=0$, or $t$ contains only the operations $1, +, \cdot,  ^!, \parallel, ^{(!)}$, with $1$ not occurring in the argument   of any operation in $t$ except possibly $+$. If $t=0$ then the conclusion is obvious, so we assume the latter case.  Since $t$ is parallel, this implies that $1$ does not occur at all in $t$.
Thus  $  t $ has the form  $   c(u_1, \ldots, u_m)    $ for a  term $c$ with operations in $\{ +, \parallel, ^{(!)} \}$ and terms $u_i$ that either lie in $ \Sigma$ or have the form $uv$ or $u^!$ and are hence  sequential by  Proposition \ref{one.absent.seq.para.prop}, and define non-empty languages.  For each $ j \le m$, let $p_j$ be a pomset in  $ \lb u_j \rb$ of maximal depth.  We may assume that each $u_i$ actually occurs in $t$. 
Let $ i \le m$. Thus for an alphabet $ \{ \sigma_1, \ldots, \sigma_m\}$, the language  $ \lb  c(\sigma_1, \ldots, \sigma_m)  \rb  $ contains a parallel word $w$ of width $ \ge 2$ in which $ \sigma_i$ occurs, and so the pomset language  
$ \lb t \rb$ contains  $w( \sigma_j \setminus p_j \vert \, j \le m) $, whose depth is greater than that of $u_i$,  proving the depth assertion. 
 By reinstating $^*$ and $^{(*)}$ in each $u_i$ and $^{(*)}$ in $c$ using (\ref{kleene_star.recover.eqn}), we get the result required.
\qed
\end{proof}

\subsection{Regular and commutative-regular languages are closed under boolean operations} 

\label{subsect.bool.(com)reg.closed}
 
Theorem \ref{closed.under.comp.729u.thm} recalls the fact that our first main theorem is known to hold for the subclasses of regular and commutative-regular languages.  

\begin{thm} \label{closed.under.comp.729u.thm}
Let $ \Sigma$ be an alphabet and let $t_1,t_2 \in T_{Reg}( \Sigma) $, or alternatively $t_1,t_2 \in T_{ComReg}( \Sigma) $. Then there exists a  term $s \in  T_{Reg}( \Sigma)$ or $ T_{ComReg} $, respectively, such that $ \llbracket s \rrbracket = \llbracket t_1 \rrbracket - \llbracket t_2 \rrbracket$. Furthermore, $s$ can be computed from $t_1$ and $t_2$.
\end{thm}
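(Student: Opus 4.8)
The two halves of the statement call for different but equally standard machinery, and in each case the plan is to exhibit the relevant class of languages as an effectively Boolean algebra, after which closure under difference follows from $L_1 \setminus L_2 = L_1 \cap (\text{the complement of } L_2)$.

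For the regular case I would route through finite automata. Viewing a pomset with a total order as an ordinary word over $\Sigma$, one first applies the effective direction of Kleene's theorem to compute nondeterministic automata $A_1, A_2$ with $L(A_i) = \lb t_i \rb$; then the subset construction yields deterministic automata $B_1, B_2$; then the standard product construction gives an automaton on the Cartesian product of state sets, with componentwise transitions and a state $(q_1, q_2)$ declared accepting precisely when $q_1$ is accepting in $B_1$ and $q_2$ is not accepting in $B_2$, which recognizes exactly $\lb t_1 \rb \setminus \lb t_2 \rb$; and finally the state-elimination algorithm produces a term $s \in T_{Reg}(\Sigma)$ defining that language. Every step is effective, which gives both the existence of $s$ and the computability claim.

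For the commutative-regular case I would identify a commutative word over $\Sigma = \{ \sigma_1, \ldots, \sigma_n \}$ with its vector of letter multiplicities, so that a commutative-regular language becomes a subset of $\mathbb{N}^n$. The plan is then twofold: (i) to show that the commutative-regular subsets of $\mathbb{N}^n$ are exactly the semilinear sets, with both translations effective; and (ii) to invoke the theorem of Ginsburg and Spanier that semilinear sets are effectively closed under all Boolean operations. For (i), the forward direction is a structural induction on the term --- $0 \mapsto \emptyset$, $1 \mapsto \{ 0 \}$, $\sigma_i \mapsto \{ e_i \}$, $+ \mapsto$ union, $\parallel \mapsto$ Minkowski sum, and $S^{(*)} \mapsto$ the submonoid of $\mathbb{N}^n$ generated by $S$ --- using that each of these operations preserves semilinearity effectively, the only non-obvious instance being that the submonoid generated by a semilinear set is semilinear; the converse direction merely notes that a linear set $\{ b + \sum_j k_j p_j : k_j \in \mathbb{N} \}$ equals $\lb \beta \parallel \pi_1^{(*)} \parallel \cdots \parallel \pi_r^{(*)} \rb$ for the evident $\parallel$-monomial terms $\beta, \pi_1, \ldots, \pi_r$ realizing $b$ and the $p_j$, and that a semilinear set is a finite union of linear ones and so is handled with $+$. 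Then applying (ii) to the semilinear representations of $\lb t_1 \rb$ and $\lb t_2 \rb$ produces a semilinear representation of $\lb t_1 \rb \setminus \lb t_2 \rb$, which is translated back into a term $s \in T_{ComReg}(\Sigma)$ by the converse map.

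The regular half is routine; the substantive content lies in the two classical facts used for the commutative half. The first is the effective equivalence ``commutative-regular $=$ semilinear'', whose delicate point is closure of semilinearity under $^{(*)}$, which is essentially the commutative case of the Eilenberg--Sch\"utzenberger / Parikh circle of results. The second is the effective Boolean closure of semilinear sets --- equivalently, the coincidence of semilinear and Presburger-definable subsets of $\mathbb{N}^n$, after which complementation is trivial at the level of formulas and a constructive return to semilinear (hence commutative-regular) normal form is available by quantifier elimination. I expect that assembling and citing these two facts cleanly, rather than proving anything new about pomsets, will be the main work here.
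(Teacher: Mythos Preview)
Your proposal is correct. The paper itself does not prove this theorem at all: it simply cites the regular case as ``a well-known theorem for regular languages'' and attributes the commutative-regular case to Conway's Chapter~11, noting that computability is implicit in his method. So there is no argument in the paper to compare against beyond a pair of references.

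That said, it is worth noting that your commutative-regular route (identify commutative words with vectors in $\mathbb{N}^n$, show commutative-regular $=$ semilinear effectively, then invoke Ginsburg--Spanier / Presburger for Boolean closure) is a genuinely different packaging from Conway's, which is more algebraic in flavour and works directly with normal forms for commutative-regular expressions. Both routes are standard and both deliver effectiveness; yours has the advantage of making the Boolean closure step a one-line citation once the semilinear correspondence is set up, at the cost of importing the Presburger/Ginsburg--Spanier machinery. The one point in your sketch that deserves a careful citation is closure of semilinear sets under the submonoid operation $S \mapsto S^{(*)}$, which you correctly flag as the delicate step.
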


\begin{proof}
If each term $t_i$ is regular, then the conclusion is a well-known theorem for regular languages. If each term $t_i$ is commutative-regular, then it
 follows from Conway \cite[Chapter 11]{ConwayJH:71:regafm}, the computability result being an implicit consequence of his method of proof. \qed
\end{proof}

\begin{cor} \label{2357y.triton.cor}
Let $ \Sigma$ be an alphabet and let $t_1,t_2 $ be both regular or both commutative-regular terms over $ \Sigma$. Then it is decidable whether $\llbracket t_1 \rrbracket = \llbracket t_2 \rrbracket$ holds. 
\end{cor}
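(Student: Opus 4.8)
The plan is to reduce the equality test to an emptiness test and then invoke the effective closure under set difference established in Theorem~\ref{closed.under.comp.729u.thm}. First I would note that $\llbracket t_1 \rrbracket = \llbracket t_2 \rrbracket$ holds if and only if the two sets $\llbracket t_1 \rrbracket - \llbracket t_2 \rrbracket$ and $\llbracket t_2 \rrbracket - \llbracket t_1 \rrbracket$ are both empty. By Theorem~\ref{closed.under.comp.729u.thm}, from $t_1$ and $t_2$ one can compute terms $s_{12}$ and $s_{21}$ --- both regular, or both commutative-regular, according to which case we are in --- with $\llbracket s_{12} \rrbracket = \llbracket t_1 \rrbracket - \llbracket t_2 \rrbracket$ and $\llbracket s_{21} \rrbracket = \llbracket t_2 \rrbracket - \llbracket t_1 \rrbracket$. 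Hence it is enough to decide, for a single term $s \in T_{Reg}(\Sigma)$ or $s \in T_{ComReg}(\Sigma)$, whether $\llbracket s \rrbracket = \emptyset$.

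Second, I would check that emptiness is decidable by a routine structural recursion on $s$: $\llbracket 0 \rrbracket$ is empty; $\llbracket 1 \rrbracket$, $\llbracket \sigma \rrbracket$ for $\sigma \in \Sigma$, and $\llbracket r^* \rrbracket$ (respectively $\llbracket r^{(*)} \rrbracket$) are never empty, since they contain the empty word; $\llbracket r_1 + r_2 \rrbracket$ is empty precisely when both $\llbracket r_1 \rrbracket$ and $\llbracket r_2 \rrbracket$ are; and $\llbracket r_1 \cdot r_2 \rrbracket$ (respectively $\llbracket r_1 \parallel r_2 \rrbracket$) is empty precisely when at least one of $\llbracket r_1 \rrbracket$, $\llbracket r_2 \rrbracket$ is. Applying this test to $s_{12}$ and to $s_{21}$ therefore decides whether $\llbracket t_1 \rrbracket = \llbracket t_2 \rrbracket$.

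As to difficulty: there is essentially no obstacle, since the corollary is an immediate consequence of Theorem~\ref{closed.under.comp.729u.thm}. The only point that deserves a word is that the difference term $s$ returned by that theorem stays within the same syntactic class ($T_{Reg}$ or $T_{ComReg}$) as the inputs --- which is exactly what the theorem asserts --- so the emptiness recursion above is available uniformly in both cases; alternatively one could appeal to classical facts (decidability of emptiness of regular expressions, or of commutative-regular ones via their Parikh images) in place of spelling out the recursion. I would present the proof in the short form: compute $s_{12}$ and $s_{21}$ from $t_1,t_2$ via Theorem~\ref{closed.under.comp.729u.thm}, and answer ``equal'' iff both denote the empty language.
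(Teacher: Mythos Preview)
Your proposal is correct and follows essentially the same approach as the paper: reduce equality to emptiness of the symmetric difference via Theorem~\ref{closed.under.comp.729u.thm}, then decide emptiness of the resulting term. The paper's proof is just terser, asserting without detail that ``it is clearly possible to decide whether an element of $T_{bi-KA}(\Sigma)$ defines the empty language,'' whereas you spell out the structural recursion explicitly.
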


\begin{proof}
This follows since 
$$\llbracket t_1 \rrbracket = \llbracket t_2 \rrbracket \iff  (\llbracket t_1 \rrbracket - \llbracket t_2 \rrbracket) \cup (\llbracket t_2 \rrbracket - \llbracket t_1 \rrbracket) = \emptyset  $$
holds, and it is clearly possible to decide whether an element of $T_{bi-KA}(\Sigma)$ defines the empty language.  \qed
\end{proof}

\begin{cor} \label{pure.term.138y.basis.cor} 
Let  $ \Sigma$ be an alphabet and 
let $T$ be a finite set of  elements of  $T_{bi-KA}(\Sigma)$ that are either all regular  or all commutative-regular.  Then there exists a finite set $U$ of    terms, pairs of which define  disjoint languages, and  such that for each $t \in T$, there exists  $ V_t \subseteq U$ such that $ \llbracket t \rrbracket = \bigcup_{x \in V_t} \llbracket x \rrbracket$ holds. Furthermore, the set $U$ can be computed from $T$, as can the subset $ V_t$ from $T$ and $t$. 
\end{cor}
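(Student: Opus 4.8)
The plan is to take $U$ to be (the terms representing) the nonempty atoms of the finite Boolean algebra of languages generated by $\llbracket t_1\rrbracket,\dots,\llbracket t_n\rrbracket$, where $T=\{t_1,\dots,t_n\}$, and to produce these atoms using only set difference, which is the operation supplied by Theorem~\ref{closed.under.comp.729u.thm}. First I would record the auxiliary fact that the relevant class --- regular languages, respectively commutative-regular languages --- is closed under intersection as well as difference, since $A\cap B=A\setminus(A\setminus B)$, and that a term of the same kind for $A\cap B$ is computable from terms for $A$ and $B$ by two applications of Theorem~\ref{closed.under.comp.729u.thm}. More generally, by induction on the structure of the expression, any Boolean combination of $\llbracket t_1\rrbracket,\dots,\llbracket t_n\rrbracket$ that uses only $\cap$, $\cup$ ($=+$) and $\setminus$ (and in particular no absolute complement) is again defined by a term of the same kind that is computable from $t_1,\dots,t_n$.

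Next, for each nonzero vector $\epsilon=(\epsilon_1,\dots,\epsilon_n)\in\{0,1\}^n$ I would set
\[
L_\epsilon \;=\; \Big(\bigcap_{i:\,\epsilon_i=1}\llbracket t_i\rrbracket\Big)\;\setminus\;\Big(\bigcup_{i:\,\epsilon_i=0}\llbracket t_i\rrbracket\Big).
\]
This is a $\cap/\cup/\setminus$-combination of the $\llbracket t_i\rrbracket$, so by the previous paragraph there is a term $x_\epsilon$, regular (resp. commutative-regular), computable from $T$, with $\llbracket x_\epsilon\rrbracket=L_\epsilon$. Let $U$ be the set of all these $x_\epsilon$ (one may discard those with $\llbracket x_\epsilon\rrbracket=\emptyset$, a decidable condition as in the proof of Corollary~\ref{2357y.triton.cor}, but this is not needed since empty languages are trivially disjoint from everything). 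Pairwise disjointness of the languages $\llbracket x\rrbracket$, $x\in U$, is immediate: if $\epsilon\neq\epsilon'$ they differ in some coordinate $i$, say $\epsilon_i=1$ and $\epsilon'_i=0$; then $L_\epsilon\subseteq\llbracket t_i\rrbracket$ while $L_{\epsilon'}\cap\llbracket t_i\rrbracket=\emptyset$, so $L_\epsilon\cap L_{\epsilon'}=\emptyset$.

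Finally, for each $j\le n$ I would put $V_{t_j}=\{x_\epsilon\in U:\epsilon_j=1\}$. Every pomset $w$ lies in exactly one cell $L_\epsilon$, namely the one with $\epsilon_i=1\iff w\in\llbracket t_i\rrbracket$; and $w\in\llbracket t_j\rrbracket$ forces $\epsilon_j=1$, while conversely $L_\epsilon\subseteq\llbracket t_j\rrbracket$ whenever $\epsilon_j=1$. Hence $\llbracket t_j\rrbracket=\bigcup_{x\in V_{t_j}}\llbracket x\rrbracket$, and both $U$ and the $V_{t_j}$ are obtained by finitely many invocations of Theorem~\ref{closed.under.comp.729u.thm}, giving the computability claims. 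There is no substantive obstacle here; the only points that need care are that every Boolean combination used must avoid absolute complementation, since Theorem~\ref{closed.under.comp.729u.thm} gives closure only under \emph{difference} within the class (not complement relative to all pomsets), and that the computability assertions be seen to compose through the constantly many construction steps.
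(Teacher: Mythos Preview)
Your proof is correct and follows essentially the same approach as the paper: both construct the atoms $L_\epsilon$ (the paper calls them $s_N$, indexed by subsets $N\subseteq\{1,\dots,n\}$) of the Boolean algebra generated by the $\llbracket t_i\rrbracket$ and take $V_{t_j}$ to be those atoms with $j$-th coordinate equal to $1$. You are in fact a bit more careful than the paper in explaining how intersection is obtained from difference via $A\cap B=A\setminus(A\setminus B)$, so that Theorem~\ref{closed.under.comp.729u.thm} suffices.
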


\begin{proof} 
Write $T = \{ t_1, \ldots, t_n\}$. By Theorem \ref{closed.under.comp.729u.thm}, for each  $ N \subseteq \{ 1, \ldots, n\}$, we may define a term $s_N$  satisfying 
$\llbracket s_N \rrbracket =  \bigcup_{i \in N} \llbracket t_i \rrbracket - \bigcup_{i \notin N} \llbracket t_i \rrbracket$, and $M \not= N \Rightarrow \llbracket s_M \rrbracket \cap   \llbracket s_N \rrbracket = \emptyset$ holds. Clearly $$\llbracket t_i \rrbracket = \bigcup_{i \in N }   \llbracket s_N \rrbracket,$$
  thus proving the Corollary, since from  Theorem \ref{closed.under.comp.729u.thm}, the terms $s_N$ can clearly be computed from $T$.
\qed
\end{proof}

\section{Closure of rational pomset languages  under Boolean operations}

\label{sect.bool.closed.pomset.sp}

In this section we prove our first main theorem.

\subsection{The label set $ L_U$ and function   $ \nu $ }

\label{195l.cougar.theta.section}

  For the remainder of this section, and in Section \ref{sect.biKleene.terms.equal.imply.langs}, Definition \ref{777.kanga.defn} will be assumed. 

\begin{defn}[associating a label with a term, the function $\nu$] \label{777.kanga.defn} \rm 
For any term $u$, we assume a label $l_u$, where distinct terms define distinct labels, and for any set $U$ of terms over an alphabet $ \Sigma$, we define $ L_U=  \{ l_u \vert \, u \in U \}$. We also define the homomorphism
$$ \nu : T_{bi-KA}(L_U) \to  T_{bi-KA}( \Sigma)$$ 
given by $ \nu(l_u) = u$. Further, for any $p \in \pom_{sp}(L_U)$, we define $ \nu(p) = \nu(t)$, where  $t \in T_{bimonoid}(L_U)$ satisfies $ \llbracket t \rrbracket = \{p \} $ (well-defined by Corollary \ref{homo.4980r.plant.bimonoid.cor}). 
\end{defn}

\textbf{Note:} the assertions of Proposition \ref{devil.386f.kite.prop},  Lemma  \ref{574a.mouse.lem} and Corollary \ref{331f.eagle.cor}, and Lemma \ref{bool.closure.225m.pig.lem} in Section \ref{sect.biKleene.terms.equal.imply.langs} have their counterparts with references to sequential and parallel multiplication interchanged, and these have analogous proofs.

\begin{lem} \label{574a.mouse.lem}
Let  $ \Sigma$ be an alphabet and let $U$ be a set of  elements of $ T_{bi-KA}( \Sigma)$ such that every element of $U$ either lies in $ \Sigma$ or is sequential.  Assume that distinct terms in $U$ define disjoint languages. Let  
$ p $  be a parallel product of elements of $L_U$ 
 and let $s  \in T_{ComReg}( L_U )$. Then  
 $$ p \notin \llbracket s \rrbracket  \Rightarrow \llbracket \nu(p) \rrbracket \cap \llbracket \nu(s) \rrbracket = \emptyset $$ holds. 
\end{lem}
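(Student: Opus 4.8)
The plan is to prove the contrapositive: assuming $\llbracket \nu(p) \rrbracket \cap \llbracket \nu(s) \rrbracket \neq \emptyset$, I will deduce $p \in \llbracket s \rrbracket$. Write $p = l_{u_1} \parallel \cdots \parallel l_{u_m}$ with each $u_i \in U$, so that $\nu(p) = u_1 \parallel \cdots \parallel u_m$ and $\llbracket \nu(p) \rrbracket = \llbracket u_1 \rrbracket \parallel \cdots \parallel \llbracket u_m \rrbracket$. The first step is to unfold the right-hand side: since $s \in T_{ComReg}(L_U)$, the pomset language it defines over $L_U$ consists of commutative words over $L_U$, and applying Lemma \ref{236w.shrew.lem} to the bi-Kleene homomorphism $T_{bi-KA}(L_U) \to 2^{\pom(\Sigma)}$ sending $l_u \mapsto \llbracket u \rrbracket$ (which is exactly $\llbracket \nu(\cdot) \rrbracket$, and whose bimonoid extension sends a pomset $r \in \pom_{sp}(L_U)$ to $\llbracket \nu(r) \rrbracket$) yields $\llbracket \nu(s) \rrbracket = \bigcup_{w \in \llbracket s \rrbracket} \llbracket \nu(w) \rrbracket$, the union ranging over the commutative words $w$ in $\llbracket s \rrbracket$.

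Now take $q \in \llbracket \nu(p) \rrbracket \cap \llbracket \nu(s) \rrbracket$. By the previous step there is a commutative word $w = l_{v_1} \parallel \cdots \parallel l_{v_n} \in \llbracket s \rrbracket$ with each $v_j \in U$ and $q \in \llbracket \nu(w) \rrbracket = \llbracket v_1 \rrbracket \parallel \cdots \parallel \llbracket v_n \rrbracket$. Thus $q$ has two parallel decompositions, $q = q_1 \parallel \cdots \parallel q_m$ with $q_i \in \llbracket u_i \rrbracket$, and $q = q'_1 \parallel \cdots \parallel q'_n$ with $q'_j \in \llbracket v_j \rrbracket$. The crucial observation is that every factor here is non-parallel: if $u_i \in \Sigma$ then $\llbracket u_i \rrbracket = \{u_i\}$ is a singleton pomset, which is not parallel; and if $u_i$ is sequential then every pomset of $\llbracket u_i \rrbracket$ is sequential, hence non-parallel by Lemma \ref{flamingo.5925g.lem} (and, having at least two vertices, not equal to $1$); the same applies to each $v_j$. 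Therefore Lemma \ref{lem.pomset.decomp.unique}(1) applies to the two decompositions of $q$, giving $m = n$ and a permutation $\theta$ of $\{1, \ldots, m\}$ with $q_i = q'_{\theta(i)}$ for every $i$.

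Finally, for each $i$ we have $q_i \in \llbracket u_i \rrbracket \cap \llbracket v_{\theta(i)} \rrbracket \neq \emptyset$, so the disjointness hypothesis on $U$ forces $u_i = v_{\theta(i)}$, whence $l_{u_i} = l_{v_{\theta(i)}}$. Consequently $p = l_{u_1} \parallel \cdots \parallel l_{u_m}$ and $w = l_{v_1} \parallel \cdots \parallel l_{v_m}$ carry the same multiset of letters, so they are equal as commutative words; hence $p = w \in \llbracket s \rrbracket$, contradicting $p \notin \llbracket s \rrbracket$. The only genuinely delicate points are the first step — getting $\nu$ and language evaluation to commute in the shape ``$\llbracket \nu(s) \rrbracket$ is the union of the $\llbracket \nu(w) \rrbracket$ over $w \in \llbracket s \rrbracket$'', which is where Lemma \ref{236w.shrew.lem} does the work — and verifying that all factors in the two decompositions of $q$ are non-parallel, so that the uniqueness-of-decomposition lemma is legitimately invoked; the rest is bookkeeping with the labelling map $\nu$. (By the Note preceding the lemma, the statement and proof with $\cdot$ and $\parallel$ interchanged, using Lemma \ref{lem.pomset.decomp.unique}(2) in place of (1), are entirely analogous.)
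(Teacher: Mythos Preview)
Your proof is correct but proceeds differently from the paper's. The paper argues by structural induction on $s$ (ordered first by the number of $+$ and $^{(*)}$, then by the number of $\parallel$), handling the cases $s \in L_U$, $s = s_1 + s_2$, $s = s_1 \parallel s_2$, and $s = r^{(*)}$ separately; in the $\parallel$ case it essentially carries out the decomposition argument you give, but only at the top level, and then invokes the inductive hypothesis on the two factors. Your approach bypasses this induction entirely by invoking Lemma~\ref{236w.shrew.lem} once to write $\llbracket \nu(s) \rrbracket = \bigcup_{w \in \llbracket s \rrbracket} \llbracket \nu(w) \rrbracket$, which immediately reduces the problem to comparing $p$ with a single commutative word $w$ via Lemma~\ref{lem.pomset.decomp.unique}(1). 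This is cleaner and shorter; it also makes the role of the disjointness hypothesis on $U$ completely transparent (it is used exactly once, to conclude $u_i = v_{\theta(i)}$ from $\llbracket u_i \rrbracket \cap \llbracket v_{\theta(i)} \rrbracket \neq \emptyset$). The paper's inductive argument, by contrast, is more self-contained in that it does not appeal to Lemma~\ref{236w.shrew.lem}, though the paper does use that lemma anyway in the very next result (Corollary~\ref{331f.eagle.cor}), so your route is no less economical overall.
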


\begin{proof} 
Order the terms $s$, firstly by the total number of occurrences of $+$ and $^{(*)}$, and secondly by the number of occurrences of  $ \parallel$. Assume that $ p \notin \llbracket s \rrbracket$ holds.  We prove $ \llbracket \nu(p) \rrbracket \cap \llbracket \nu(s) \rrbracket = \emptyset$  by induction using this ordering. 
  \begin{itemize}
  \item
  Suppose that $s \in L_U$. If the commutative word $p \notin L_U \cup \{1\} $, then we may write $p= q \parallel q'$ for $q,q' \not= 1$ and hence $\nu(p)=  \nu(q) \parallel \nu(q')$ is a parallel term, whereas no elements of  $\llbracket \nu(s) \rrbracket$    are parallel, proving $ \llbracket \nu(p) \rrbracket \cap \llbracket \nu(s) \rrbracket = \emptyset $. On the other hand, if  $p \in L_U$ or $ p =1$, then   $ \llbracket \nu(p) \rrbracket \cap \llbracket \nu(s) \rrbracket = \emptyset $ follows, respectively, from the disjointness assumption on the elements of $U$ or the fact that $ 1 \notin  \llbracket u \rrbracket$ for all $ u \in U$.  
  \item
   Suppose $s = s_1 + s_2$. The conclusion follows by the inductive hypothesis applied to each term $s_i$. 
\item 
Suppose  $s= s_1 \parallel s_2$.   Write $ p= l_{u_1} \parallel \ldots \parallel l_{u_m}$ with each $u_i \in U$.
Assume the conclusion is false for $s$; thus  there are pomsets $q_i \in \llbracket \nu(s_i) \rrbracket$
such that 
 $ q_1 \parallel q_2  \in \llbracket \nu(p) \rrbracket \cap \llbracket \nu(s) \rrbracket$. 
  Since every element of every set $\llbracket u_i \rrbracket$ is not parallel and not $1$, after rearrangement of the labels $l_{u_i}$  we may write $ q_1 = v_1 \parallel \ldots \parallel v_n$ and $ q_2 = v_{n+1} \parallel \ldots \parallel v_m$ with each pomset $v_i \in 
  \llbracket u_i \rrbracket$. 
Thus  $ l_{u_1} \parallel \ldots \parallel l_{u_n} \in \llbracket s_1 \rrbracket$ and  $ l_{u_{n+1}} \parallel \ldots \parallel l_{u_m} \in \llbracket s_2 \rrbracket$ by the inductive hypothesis, and so $p \in \llbracket s_1 \parallel s_2 \rrbracket$, giving a contradiction. 
 \item 
  Suppose $s= r^{(*)}$. Thus for every $n \ge 0$,  $\llbracket p \rrbracket \cap \llbracket \underbrace{r \parallel \ldots \parallel r}_{ n \text{ terms}}  \rrbracket = \emptyset    $ holds, and  
    from the minimality condition on $s$, $ \llbracket \nu(p) \rrbracket \cap \llbracket \nu( \underbrace{r \parallel \ldots \parallel r}_{ n \text{ terms}} ) \rrbracket = \emptyset $ follows. 
     Since $ \llbracket \nu(s) \rrbracket = \cup_{n \ge 0} \llbracket \nu(  \underbrace{r \parallel \ldots \parallel r}_{ n \text{ terms}} ) \rrbracket$, this leads to a contradiction. \qed
  \end{itemize} 
\end{proof}

Corollary \ref{331f.eagle.cor} extends 
Lemma \ref{574a.mouse.lem} by  replacing $p$ by an arbitrary  commutative-regular term.

\begin{cor} \label{331f.eagle.cor}
Let  $ \Sigma$ be an alphabet and let $U$ be a set of  elements of $ T_{bi-KA}( \Sigma)$ such that every element of $U$ either lies in $ \Sigma$ or is sequential.  Assume that distinct terms in $U$ define disjoint languages.  
 Let  $s,s'  \in T_{ComReg}( L_U )$. Then 
 $$ \llbracket s \rrbracket  \cap \llbracket s'  \rrbracket = \emptyset \, \Rightarrow \, \llbracket\nu(s)  \rrbracket \cap  \llbracket \nu(s')   \rrbracket = \emptyset $$ 
 holds.
\end{cor}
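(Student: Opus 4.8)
The plan is to deduce the corollary from Lemma~\ref{574a.mouse.lem} by contraposition. Suppose $\llbracket \nu(s) \rrbracket \cap \llbracket \nu(s') \rrbracket \neq \emptyset$ and fix a pomset $r$ in this intersection; I want to exhibit a single commutative word over $L_U$ lying in both $\llbracket s \rrbracket$ and $\llbracket s' \rrbracket$. First I would record the routine fact that $\llbracket \nu(s) \rrbracket = \bigcup_{p \in \llbracket s \rrbracket} \llbracket \nu(p) \rrbracket$: this is an instance of Lemma~\ref{236w.shrew.lem}, applied with the alphabet $L_U$, the bimonoid $\pom(\Sigma)$, and the homomorphism obtained by composing $\nu$ with the pomset-language map over $\Sigma$ (well-definedness of $\nu$ on $\pom_{sp}(L_U)$ being exactly the point already settled in Definition~\ref{777.kanga.defn} via Corollary~\ref{homo.4980r.plant.bimonoid.cor}). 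Since $s \in T_{ComReg}(L_U)$, every $p \in \llbracket s \rrbracket$ is a commutative word over $L_U$, i.e.\ a parallel product of elements of $L_U$, so this identity yields some such $p$ with $p \in \llbracket s \rrbracket$ and $r \in \llbracket \nu(p) \rrbracket$.

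Now I would apply Lemma~\ref{574a.mouse.lem} to this $p$ with $s'$ playing the role of $s$. That lemma asserts $p \notin \llbracket s' \rrbracket \Rightarrow \llbracket \nu(p) \rrbracket \cap \llbracket \nu(s') \rrbracket = \emptyset$. But $r$ lies in both $\llbracket \nu(p) \rrbracket$ and $\llbracket \nu(s') \rrbracket$, so this intersection is non-empty, and the contrapositive forces $p \in \llbracket s' \rrbracket$. Hence $p \in \llbracket s \rrbracket \cap \llbracket s' \rrbracket$, so $\llbracket s \rrbracket \cap \llbracket s' \rrbracket \neq \emptyset$. Taking the contrapositive of the implication just established gives the corollary.

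There is no serious obstacle here: the substance is entirely contained in Lemma~\ref{574a.mouse.lem}, and this corollary merely promotes its conclusion from a fixed parallel word $p$ to an arbitrary commutative-regular term $s$, exploiting that $\llbracket s \rrbracket$ is a union of such words. The only point requiring a moment's care is the language-decomposition identity cited above. If one preferred to avoid invoking Lemma~\ref{236w.shrew.lem} one could argue by hand instead: decompose $r$ both as $r_1 \parallel \cdots \parallel r_m$ with $r_i \in \llbracket u_i \rrbracket$ (coming from some $p = l_{u_1} \parallel \cdots \parallel l_{u_m} \in \llbracket s \rrbracket$ witnessing $r \in \llbracket \nu(s) \rrbracket$) and as $r'_1 \parallel \cdots \parallel r'_n$ with $r'_j \in \llbracket w_j \rrbracket$ (coming from some $p' = l_{w_1} \parallel \cdots \parallel l_{w_n} \in \llbracket s' \rrbracket$ witnessing $r \in \llbracket \nu(s') \rrbracket$), observing that each $r_i$ and $r'_j$ is non-parallel and $\neq 1$ because every element of $U$ lies in $\Sigma$ or is sequential; then Lemma~\ref{lem.pomset.decomp.unique}(1) matches the two factorisations up to a permutation $\theta$, and disjointness of the $U$-languages upgrades $r_i = r'_{\theta(i)}$ to $u_i = w_{\theta(i)}$, whence $p = p'$ as commutative words over $L_U$ and $p \in \llbracket s \rrbracket \cap \llbracket s' \rrbracket$.
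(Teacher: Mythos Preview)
Your proof is correct and follows essentially the same route as the paper: argue by contraposition, use (the commutative-regular analogue of) Lemma~\ref{236w.shrew.lem} to extract a commutative word $p\in\llbracket s\rrbracket$ with $\llbracket\nu(p)\rrbracket$ meeting $\llbracket\nu(s')\rrbracket$, and then invoke Lemma~\ref{574a.mouse.lem} contrapositively to conclude $p\in\llbracket s'\rrbracket$. Your optional alternative via Lemma~\ref{lem.pomset.decomp.unique}(1) is a pleasant self-contained variant, but the main argument already matches the paper.
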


\begin{proof}
 If $\llbracket \nu(s) \rrbracket$ and $ \llbracket \nu(s') \rrbracket$ are not disjoint, then from the commutative-regular analogue of Lemma \ref{236w.shrew.lem}, there exists a commutative word $w$ such that  $  w\in \llbracket s \rrbracket  $ and  $\llbracket \nu(w) \rrbracket \subseteq  \llbracket \nu(s) \rrbracket$ and  $\llbracket \nu(w) \rrbracket$ intersects with $\llbracket \nu(s') \rrbracket$, and so from Lemma \ref{574a.mouse.lem},  $  w\in \llbracket s' \rrbracket  $ also follows.
 \qed 
\end{proof}

\begin{lem}  \label{fly.266g.lem}
Let  $ \Sigma$ be an alphabet and let $T$ be a finite set of  elements of $ T_{bi-KA}( \Sigma)$.  Then there exists a finite set $U$ of elements of $ T_{bi-KA}( \Sigma)$ defining non-empty pairwise disjoint languages,   such that for each $t \in T$, there exists  $ U_t \subseteq U$ such that $ \llbracket t \rrbracket = \bigcup_{x \in U_t} \llbracket x \rrbracket$ holds.  Furthermore, the set $U$ can be computed from $T$ and any subset $U_t$ can be computed from $T$ and $t$. 
\end{lem}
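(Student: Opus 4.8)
The plan is to argue by induction on $d = \max_{t \in T}\depth(\llbracket t \rrbracket)$, which is finite by Lemma~\ref{rat.lang.jay249y.partition.lem} (discard at the outset any $t \in T$ with $\llbracket t \rrbracket = \emptyset$, assigning it $U_t = \emptyset$). For $d = 0$ every $\llbracket t \rrbracket$ is a finite subset of $\Sigma \cup \{1\}$, so one takes for $U$ the terms $\sigma$ for the finitely many letters $\sigma$ occurring, together with the term $1$ if $1 \in \llbracket t \rrbracket$ for some $t$; these define non-empty pairwise disjoint languages and each $\llbracket t \rrbracket$ is the union of those it contains. For $d > 0$, apply Lemma~\ref{rat.lang.jay249y.partition.lem} to each $t$ to obtain computable terms $t_{\mathrm{seq}}, t_{\mathrm{par}}, t_{\mathrm{gnd}}$ with $t =_{bi-KA} t_{\mathrm{seq}} + t_{\mathrm{par}} + t_{\mathrm{gnd}}$ defining $\llbracket t \rrbracket \cap \seq$, $\llbracket t \rrbracket \cap \para$ and $\llbracket t \rrbracket \cap (\Sigma \cup \{1\})$. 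The ground parts jointly form a finite subset of $\Sigma \cup \{1\}$ and are atomised as in the base case, so it remains to atomise $\{t_{\mathrm{par}} : t \in T\}$ and, by the symmetry recorded in the Note following Definition~\ref{777.kanga.defn} (interchanging $\cdot$ with $\parallel$ and $T_{Reg}$ with $T_{ComReg}$), $\{t_{\mathrm{seq}} : t \in T\}$. The atoms obtained from the three families will have pairwise disjoint languages across families: by Lemma~\ref{flamingo.5925g.lem} no pomset is both sequential and parallel, and a singleton or the empty pomset is neither.

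For the parallel family, Proposition~\ref{devil.386f.kite.prop} supplies, for each $t$, a computable commutative-regular term $c_t$ and computable terms $u^{t}_1, \dots, u^{t}_{m_t}$, each defining a non-empty language that is sequential or lies in $\Sigma$ and of depth strictly less than $\depth(\llbracket t \rrbracket) \le d$, with $t_{\mathrm{par}} =_{bi-KA} c_t(u^{t}_1, \dots, u^{t}_{m_t})$. Collect all the $u^{t}_i$ into a finite set $V$; then $\max_{v \in V}\depth(\llbracket v \rrbracket) < d$, so the inductive hypothesis provides a non-empty disjoint basis for $V$. Splitting off each single-letter pomset as its own atom $\sigma$ and replacing every remaining atom $x$ by the term defining $\llbracket x \rrbracket \cap \seq$ (computable by Lemma~\ref{rat.lang.jay249y.partition.lem}; its language is a subset of $\bigcup_{v \in V}\llbracket v \rrbracket$, hence sequential), we obtain a set $U_V$ of non-empty pairwise-disjoint terms each of which lies in $\Sigma$ or is sequential. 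Rewriting each $u^{t}_i$ as the sum of the labels $l_x$ of the $U_V$-atoms composing it turns $c_t$ into a term $\hat c_t \in T_{ComReg}(L_{U_V})$ satisfying $\llbracket \nu(\hat c_t) \rrbracket = \llbracket t \rrbracket \cap \para$, by compositionality of $\llbracket \cdot \rrbracket$ together with $\llbracket \nu(l_x) \rrbracket = \llbracket x \rrbracket$.

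Now apply Corollary~\ref{pure.term.138y.basis.cor} to the finite set $\{\hat c_t : t \in T\} \subseteq T_{ComReg}(L_{U_V})$: it returns a computable finite set $D$ of commutative-regular terms over $L_{U_V}$ with pairwise disjoint languages and, for each $t$, a subset $D_t$ with $\llbracket \hat c_t \rrbracket = \bigcup_{d \in D_t}\llbracket d \rrbracket$. Since $U_V$ meets the hypotheses of Corollary~\ref{331f.eagle.cor}, the terms $\nu(d)$, $d \in D$, define pairwise disjoint languages, and applying Lemma~\ref{236w.shrew.lem} to the bi-Kleene homomorphism $\llbracket \nu(\cdot) \rrbracket$ (with the bimonoid $\pom(\Sigma)$ in the role of $M$) yields $\llbracket t \rrbracket \cap \para = \llbracket \nu(\hat c_t) \rrbracket = \bigcup_{d \in D_t}\llbracket \nu(d) \rrbracket$. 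One also checks that each $\llbracket \nu(d) \rrbracket$ consists of parallel pomsets: a width-one word $l_x \in \llbracket \hat c_t \rrbracket$ would force the non-empty language $\llbracket x \rrbracket$, which contains no parallel pomset, into $\llbracket t \rrbracket \cap \para$, which is impossible, so no $\llbracket \hat c_t \rrbracket$ and hence no $\llbracket d \rrbracket$ contains a width-one word, while $\nu$ sends any word of width at least two over $L_{U_V}$ to a parallel pomset. Running the mirror construction on $\{t_{\mathrm{seq}} : t \in T\}$ and taking $U$ to be the union of the ground, parallel ($\{\nu(d) : d \in D\}$) and sequential atom-families with all empty-language members removed, and $U_t$ the union of the three corresponding sub-selections for $t$, completes the construction. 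Every step above is effective, which gives the computability claims.

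The step I expect to be the main obstacle is arranging the bookkeeping so that the induction really closes. Proposition~\ref{devil.386f.kite.prop} decreases the depth parameter only when one descends from a parallel (resp.\ sequential) term to its sequential (resp.\ parallel) and ground subterms, so the three-way sequential/parallel/ground split must be maintained at every level; and, before the label alphabet $L_{U_V}$ and transfer map $\nu$ can be brought to bear, the atoms handed back by the inductive hypothesis must be re-normalised into the shape ``a single letter, or a sequential term'' demanded by the hypotheses of Lemma~\ref{574a.mouse.lem} and Corollary~\ref{331f.eagle.cor}. The one further delicate point is verifying that each $\llbracket \nu(d) \rrbracket$ is genuinely a parallel language, since that is exactly what guarantees the three atom-families are mutually disjoint.
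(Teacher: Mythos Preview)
Your proposal is correct and follows essentially the same route as the paper: induction on depth, reduction to the parallel (and symmetrically the sequential) case via Lemma~\ref{rat.lang.jay249y.partition.lem}, unpacking each parallel term through Proposition~\ref{devil.386f.kite.prop}, applying the inductive hypothesis to the resulting lower-depth pieces, and then lifting back through the label alphabet $L_{U_V}$ using Corollary~\ref{pure.term.138y.basis.cor} for the commutative-regular atomisation and Corollary~\ref{331f.eagle.cor} to transport disjointness along~$\nu$.

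The only organisational difference is that the paper first reduces to the case in which every term of $T$ is parallel and then runs the construction once, whereas you carry the three families (sequential, parallel, ground) in parallel and reassemble at the end; the content is the same. Your explicit re-normalisation of the atoms returned by the inductive hypothesis into the shape ``either a letter or a sequential term'' is a point the paper passes over: there the set $V$ is used directly in the role of $U$ in Corollary~\ref{331f.eagle.cor}, although the atoms in $V$ need only have languages contained in $\seq \cup \Sigma$, not literally be letters or have wholly sequential languages. The proof of Lemma~\ref{574a.mouse.lem} actually uses only the weaker property (no parallel pomsets, no $1$), so the paper's argument goes through, but your extra step makes the invocation of Corollary~\ref{331f.eagle.cor} strictly honest. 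Similarly, your verification that every $\llbracket \nu(d)\rrbracket$ is genuinely parallel is not needed in the paper's organisation (there the atoms are automatically contained in $\bigcup_{t\in T}\llbracket t\rrbracket \subseteq \para$), but it is needed in yours to keep the three atom families disjoint; note that your width-one argument should also exclude the empty word, which is handled by the same reasoning since $1\notin\para$.
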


\begin{proof}
We will prove the computability assertion separately; first we prove the preceding claims in the Lemma by induction on 
   $  \depth( \sum_{x \in T} x  )$. If $ T \subseteq  \Sigma \cup \{  1\} $ then the conclusion is obvious,  and so 
using Lemma \ref{rat.lang.jay249y.partition.lem}, we need only consider the case that each term in $T$ is parallel; the case   that each term in $T$ is sequential is analogous.

  By Proposition  \ref{devil.386f.kite.prop}, for each $t \in T$ there exists a  finite set $U_t$ of terms  that all either lie in $\Sigma$ or are sequential and a commutative-regular term $s_t$ over $ L_{U_t}$ such that $ \llbracket    t \rrbracket  = \llbracket \nu(s_t)\rrbracket   $ and for each $u \in U_t$,  $ \depth(u) <   \depth(t)$, and hence  
  \begin{equation}
    \depth(\sum_{ x \in \cup_{t \in T} U_t} x) \, <  \,  \depth( \, \sum_{x \in T} x )  \label{alpha.eqn} 
\end{equation}  holds. 
  
  From applying the inductive hypothesis to   $ \cup_{t \in T} U_t$  there is a set $V$ of  terms over $ \Sigma$ defining non-empty pairwise disjoint pomset languages,  and  such that for each $u \in \cup_{t \in T} U_t$, there exists  $ V_u \subseteq V$ such that $$ \llbracket u \rrbracket = \bigcup_{x \in V_u} \llbracket x \rrbracket$$ holds.
  
   For each $t \in T$, let $s_t'$ be obtained from $s_t$ by replacing every letter $l_u$ by the sum 
 $ \sum_{ x \in V_u} l_x$. Thus  $ \llbracket \nu(s_t')\rrbracket = \llbracket \nu(s_t)\rrbracket   = \llbracket t \rrbracket$ holds by Theorem \ref{kozen.390o.freeness.reg.comreg.thm}.  By Corollary \ref{pure.term.138y.basis.cor} applied to the terms $s_t'$, there is a set $C$ of  commutative-regular terms defining non-empty pairwise disjoint  languages and such that for each $t \in T$, there are terms $ c_1, \ldots, \ldots, c_n \in C$ satisfying $  \llbracket s_t'\rrbracket = \llbracket c_1 + \ldots \ldots + c_n \rrbracket$
 and again from  Theorem \ref{kozen.390o.freeness.reg.comreg.thm},  
 $$ \llbracket t \rrbracket =  \llbracket \nu(s_t')\rrbracket = \llbracket \nu(c_1) + \ldots \ldots + \nu(c_n) \rrbracket$$
 holds.
  From  Corollary \ref{331f.eagle.cor}, the terms in $ \nu(C)$ also satisfy the required disjointness property and hence satisfy the conclusion of the Lemma for $U$.

We now consider the computability assertion. We define a recursive algorithm $ \mathcal{A}$ that on input $T$ computes the sets $U$ and $U_t$ for each $ t \in T$ satisfying the conditions required. 
We may assume that each term in $T$ defines a non-empty pomset language. 
$ \mathcal{A}$ is defined precisely as indicated by our proof above. We prove by induction on $   \depth( \sum_{x \in T} x) $ that  $ \mathcal{A}$ terminates with the correct outputs. We define the partition $T=  T_{para} \uplus T_{seq} \uplus T_{  \Sigma  } $, where $T_{para}$ contains all elements of $T$ that  are parallel, $T_{seq}$ contains all elements of $T$ that are sequential, and 
$ T_{  \Sigma  } \subseteq \Sigma \cup \{1 \} $   contains all remaining elements of $T$.   The term sets $U_t$ and terms $s_t$ can  be computed from each  $t \in T_{para}$, by  Lemma \ref{devil.386f.kite.prop}.  $ \mathcal{A}$ obtains the sets $V$ and $V_u$ for each $u \in \cup_{t \in T_{para}} U_t$ by calling itself with input $\cup_{t \in T_{para}} U_t$;  by the inductive hypothesis and (\ref{alpha.eqn}),  $ \mathcal{A}$ terminates and returns the correct values. 
 Thus the terms $ s_t'$ can also be computed, and so the set $C$ and the appropriate set of elements $\{c_1, \ldots, c_n\}$ for each term $t$   can be computed by Corollary \ref{pure.term.138y.basis.cor}.  The function $ \nu $ is clearly computable and thus $ \mathcal{A}$ returns the correct term sets for $T_{para}$.  The correct output for $T_{seq}$ is computed analogously. \qed
\end{proof}

Our first main Theorem now follows.

\begin{thm} \label{bool.closure.225m.thm}
Let  $ \Sigma$ be an alphabet and let $t_1,t_2 \in T_{bi-KA}( \Sigma ) $. Then there exist elements of $ T_{bi-KA}( \Sigma)$ defining the sets $ \llbracket t_1\rrbracket \cup  \llbracket t_2\rrbracket$, $ \llbracket t_1\rrbracket \cap  \llbracket t_2\rrbracket$ and $\llbracket t_1\rrbracket -  \llbracket t_2\rrbracket$, which can be computed from $t_1$ and $t_2$. 
\end{thm}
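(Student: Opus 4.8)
The plan is to derive the theorem as an essentially immediate consequence of Lemma~\ref{fly.266g.lem}. First I would apply that lemma to the finite set $T = \{t_1, t_2\}$, obtaining a finite set $U$ of bi-Kleene terms over $\Sigma$ whose languages are non-empty and pairwise disjoint, together with subsets $U_{t_1}, U_{t_2} \subseteq U$ such that $\llbracket t_i \rrbracket = \bigcup_{x \in U_{t_i}} \llbracket x \rrbracket$ for $i = 1,2$, all computable from $t_1$ and $t_2$.

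Once this disjoint decomposition is in hand, each Boolean combination corresponds to the analogous set operation on the index sets $U_{t_1}$ and $U_{t_2}$. Concretely, $\llbracket t_1 \rrbracket \cup \llbracket t_2 \rrbracket = \bigcup_{x \in U_{t_1} \cup U_{t_2}} \llbracket x \rrbracket$; and since the languages $\llbracket x \rrbracket$ for $x \in U$ are pairwise disjoint, a pomset $p$ lies in both $\llbracket t_1 \rrbracket$ and $\llbracket t_2 \rrbracket$ precisely when the unique $x \in U$ with $p \in \llbracket x \rrbracket$ belongs to both $U_{t_1}$ and $U_{t_2}$, so $\llbracket t_1 \rrbracket \cap \llbracket t_2 \rrbracket = \bigcup_{x \in U_{t_1} \cap U_{t_2}} \llbracket x \rrbracket$, and similarly $\llbracket t_1 \rrbracket - \llbracket t_2 \rrbracket = \bigcup_{x \in U_{t_1} \setminus U_{t_2}} \llbracket x \rrbracket$. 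Thus the terms $\sum_{x \in U_{t_1} \cup U_{t_2}} x$, $\sum_{x \in U_{t_1} \cap U_{t_2}} x$ and $\sum_{x \in U_{t_1} \setminus U_{t_2}} x$ define the three required languages, where an empty sum is interpreted as $0$.

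Computability is then immediate, since Lemma~\ref{fly.266g.lem} already guarantees that $U$ and the subsets $U_{t_i}$ are computable from $t_1$ and $t_2$, and forming the three finite sums above is effective. I do not expect any genuine obstacle here: all the work has been pushed into Lemma~\ref{fly.266g.lem} (and, through it, into Proposition~\ref{devil.386f.kite.prop}, Lemma~\ref{574a.mouse.lem} and Corollary~\ref{331f.eagle.cor}), whose proofs establish the disjoint-decomposition property on which this theorem rests. The only point requiring a moment's care is the explicit appeal to pairwise disjointness of the languages of $U$ when justifying the intersection and difference formulas; without disjointness one would obtain only one-sided inclusions.
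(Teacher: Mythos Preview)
Your proposal is correct and follows essentially the same approach as the paper: both arguments invoke Lemma~\ref{fly.266g.lem} with $T=\{t_1,t_2\}$ and read off the Boolean combinations from the resulting disjoint decomposition. The only cosmetic difference is that the paper first reduces intersection to difference via $\llbracket t_1\rrbracket \cap \llbracket t_2\rrbracket = \llbracket t_1+t_2\rrbracket - (\llbracket t_1\rrbracket - \llbracket t_2\rrbracket) - (\llbracket t_2\rrbracket - \llbracket t_1\rrbracket)$, whereas you handle all three operations uniformly by intersecting, subtracting, and uniting the index sets $U_{t_1}$ and $U_{t_2}$.
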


\begin{proof}
The case $ \llbracket t_1\rrbracket \cup  \llbracket t_2\rrbracket$ is trivial, and since $ \llbracket t_1\rrbracket \cap  \llbracket t_2\rrbracket  =  \llbracket t_1 +t_2 \rrbracket - (\llbracket t_1\rrbracket -  \llbracket t_2\rrbracket) - (\llbracket t_2\rrbracket -  \llbracket t_1\rrbracket)$ holds, 
it suffices to prove the existence of an element $ s \in  T_{bi-KA}( \Sigma)$ such that $\llbracket s\rrbracket = \llbracket t_1\rrbracket -  \llbracket t_2\rrbracket$ holds. This follows from Lemma \ref{fly.266g.lem}  with $T= \{t_1, t_2 \}$ in that Lemma.  \qed
\end{proof}

We now give our bi-Kleene term decidability result.

\begin{thm} \label{biKA.equiv.decide.3870.willow.thm}
Let $ \Sigma $ be an alphabet and let 
 $t, t' \in  T_{bi-KA}( \Sigma ) $. Then 
 it is decidable whether $ \llbracket t\rrbracket  = \llbracket t'\rrbracket $ holds. 
\end{thm}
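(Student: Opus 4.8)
The plan is to reduce the decidability of language equivalence for bi-Kleene terms to the already-established closure of rational pomset languages under Boolean operations, exactly as was done for the (commutative-)regular case in Corollary~\ref{2357y.triton.cor}. First I would observe that
$$ \llbracket t \rrbracket = \llbracket t' \rrbracket \iff \big( \llbracket t \rrbracket - \llbracket t' \rrbracket \big) \cup \big( \llbracket t' \rrbracket - \llbracket t \rrbracket \big) = \emptyset. $$
By Theorem~\ref{bool.closure.225m.thm}, there is a term $s \in T_{bi-KA}(\Sigma)$, computable from $t$ and $t'$, such that $\llbracket s \rrbracket = \big( \llbracket t \rrbracket - \llbracket t' \rrbracket \big) \cup \big( \llbracket t' \rrbracket - \llbracket t \rrbracket \big)$; indeed one may take $s = s_1 + s_2$ where $\llbracket s_1 \rrbracket = \llbracket t \rrbracket - \llbracket t' \rrbracket$ and $\llbracket s_2 \rrbracket = \llbracket t' \rrbracket - \llbracket t \rrbracket$ are furnished by that theorem. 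So the problem reduces to deciding whether $\llbracket s \rrbracket = \emptyset$ for a given bi-Kleene term $s$.

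The remaining step is to argue that emptiness of $\llbracket s \rrbracket$ is decidable. This is essentially immediate: $\llbracket s \rrbracket = \emptyset$ holds if and only if $s$ can be reduced to $0$ using the Kleene-valid and parallel-analogous rewrites of (\ref{zero.elim.eqn}) — that is, $\llbracket s \rrbracket \neq \emptyset$ precisely when, after applying the substitutions $u + 0 \to u$, $u0 = 0u \to 0$, $0^* \to 1$ (and their $\parallel$, $^{(*)}$ analogues) until no further such rewrite applies, the resulting term is not $0$. These rewrites strictly decrease the number of occurrences of $0$ or leave the term unchanged on subterms not involving $0$, so the process terminates, and its correctness follows by a routine structural induction: a product, star, parallel product, or parallel iteration of a non-empty language is non-empty, and a sum is non-empty iff one summand is. Hence $\llbracket s \rrbracket = \emptyset$ is decidable, and therefore so is $\llbracket t \rrbracket = \llbracket t' \rrbracket$.

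I do not anticipate a genuine obstacle here, since all the substance has been discharged in Theorem~\ref{bool.closure.225m.thm} (whose proof in turn rests on Lemma~\ref{fly.266g.lem}); the only point requiring a word of care is the computability clause of that theorem, which guarantees that the term $s$ is actually produced by an algorithm rather than merely shown to exist. Given that, the decision procedure is: compute $s$, normalise away $0$ as above, and check whether the result is $0$. The proof can therefore be stated in a few lines, citing Theorem~\ref{bool.closure.225m.thm} and the evident decidability of emptiness.
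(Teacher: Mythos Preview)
Your proposal is correct and follows essentially the same approach as the paper: the paper's proof is a one-line reference to Theorem~\ref{bool.closure.225m.thm}, saying the argument is ``similar to the proof of Corollary~\ref{2357y.triton.cor}'', which is precisely the reduction to emptiness of the symmetric difference that you spell out. Your added detail on how to decide emptiness via the rewrites of~(\ref{zero.elim.eqn}) merely makes explicit what the paper leaves as ``clearly possible''.
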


\begin{proof}
This  follows from Theorem  \ref{bool.closure.225m.thm}, similarly to the proof of Corollary  \ref{2357y.triton.cor}. 
\end{proof}

\section{Equality between bi-Kleene terms defining pomset languages   is a consequence of the bi-Kleene axioms}

\label{sect.biKleene.terms.equal.imply.langs}

In this section we use Lemma \ref{fly.266g.lem} to  prove our second main theorem. 
 We first show that under stricter hypotheses, the converse implication to that given in Corollary \ref{331f.eagle.cor} holds.

\begin{lem} \label{bool.closure.225m.pig.lem}
Let  $ \Sigma$ be an alphabet and let $U$ be a set of  elements of $  T_{bi-KA}( \Sigma)$ such that every element of $U$ either lies in $ \Sigma$ or is sequential and defines a non-empty language. 
 Assume that pairs of terms in $U$ define disjoint languages. Let  $s,t$ be  commutative-regular terms over $L_U$. Then 
$$  \llbracket \nu(s) \rrbracket = \llbracket \nu( t) \rrbracket \Rightarrow  \llbracket s \rrbracket = \llbracket t \rrbracket    $$
holds.
\end{lem}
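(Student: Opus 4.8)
The plan is to deduce the stated implication from Lemma~\ref{574a.mouse.lem}, which is precisely the device that reflects non‑membership of a single commutative word across $\nu$. Since the hypothesis $\llbracket \nu(s)\rrbracket = \llbracket \nu(t)\rrbracket$ is symmetric in $s$ and $t$, it suffices to establish the inclusion $\llbracket s\rrbracket \subseteq \llbracket t\rrbracket$, the reverse inclusion then following by interchanging $s$ and $t$. Moreover, $\llbracket s\rrbracket$ and $\llbracket t\rrbracket$ consist of commutative words over $L_U$, so it is enough to show that every such word $w$ with $w\in\llbracket s\rrbracket$ also lies in $\llbracket t\rrbracket$.

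I would argue by contradiction. Suppose $w\in\llbracket s\rrbracket$ but $w\notin\llbracket t\rrbracket$, and write $w = l_{u_1}\parallel\cdots\parallel l_{u_m}$ with each $u_i\in U$. Then $w$ is a parallel product of elements of $L_U$, and the standing hypotheses on $U$ are exactly those required by Lemma~\ref{574a.mouse.lem}; applying that lemma with the roles played by $p$ and $s$ there taken by $w$ and $t$ gives $\llbracket\nu(w)\rrbracket \cap \llbracket\nu(t)\rrbracket = \emptyset$. On the other hand, the commutative‑regular analogue of Lemma~\ref{236w.shrew.lem}, applied to the commutative‑regular homomorphism $r\mapsto\llbracket\nu(r)\rrbracket$ on $T_{ComReg}(L_U)$, yields $\llbracket\nu(s)\rrbracket = \bigcup_{w'\in\llbracket s\rrbracket}\llbracket\nu(w')\rrbracket$, whence $\llbracket\nu(w)\rrbracket \subseteq \llbracket\nu(s)\rrbracket = \llbracket\nu(t)\rrbracket$. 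Finally, $\llbracket\nu(w)\rrbracket$ is non‑empty: $\nu(w) = u_1\parallel\cdots\parallel u_m$ (with $\nu(w)=1$ if $m=0$), and each $\llbracket u_i\rrbracket$ is non‑empty by assumption, so the pointwise parallel product is non‑empty. Thus a non‑empty language is simultaneously contained in and disjoint from $\llbracket\nu(t)\rrbracket$, a contradiction. Hence $w\in\llbracket t\rrbracket$, which proves $\llbracket s\rrbracket\subseteq\llbracket t\rrbracket$ and, by symmetry, $\llbracket s\rrbracket = \llbracket t\rrbracket$.

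I do not expect a serious obstacle here: the argument is short once Lemma~\ref{574a.mouse.lem} is available, and this lemma is in effect the converse of Corollary~\ref{331f.eagle.cor} phrased for equalities rather than for disjointness. The only points needing care are that the hypotheses of Lemma~\ref{574a.mouse.lem} are met verbatim, and that the extra assumption here — that every term of $U$ defines a non‑empty language — is genuinely needed and is used exactly once, to guarantee $\llbracket\nu(w)\rrbracket\neq\emptyset$; without it the statement fails, since a letter $l_u$ with $\llbracket u\rrbracket=\emptyset$ could be adjoined to $s$ without altering $\llbracket\nu(s)\rrbracket$.
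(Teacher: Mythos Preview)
Your proof is correct and follows essentially the same route as the paper: assume a commutative word $w\in\llbracket s\rrbracket\setminus\llbracket t\rrbracket$, use Lemma~\ref{574a.mouse.lem} to get $\llbracket\nu(w)\rrbracket\cap\llbracket\nu(t)\rrbracket=\emptyset$, use the commutative-regular analogue of Lemma~\ref{236w.shrew.lem} to get $\llbracket\nu(w)\rrbracket\subseteq\llbracket\nu(s)\rrbracket$, and conclude from non-emptiness of $\llbracket\nu(w)\rrbracket$. The only cosmetic difference is that the paper phrases this as a direct proof of the contrapositive rather than a contradiction inside a proof of one inclusion.
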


\begin{proof}
Suppose that  $\llbracket s \rrbracket \not= \llbracket t \rrbracket$ holds. Then there exists a pomset $p \in \llbracket s \rrbracket    -  \llbracket t \rrbracket  $  ($s,t$ may need to be interchanged). Let $ \tilde{p} \in T_{bimonoid}( \Sigma)$ satisfy $ \llbracket \tilde{p} \rrbracket = \{p\}$.   Thus  $ \llbracket s + \tilde{p}  \rrbracket =  \llbracket s \rrbracket$ and so   
$   \llbracket \nu(s)  \rrbracket  + \llbracket \nu(p) \rrbracket =  \llbracket \nu(s) +\nu(\tilde{p} ) \rrbracket =  \llbracket \nu(s) \rrbracket$ by Lemma \ref{236w.shrew.lem},
whereas  $ \llbracket \nu( p ) \rrbracket  \cap   \llbracket \nu( t) \rrbracket = \emptyset$  by the commutative-regular analogue of  Lemma \ref{236w.shrew.lem}. 
Since each element of $U$ defines a non-empty language,  $ \llbracket \nu( {p} ) \rrbracket \not= \emptyset$ and so $\llbracket \nu(s) \rrbracket \not= \llbracket \nu( t) \rrbracket$ follows. \qed
\end{proof}

Our second main theorem follows. 

\begin{thm} \label{cow.1729.term.equal.imply.syntax.thm} 
Let $ \Sigma $ be an alphabet and let 
 $t, t' \in  T_{bi-KA}( \Sigma ) $.
Assume $ \llbracket t\rrbracket  = \llbracket t'\rrbracket $;  then $t    =_{bi-KA}  t' $ holds.
  
\end{thm}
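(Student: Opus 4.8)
The plan is to induct on $\depth(t)$, which is finite by Lemma~\ref{rat.lang.jay249y.partition.lem} and equals $\depth(t')$ since $\llbracket t\rrbracket=\llbracket t'\rrbracket$. The first step is to reduce to the case that $t$ and $t'$ are both parallel, both sequential, or both ground. By Lemma~\ref{rat.lang.jay249y.partition.lem} write $t =_{bi-KA} t_s+t_p+t_g$ and $t' =_{bi-KA} t'_s+t'_p+t'_g$, where $\llbracket t_s\rrbracket=\llbracket t\rrbracket\cap\seq$, $\llbracket t_p\rrbracket=\llbracket t\rrbracket\cap\para$, $\llbracket t_g\rrbracket=\llbracket t\rrbracket\cap(\Sigma\cup\{1\})$, and likewise for $t'$. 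Since $\seq$, $\para$ and $\Sigma\cup\{1\}$ are pairwise disjoint (Lemma~\ref{flamingo.5925g.lem}) and $\llbracket t\rrbracket=\llbracket t'\rrbracket$, we get $\llbracket t_s\rrbracket=\llbracket t'_s\rrbracket$, $\llbracket t_p\rrbracket=\llbracket t'_p\rrbracket$ and $\llbracket t_g\rrbracket=\llbracket t'_g\rrbracket$, so it suffices to establish $t_s=_{bi-KA}t'_s$, $t_p=_{bi-KA}t'_p$ and $t_g=_{bi-KA}t'_g$ separately. The ground case is routine, since a term whose language is a finite subset of $\Sigma\cup\{1\}$ is $=_{bi-KA}$-equal to the sum of its elements; and by the symmetry between $\cdot$ and $\parallel$ recorded in the note after Definition~\ref{777.kanga.defn}, it then suffices to treat the parallel case. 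So assume $t$ and $t'$ are parallel, discarding the trivial subcase where one of them defines $\emptyset$.

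By Proposition~\ref{devil.386f.kite.prop}, $t=_{bi-KA}\nu(s)$ and $t'=_{bi-KA}\nu(s')$ for commutative-regular terms $s$ over $L_{U_0}$ and $s'$ over $L_{U'_0}$, where $U_0,U'_0$ are finite sets of terms, each lying in $\Sigma$ or sequential, each defining a non-empty language not containing $1$, and each of depth strictly below $\depth(t)$. Apply Lemma~\ref{fly.266g.lem} to $U_0\cup U'_0$: this yields a finite set $V$ of terms defining non-empty pairwise disjoint languages, together with sets $V_u\subseteq V$ with $\llbracket u\rrbracket=\bigcup_{x\in V_u}\llbracket x\rrbracket$ for each $u\in U_0\cup U'_0$. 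For any $x$ belonging to some $V_u$ we have $\llbracket x\rrbracket\subseteq\llbracket u\rrbracket$, so $x$ lies in $\Sigma$ or is sequential according as $u$ does, and $\depth(x)<\depth(t)$; moreover $u$ and $\sum_{x\in V_u}x$ are terms of depth $<\depth(t)$ defining the same language, so the induction hypothesis gives $u=_{bi-KA}\sum_{x\in V_u}x$.

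Let $\hat s$ and $\hat s'$ be the commutative-regular terms over $L_V$ obtained from $s$ and $s'$ by replacing each letter $l_u$ with $\sum_{x\in V_u}l_x$. Since $\nu$ is a homomorphism and $=_{bi-KA}$ is a congruence, the relations $u=_{bi-KA}\sum_{x\in V_u}x$ yield $\nu(s)=_{bi-KA}\nu(\hat s)$ and $\nu(s')=_{bi-KA}\nu(\hat s')$; passing to the pomset-language model, $\llbracket\nu(\hat s)\rrbracket=\llbracket t\rrbracket=\llbracket t'\rrbracket=\llbracket\nu(\hat s')\rrbracket$. Now Lemma~\ref{bool.closure.225m.pig.lem}, applied with $U:=V$ (whose members lie in $\Sigma$ or are sequential, are non-empty, and are pairwise disjoint), gives $\llbracket\hat s\rrbracket=\llbracket\hat s'\rrbracket$ as commutative-regular languages over $L_V$, whence $\hat s=\hat s'$ holds in every commutative Kleene algebra by Theorem~\ref{kozen.390o.freeness.reg.comreg.thm}. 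Applying the homomorphism $\nu$ to such a derivation --- the commutative Kleene axioms being among the bi-Kleene axioms --- gives $\nu(\hat s)=_{bi-KA}\nu(\hat s')$, and therefore
$$ t =_{bi-KA} \nu(s) =_{bi-KA} \nu(\hat s) =_{bi-KA} \nu(\hat s') =_{bi-KA} \nu(s') =_{bi-KA} t',$$
which completes the induction.

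The crux is the gluing step. The terms $s$ and $s'$ initially live over the unrelated label alphabets $L_{U_0}$ and $L_{U'_0}$, so to compare them one refines $U_0\cup U'_0$ into a common family $V$ of non-empty, pairwise disjoint, strictly shallower building blocks via Lemma~\ref{fly.266g.lem}; Lemma~\ref{bool.closure.225m.pig.lem} then transports the equality of languages over $\Sigma$ down to an equality of commutative-regular languages over $L_V$, where classical Kleene/Conway completeness (Theorem~\ref{kozen.390o.freeness.reg.comreg.thm}) converts it into a syntactic identity, which is pushed back up along $\nu$ using the inductive hypothesis on the blocks in $V$. The point that needs care is checking that the members of $V$ really are singletons or sequential, so that Lemma~\ref{bool.closure.225m.pig.lem} applies; this holds because each such member is a non-empty sub-language of some $\llbracket u\rrbracket$ with $u$ a singleton or sequential and with $1\notin\llbracket u\rrbracket$.
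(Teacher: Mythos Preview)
Your proof is correct and follows essentially the same argument as the paper: induct on depth, reduce to the parallel case via Lemma~\ref{rat.lang.jay249y.partition.lem}, use Proposition~\ref{devil.386f.kite.prop} to write $t,t'$ as $\nu$ of commutative-regular terms, refine the underlying label sets to a common pairwise-disjoint family $V$ via Lemma~\ref{fly.266g.lem}, apply Lemma~\ref{bool.closure.225m.pig.lem} to descend to equality of commutative-regular languages over $L_V$, invoke Theorem~\ref{kozen.390o.freeness.reg.comreg.thm}, and push the resulting identity back up through $\nu$ using the inductive hypothesis on the blocks. Your closing remark about verifying that the members of $V$ lie in $\Sigma$ or are sequential (so that Lemma~\ref{bool.closure.225m.pig.lem} applies) is well placed; the paper leaves this check implicit.
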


\begin{proof} 
   We  prove   the Theorem 
    by induction on   $  \depth(t)=   \depth(t')$. 
 If $ \llbracket t \rrbracket =   \llbracket t' \rrbracket \subseteq \{ 1 \} \cup \Sigma $, then  
    $t    =_{bi-KA}  t' $ is obvious. By Lemma \ref{rat.lang.jay249y.partition.lem}, we may assume that $t,t'$ are both parallel; the case that they are both sequential is analogous.

    By Proposition  \ref{devil.386f.kite.prop},    there exists a finite set 
 $U$  of terms that all either lie in $\Sigma$ or are sequential  and define  non-empty languages,
  and  commutative-regular terms $ s,s'$ over $ L_U$ such that 
  \begin{equation} t=_{bi-KA} \nu(s), \qquad  t'=_{bi-KA} \nu(s'). \label{a.eqn} \end{equation}
By Lemma  \ref{fly.266g.lem}, there is a finite subset $V  $ of $   T_{bi-KA}( \Sigma)$, pairs of which define disjoint languages, and  such that for each $ u \in U$ there exists $ V_u \subseteq V$ satisfying 
\begin{equation}
 \llbracket u \rrbracket = \bigcup_{x \in V_u}  \llbracket x \rrbracket. \label{a2.eqn}
\end{equation}   
 For each $ u \in U$, let  $w_u$ be a sum of the labels $ l_x$ for each  $ x \in  V_u$. Hence by  Theorem \ref{kozen.390o.freeness.reg.comreg.thm} and (\ref{a2.eqn}),  
\begin{equation}
  \llbracket \nu(w_u) \rrbracket =  \bigcup_{x \in V_u}  \llbracket x \rrbracket =  \llbracket u \rrbracket   \label{b.eqn} 
\end{equation}   holds. 
Let the terms $r,r'$ be obtained from $ s,s'$ respectively by replacing each occurrence of any $l_u \in L_U $ by  $w_u$. Thus $ \nu(r)$ is obtained from $\nu(s)$ by replacing each subterm  $u \in U $ by $ \nu(w_u)$, and similarly for $ \nu(r')$ and $\nu(s')$.
   By Proposition \ref{devil.386f.kite.prop}  and  (\ref{b.eqn}), for each $u \in U$
    $$  \depth(\nu(w_u))=  \depth(u) < \depth(t)= \depth(t')  $$  follows, and so from the inductive hypothesis, 
$\nu(w_u)  =_{bi-KA} u$  follows from (\ref{b.eqn}).  Since $    =_{bi-KA} $ is preserved by congruence, 
 \begin{equation}
   \nu(r)   =_{bi-KA} \nu(s)  =_{bi-KA} t, \qquad  \nu(r')   =_{bi-KA} \nu(s')   =_{bi-KA} t'  \label{c.eqn} 
\end{equation}
holds using  (\ref{a.eqn}). Since  $ \llbracket t\rrbracket  = \llbracket t'\rrbracket $ holds,  $ \llbracket \nu(r) \rrbracket  = \llbracket \nu(r')\rrbracket $ follows from (\ref{c.eqn}). 
  From Lemma \ref{bool.closure.225m.pig.lem},  $ r  =_{bi-KA} r'$ follows from Theorem \ref{kozen.390o.freeness.reg.comreg.thm}   since the terms $r,r'$ are commutative-regular, and so $ \nu(r)  =_{bi-KA} \nu(r')$  holds since  $    =_{bi-KA} $ is preserved by substitution. Hence
 $ t   =_{bi-KA} t'$ follows from (\ref{c.eqn}), thus concluding the proof. 
  \qed
\end{proof}

Theorem \ref{cow.1729.term.equal.imply.syntax.thm} has an analogue for bw-rational algebras.

\begin{thm}  \label{thm.bwrat.equal.imply.bw-equiv}
Let $ \Sigma $ be an alphabet and let 
 $t, t' \in  T_{bw-Rat}( \Sigma ) $.
Assume $ \llbracket t\rrbracket  = \llbracket t'\rrbracket $;  then $t    =_{bw-Rat}  t' $ holds.
\end{thm}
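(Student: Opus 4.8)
{The plan is to reduce Theorem~\ref{thm.bwrat.equal.imply.bw-equiv} to Theorem~\ref{cow.1729.term.equal.imply.syntax.thm}, exploiting the observation (already made in the excerpt, just after the definition of width) that a bw-rational term defines a language of bounded width and, conversely, that on bounded-width languages the operation $^{(*)}$ can be expressed via $\sum_{i=0}^{N} s^{(i)}$. The key point is that everything done in Sections~\ref{sect.bool.closed.pomset.sp} and \ref{sect.biKleene.terms.equal.imply.langs} stays inside the bw-rational fragment once the inputs are bw-rational, so the provability is witnessed by the bw-rational axioms alone.}

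\textbf{Step 1: bound the width.} Given $t,t' \in T_{bw-Rat}(\Sigma)$ with $\lb t \rb = \lb t' \rb$, set $N = \width(t) = \width(t')< \infty$.

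\textbf{Step 2: re-examine the proof of Theorem~\ref{cow.1729.term.equal.imply.syntax.thm}.} I would rerun that proof (by induction on $\depth(t)=\depth(t')$), checking that when the input terms are bw-rational, every intermediate term produced --- the terms supplied by Proposition~\ref{devil.386f.kite.prop}, Lemma~\ref{rat.lang.jay249y.partition.lem}, Lemma~\ref{fly.266g.lem}, the commutative-regular terms $s,s',r,r'$ over $L_U$, and so on --- again defines a language of width $\le N$, hence can be taken to lie in $T_{bw-Rat}$ (replacing any $^{(*)}$ by the finite sum $\sum_{i=0}^N s^{(i)}$, a substitution provable from the bw-rational axioms by idempotence and the commutative-semiring laws for $\parallel$). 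Crucially, in each step the $=_{bi-KA}$ derivation used --- the normalisation of Proposition~\ref{transform.term.prop}, the identities (\ref{kleene_star.recover.eqn})--(\ref{one.elim.eqn}), the congruence/substitution steps, and the appeal to Theorem~\ref{kozen.390o.freeness.reg.comreg.thm} for commutative-regular terms (which gives provability in every \emph{commutative Kleene algebra}, in particular from the commutative-semiring-plus-$^{(*)}$ axioms, and after collapsing $^{(*)}$ to a bounded sum, from the idempotent commutative semiring axioms) --- involves no axiom outside those defining a bw-rational algebra. Thus each ``$=_{bi-KA}$'' in that proof can be upgraded to ``$=_{bw-Rat}$''.

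\textbf{Step 3: conclude.} Threading the bounded-width bookkeeping through the induction yields $t =_{bw-Rat} t'$. The main obstacle is Step~2: one must verify that no step of the Section~\ref{sect.biKleene.terms.equal.imply.langs} argument secretly needs the full strength of the parallel-iteration axioms --- in particular that the uses of Theorem~\ref{kozen.390o.freeness.reg.comreg.thm} for commutative-regular terms survive the passage to bounded width, where $^{(*)}$ becomes a finite $+$-sum, so that the commutative Kleene algebra axioms can be replaced by the weaker idempotent commutative semiring axioms. This is exactly the place where bounded width is essential, and it is why the hypothesis $t,t' \in T_{bw-Rat}(\Sigma)$ (rather than merely $\lb t \rb, \lb t' \rb$ of bounded width) is what the statement asserts.
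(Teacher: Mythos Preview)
Your high-level strategy---rerun the proof of Theorem~\ref{cow.1729.term.equal.imply.syntax.thm} and check that only bw-rational axioms are invoked---is exactly what the paper does. However, your execution takes an unnecessary detour. You introduce the width bound $N$ and propose to replace any stray $^{(*)}$ in intermediate terms by the finite sum $\sum_{i=0}^{N} s^{(i)}$. The paper's observation is simpler: if $t,t'$ contain no $^{(*)}$, then \emph{none of the intermediate terms ever contain $^{(*)}$ in the first place}. In particular, tracing through Proposition~\ref{transform.term.prop} and Proposition~\ref{devil.386f.kite.prop}, a bw-rational input produces no $^{(!)}$ after normalisation, so the ``commutative-regular'' term $c$ there uses only $\{+,\parallel\}$. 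Consequently the terms $s,s',r,r'$ in the proof of Theorem~\ref{cow.1729.term.equal.imply.syntax.thm} lie in $T_{ComReg}$ without $^{(*)}$, and the appeal to Theorem~\ref{kozen.390o.freeness.reg.comreg.thm} is replaced by the elementary fact that finite commutative-word languages over $0,1,+,\parallel$ form the free idempotent commutative semiring. One simply drops the $^{(*)}$ and $^{(!)}$ cases from every inductive proof.

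Your detour is not fatal, but it does introduce a wrinkle: the phrase ``a substitution provable from the bw-rational axioms'' for $s^{(*)}\to\sum_{i\le N}s^{(i)}$ is not well-formed, since $^{(*)}$ is not an operation of bw-rational algebras, so there is nothing for those axioms to say about it. What you would actually need is that the \emph{derivation} $t=_{bi-KA}\nu(s)$ can be replayed as $t=_{bw-Rat}\nu(\hat s)$ for the truncated $\hat s$; justifying that forces you back to the paper's observation that $s$ had no $^{(*)}$ anyway. So your Step~2 is correct in outline but the width-bounding machinery is superfluous, and the point you flag as ``the main obstacle'' dissolves once you notice $^{(*)}$ never appears.
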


\begin{proof}
This has a similar proof to that of Theorem \ref{cow.1729.term.equal.imply.syntax.thm}. The proof relies on the fact that the proofs of Theorem \ref{cow.1729.term.equal.imply.syntax.thm} and its contributing lemmas and propositions can be adapted for bw-rational algebras by ignoring the cases in their proofs that consider the parallel iteration operation $^{(*)}$. In the case of Theorem \ref{kozen.390o.freeness.reg.comreg.thm}, the relevant result is that the algebra of commutative-word languages generated by an alphabet $\Sigma$ and the operations $ 0,1,+ , \parallel$ is the free idempotent commutative semiring with basis $ \Sigma$, and this is straightforward to prove. 
 \qed
\end{proof}

\section{The bi-Kleene algebra of pomset ideals}

\label{sect.ideals.newdefns.biKleene}

We now move on to considering pomset ideals. 
We first give a criterion for elements of $\pom$ to lie in $\pom_{sp}$. 

\begin{defn}[N-free pomsets] \rm
A pomset defined by vertex set $V$ with partial order $ \le $  is  N-free if  $V$ does not contain a  4-element subset $ \{v_1, \ldots, v_4 \}$ with $ v_1 \le v_2$ and $ v_3 \le v_2, \; v_3 \le v_4$, and such that   $ \le $ when restricted to $\{v_1, \ldots, v_4 \}$ does not contain any other pairs. 
\end{defn}

\begin{thm} \label{2606y.monk.N-free.thm}
A pomset is series-parallel if and only if it is N-free. 
\end{thm}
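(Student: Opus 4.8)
The plan is to prove both directions by induction, using as a bridge the following easy observation: an induced copy of the forbidden configuration $N$ in a pomset $p$ is precisely an induced $P_4$ (a path on four vertices) in the comparability graph of $p$, and conversely any induced $P_4$ in the comparability graph of \emph{any} poset, once its edges are oriented by $\le$, is forced by transitivity to be an induced $N$: if $a-b-c-d$ is the path and, say, $a<b$, then $b<c$ would give $a<c$ and $d<c$ would give $d<b$, either contradicting that the path is induced, so in fact $a<b$, $c<b$, $c<d$ with no further relations — an $N$. Thus ``$N$-free'' for pomsets is the same as ``$P_4$-free'' for the comparability graph, and this lets the whole argument be phrased graph-theoretically where convenient.

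\emph{Series-parallel $\Rightarrow$ $N$-free.} I would argue by structural induction on a term $t\in T_{bimonoid}(\Sigma)$ with $\llbracket t\rrbracket=\{p\}$ (canonical by Lemma~\ref{lem.pomset.decomp.unique}). The base cases $p=1$ and $p\in\Sigma$ are trivial. If $p=p_1\parallel p_2$ with vertex sets $V_1,V_2$, then since the comparability graph of $N$ is connected, the four vertices of any induced $N$ in $p$ must lie entirely in $V_1$ or entirely in $V_2$, contradicting the inductive hypothesis. If $p=p_1\cdot p_2$ and the four vertices of an induced $N$ split as $a$ of them in $V_1$ and $4-a$ in $V_2$ with $1\le a\le 3$, then — because every vertex of $V_1$ lies below every vertex of $V_2$ — the comparability graph of this induced suborder contains $K_{a,\,4-a}$ as a subgraph; but none of $K_{1,3},K_{2,2},K_{3,1}$ embeds in a $P_4$ (a degree-$3$ vertex, or too many edges), so again all four vertices lie in one factor, contradicting the inductive hypothesis.

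\emph{$N$-free $\Rightarrow$ series-parallel.} I would use strong induction on $|V|$, where $p$ is represented by $(V,\le)$. If $|V|\le 1$ then $p\in\{1\}\cup\Sigma\subseteq\pom_{sp}$. Otherwise I distinguish three cases by the comparability graph $G$ of $p$ and its complement $\bar G$ (the incomparability graph). (i) If $G$ is disconnected, write $V=V_1\uplus V_2$ with no order relations across; then $p=p|_{V_1}\parallel p|_{V_2}$, each factor is $N$-free ($N$-freeness is hereditary) and smaller, so induction applies. (ii) If $G$ is connected but $\bar G$ is disconnected, let $W_1,\dots,W_r$ ($r\ge 2$) be the connected components of $\bar G$, so every $W_i$--$W_j$ pair is comparable. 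A purely structural argument shows these components are linearly ordered, say $W_1$ below $W_2$ below $\cdots$ below $W_r$: for incomparable $c,c'$ inside one $W_i$ and any vertex $b$ outside it, $c$ and $c'$ must point the same way to $b$ (else $c<b<c'$ or $c>b>c'$ forces $c,c'$ comparable); by connectedness of $\bar G$ restricted to $W_i$, all of $W_i$ points the same way to each external $b$; a second application of this, with the roles of two components swapped, rules out two components being ``mixed''; and transitivity of $\le$ makes the resulting tournament on $\{W_1,\dots,W_r\}$ transitive, hence a linear order. Then $p=p|_{W_1}\cdot p|_{W_2}\cdots p|_{W_r}$ is a nontrivial sequential product of smaller $N$-free pomsets, and induction applies. (iii) The remaining possibility, that $G$ and $\bar G$ are both connected with $|V|\ge 2$, cannot occur for an $N$-free $p$, since such a $G$ necessarily contains an induced $P_4$ and hence (by the bridge) $p$ contains an induced $N$; this is the poset form of Seinsche's theorem on $P_4$-free graphs.

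The main obstacle is case (iii): showing that a poset whose comparability and incomparability graphs are both connected contains an induced $P_4$. I would prove it by induction on $|V|$ — delete a vertex $v$; by induction $G-v$ or $\overline{G-v}$ is disconnected, and since $\overline{P_4}\cong P_4$ we may complement if necessary and assume $G-v=A\ast B$ is a join of nonempty parts; then the adjacencies of $v$ to $A$ and $B$ are analysed: if $v$ is adjacent to all of $A\cup B$ it is isolated in $\bar G$, if to none it is isolated in $G$, and in the remaining mixed sub-cases a short search among $v$ and suitable vertices of $A$ and $B$ produces an induced $P_4$ — each outcome either contradicts connectedness or yields the path. This case bookkeeping is the delicate (though entirely elementary) part; everything else is routine induction. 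As a fallback this step may simply be quoted, the equivalence ``series-parallel $\Leftrightarrow$ $N$-free'' being the classical result of Valdes, Tarjan and Lawler, and also essentially Grabowski \cite{Grabowski}.
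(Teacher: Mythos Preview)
Your argument is correct. The paper, however, does not prove this theorem at all: its entire proof is the citation ``Gischer~\cite[Theorem 3.1]{Gischer:1988:EqThPo}''. Your own fallback suggestion --- simply to quote the classical result of Valdes--Tarjan--Lawler or Grabowski --- is therefore exactly what the paper does, just with a different reference.

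The direct proof you give takes a genuinely different (and more informative) route, namely the translation to comparability graphs and the $P_4$-free characterisation of cographs. The bridge ``induced $N$ in $p$ $\Leftrightarrow$ induced $P_4$ in the comparability graph'' is correctly justified; the forward direction is a clean structural induction (the $K_{a,4-a}$ edge-count argument for the sequential case is valid); and in the backward direction your verification in case~(ii) that the connected components of the incomparability graph are totally ordered by $\le$ is correct, including for singleton components. The only part left at the level of a sketch is Seinsche's theorem in case~(iii); the outline you give (delete a vertex, reduce by self-complementarity of $P_4$ to $G-v$ being a join, then analyse the adjacencies of $v$) is the standard one and does go through, though the final case analysis is indeed fiddly. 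Since you explicitly offer the citation as a fallback for precisely this step, there is no gap. What your approach buys over a bare citation is a self-contained argument that makes the connection to cograph theory explicit; what the paper's approach buys is brevity, which is appropriate since the result is classical and not the paper's contribution.
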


\begin{proof}
Gischer~\cite[Theorem 3.1]{Gischer:1988:EqThPo}. \qed
\end{proof}

\begin{defn}[ideals of a pomset] \rm
Let $p$ be a pomset. An ideal of $p$ is a pomset  that may be represented using the same vertex set as $p$, with the same labelling, but whose partial ordering is at least as strict as that for $p$. 
Let $L$ be a language of pomsets. Then $\id(L)$ is the  language of  pomsets that are ideals of pomsets lying  in $L$. We say that $L$ is a  (pomset) ideal if $ \id(L) = L$ holds. 
We also define $\id_{sp}(L)  = \id(L) \cap \pom_{sp}$. If $\id_{sp}(L)= L$ then we say that $L$ is an sp-ideal. 
The functions $\id $ and $ \id_{sp}$ are closure operators on the sets $2^{\pom( \Sigma)}$ and $ 2^{\pom_{sp}( \Sigma)}$ respectively \cite[chap.1]{burris1981course}.
\end{defn}

In order to to study the sp-ideal of a parallel product $ p \parallel q$ of pomsets, we introduce the  function $ \odot$. The use of this operation will be demonstrated by the identity (\ref{odot.eqn}).

\begin{defn}[the $ \odot$ binary function on pomset languages] \rm
Let $p_1,p_2$ be pomsets defined with disjoint vertex sets $V_1,V_2$. Then we 
define the set $ p_2 \odot p_2$ to be the set of all  pomsets  $q  \in \pom_{sp}$ whose vertex set is $ V_1 \cup V_2$ and such that $q$ retains the  vertex labelling and ordering of each $p_i$ within  $V_i$. We extend the domain of $ \odot$ pointwise to pairs of pomset languages.  Clearly $ \odot$ is associative and commutative.

\end{defn}

\begin{lem} \label{oslash.852t.para.induct.lem}
 The following hold for pomset languages $L,L', L_j $ for $j$ in an indexing set $S$.  
 \begin{equation}
  \id ( \cup_{j \in S} L_j ) =   \cup_{j \in S} \id( L_j), \label{id.union.eqn}
 \end{equation}
 \begin{equation}
  \id (   LL') =  \id ( L) \id (  L'), \qquad
 \id( L^*) =( \id ( L))^*, \label{id.seq.multiply}
 \end{equation}
 \begin{equation}
  \id(L \parallel L') =  \id(L \parallel \id( L')). \label{id.para.multiply}
 \end{equation}

  Furthermore, if $L,L', L_j \subseteq \pom_{sp} $ then  the same equalities with $\id $ replaced by $ \id_{sp}$ also hold; in addition, 
    \begin{equation}
    \id_{sp} ( L \parallel L') =  \id_{sp} (  L) \odot  \id_{sp} ( L'). \label{odot.eqn}
    \end{equation} 
\end{lem}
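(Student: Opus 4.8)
The plan is to prove the listed identities in order, reducing everything to a couple of structural facts about how sequential and parallel products interact with order-strengthening and with N-freeness. Equation (\ref{id.union.eqn}) is immediate: an ideal of a member of $\bigcup_{j} L_j$ is an ideal of a member of some $L_j$, so $\id(\bigcup_j L_j) \subseteq \bigcup_j \id(L_j)$, while the reverse inclusion is monotonicity of $\id$. For the first part of (\ref{id.seq.multiply}) I would first record the pointwise fact: if $r$ is an ideal of $pq$, then since every vertex of $p$ lies below every vertex of $q$ in $pq$ and the order of $r$ is at least as strict, the same holds in $r$, so $r = r_1 r_2$ where $r_1, r_2$ are the restrictions of $r$ to the two vertex sets, and these are ideals of $p$ and of $q$; conversely $r_1 r_2$ is an ideal of $pq$ whenever $r_1, r_2$ are ideals of $p$ and $q$. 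Thus $\id(pq) = \id(p)\id(q)$, and extending pointwise gives $\id(LL') = \id(L)\id(L')$. The star identity then follows from $L^* = \bigcup_{n \ge 0} L^n$, equation (\ref{id.union.eqn}), and induction on $n$ via the product identity: $\id(L^*) = \bigcup_n \id(L^n) = \bigcup_n (\id(L))^n = (\id(L))^*$.

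For (\ref{id.para.multiply}), the inclusion $\id(L \parallel L') \subseteq \id(L \parallel \id(L'))$ is monotonicity of $\id$ applied to $L' \subseteq \id(L')$. For the reverse, since $\id$ is a closure operator, in particular idempotent, it suffices to show $L \parallel \id(L') \subseteq \id(L \parallel L')$: if $p \in L$ and $q'$ is an ideal of $q \in L'$, then $p \parallel q'$ is an ideal of $p \parallel q$, because only the order within the right-hand factor has been strengthened and no relations across the two factors occur in either pomset; applying $\id$ and idempotence then yields equality.

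For the $\id_{sp}$ versions I would use $\id_{sp}(M) = \id(M) \cap \pom_{sp}$ together with Theorem \ref{2606y.monk.N-free.thm} and two elementary observations: first, N-freeness is hereditary, since a set of vertices inducing an N in a sub-pomset induces the same N in the ambient pomset, so induced sub-pomsets of series-parallel pomsets are series-parallel; second, both a sequential product $r_1 r_2$ and a parallel product $r_1 \parallel r_2$ are N-free precisely when both factors are, because the incomparabilities required by an N-shape force all four of its vertices into a single factor. The second observation gives $(\id(L)\id(L')) \cap \pom_{sp} = (\id(L) \cap \pom_{sp})(\id(L') \cap \pom_{sp})$, hence the $\id_{sp}$-analogues of the sequential and star identities, and the parallel identity is obtained by repeating the argument of the previous paragraph while noting that $p \parallel q' \in \pom_{sp}$ when $p, q' \in \pom_{sp}$; the $\id_{sp}$-version of (\ref{id.union.eqn}) holds since $\cap$ distributes over $\cup$.

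The remaining and main identity is (\ref{odot.eqn}). For $\supseteq$, if $q \in r_1 \odot r_2$ with $r_1$ a series-parallel ideal of $p_1 \in L$ and $r_2$ a series-parallel ideal of $p_2 \in L'$, then $q$ is an ideal of $r_1 \parallel r_2$ (it retains the orders of $r_1$ and $r_2$ within their vertex sets and may only add cross relations), which is in turn an ideal of $p_1 \parallel p_2$, and $q \in \pom_{sp}$ by definition of $\odot$, so $q \in \id_{sp}(L \parallel L')$. For $\subseteq$, if $q \in \id_{sp}(L \parallel L')$ then $q$ is a series-parallel ideal of some $p_1 \parallel p_2$ with $p_1 \in L$, $p_2 \in L'$; its vertex set is the disjoint union of those of $p_1$ and $p_2$, the restrictions $r_1, r_2$ of $q$ to these sets are ideals of $p_1$ and $p_2$ since the order has only been strengthened, they are series-parallel by heredity, and $q \in r_1 \odot r_2$ by construction. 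I expect this last identity to be the main obstacle, the delicate point being to keep straight that the extra relations allowed in a member of $r_1 \odot r_2$ are exactly the cross relations an ideal of $r_1 \parallel r_2$ may carry, with heredity of N-freeness guaranteeing that the two restrictions are themselves series-parallel.
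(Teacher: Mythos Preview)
Your proposal is correct and follows essentially the same route as the paper: both arguments reduce the $\id_{sp}$ identities to the N-free characterisation of $\pom_{sp}$ (Theorem~\ref{2606y.monk.N-free.thm}) together with the heredity of N-freeness under induced substructures, and both handle (\ref{odot.eqn}) by restricting a series-parallel ideal of $p_1 \parallel p_2$ to the two vertex sets and invoking heredity. Your write-up is somewhat more explicit than the paper's (for instance, you spell out the closure-operator argument for (\ref{id.para.multiply}) and the reverse inclusion of (\ref{odot.eqn}), which the paper dismisses with ``clearly''), but the underlying ideas are the same.
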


\begin{proof}
(\ref{id.union.eqn}) and its $\id_{sp}$ counterpart  follow immediately from the definition of an ideal.
 (\ref{id.seq.multiply}) for $ \id$ is  straightforward. 
 To prove   $\id_{sp} (   LL') \subseteq  \id_{sp} ( L) \id_{sp} (  L')$, observe that any element of  $\id (   LL')$ has the form  $pp'$ with $ p \in  \id ( L)$, $ p' \in \id(L')$. If in addition $pp' \in \pom_{sp}$, then $pp'$ is N-free   by Theorem \ref{2606y.monk.N-free.thm}, hence $p,p'$ are also N-free and again by this Theorem,  $p, \, p' \in \pom_{sp}$ hold. The other inclusion is obvious, and hence 
  $\id_{sp} (   L^{i}) =  (\id_{sp}(L))^{i}$ for each $i \ge 0$ follows by induction on $i$. Thus 
  $  \id_{sp}( L^*) =( \id_{sp} ( L))^*$ follows from this  and 
 (\ref{id.union.eqn}) for $ \id_{sp}$, thus proving both versions of (\ref{id.seq.multiply}).  
 (\ref{id.para.multiply}) follows immediately from the definition of a (sp-)ideal and their closure properties.

We prove  (\ref{odot.eqn}) as follows. Suppose a pomset  $r \in \id_{sp} ( L \parallel L')$. Thus $r$ is representable by a labelled  partial order $(V \cup V' \le, \mu) $ for  pomsets $q,q'$ defined by disjoint vertex sets $V,V'$ that are ideals of pomsets  lying in $L,L'$ respectively.
By Theorem \ref{2606y.monk.N-free.thm} applied to $r$, the pomsets $q,q'$ are N-free; hence again by this Theorem, $q,q' \, \in  \id_{sp} (  L) $,  $\id_{sp} (  L')$ respectively. Thus $r \in \id_{sp} (  L) \odot  \id_{sp} ( L')$. We have shown that $  \id_{sp} ( L \parallel L')  \subseteq  \id_{sp} (  L) \odot  \id_{sp} ( L')$, and clearly equality holds. 
\qed
\end{proof}

The set of pomset ideals is not a sub-bi-Kleene algebra of the set of pomset languages, since 
 if the commutative Kleene operations are defined as given by Proposition \ref{prop.powerset.kleene.natural}, then the parallel product of two pomset ideals is not usually an ideal; an analogous statement holds for  
  sp-ideals. However, by taking the ideal closure, or sp-ideal closure, respectively, of the pomset languages defined in the usual way by $ \parallel$ and $^{(*)}$, we obtain bi-Kleene algebras of pomset ideals and sp-pomset ideals. 

\begin{thm}[bi-Kleene algebras of pomset ideals and sp-ideals] \label{thm.ideal.biKleene.homo}
Let $ \Sigma$ be an alphabet. Then $ \id (2^{\pom( \Sigma)})$ is a bi-Kleene algebra provided that the Kleene operations $ 0,1, +, \cdot, ^*$ are interpreted as indicated in Proposition \ref{prop.powerset.kleene.natural} and the commutative Kleene operations $ \parallel, \, ^{(*)}$ are interpreted as 
\begin{equation}
 (I,I') \mapsto \id(I \parallel I') \,   
\textit{ and } \, I \mapsto \cup_{j \ge 0}  \id (I^{(j)}  ) \label{id.interp.eqn}
\end{equation}
respectively.

Furthermore, the set   $ \id_{sp} (2^{\pom( \Sigma)})$ of sp-ideals with labels in $\Sigma$  is a bi-Kleene algebra provided that the Kleene operations $ 0,1, +, \cdot, ^*$ are interpreted as indicated in Proposition \ref{prop.powerset.kleene.natural},  and the commutative Kleene operations $ \parallel, \, ^{(*)}$ are interpreted as 
\begin{equation}
 (I,I') \mapsto \id_{sp}(I \parallel I') \,   
\textit{ and } \,   I \mapsto \cup_{j \ge 0}  \id_{sp} (I^{(j)}  )  \label{id.sp.interp.eqn}
\end{equation}
respectively. 

Lastly, the function 
$$  \id_{sp} (2^{\pom( \Sigma)}) \to  \id (2^{\pom( \Sigma)}) $$
defined by 
\begin{equation}
L \mapsto \id(L) \label{id.func.eqn}
\end{equation}
is an injective bi-Kleene homomorphism.
\end{thm}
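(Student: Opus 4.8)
The plan is to establish the three assertions in turn; Lemma~\ref{oslash.852t.para.induct.lem} supplies essentially all of the algebra, and throughout one uses that in a power-set algebra $+$ is union, so the order $\le$ of (\ref{prec.abbrev.eqn}) is just $\subseteq$. \textbf{Step 1 (Kleene algebra under $0,1,+,\cdot,{}^{*}$).} First I would check that $\id(2^{\pom(\Sigma)})$ is closed under these five operations: $\emptyset=\id(\emptyset)$ and $\{1\}=\id(\{1\})$ are ideals; closure under union is (\ref{id.union.eqn}); and if $L=\id(L)$, $L'=\id(L')$ then $LL'=\id(L)\id(L')=\id(LL')$ and $L^{*}=(\id(L))^{*}=\id(L^{*})$ by (\ref{id.seq.multiply}). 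Hence $\id(2^{\pom(\Sigma)})$ is a subalgebra of the Kleene algebra $2^{\pom(\Sigma)}$, and since the Kleene axioms (\ref{dioid.eqn})--(\ref{induct.eqn}) are universal Horn sentences they pass to substructures, so $\id(2^{\pom(\Sigma)})$ is a Kleene algebra. The same computations, using the $\id_{sp}$-versions of these identities (legitimate since every sp-ideal lies in $\pom_{sp}$), show $\id_{sp}(2^{\pom(\Sigma)})$ is a subalgebra of $2^{\pom_{sp}(\Sigma)}$, itself a Kleene algebra by Proposition~\ref{prop.powerset.kleene.natural} since $(\pom_{sp}(\Sigma),\cdot,1)$ is a monoid.

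\textbf{Step 2 (the modified commutative operations).} Here a subalgebra argument is unavailable, since the operation $\otimes$ defined by $I\otimes I':=\id(I\parallel I')$ is not a restriction of $\parallel$, so I would verify the commutative Kleene axioms directly for the operations in (\ref{id.interp.eqn}). First, $\id(I\parallel I')$ and $\bigcup_{j\ge0}\id(I^{(j)})$ are ideals (using (\ref{id.union.eqn}) for the latter), so these operations do map into $\id(2^{\pom(\Sigma)})$. Commutativity of $\otimes$ is inherited from $\parallel$, $\{1\}$ is its identity because $\id(I\parallel\{1\})=\id(I)=I$, associativity reduces after two applications of (\ref{id.para.multiply}) to associativity of $\parallel$ on languages, and idempotency and both distributive laws follow from (\ref{id.union.eqn}) and the corresponding facts for $\parallel$. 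Next, repeated use of (\ref{id.para.multiply}) shows $\id(I^{(j)})$ is the $j$-fold $\otimes$-power of $I$, so $\bigcup_{j\ge0}\id(I^{(j)})$ really is the sum of $\otimes$-powers; the unfold law $\{1\}+I\otimes\bigcup_{j}\id(I^{(j)})=\bigcup_{j}\id(I^{(j)})$ is then a short computation from (\ref{id.union.eqn}), (\ref{id.para.multiply}) and $\id(I^{(0)})=\{1\}$. The two induction axioms coincide by commutativity; to prove $I\otimes J\subseteq J\Rightarrow(\bigcup_{j}\id(I^{(j)}))\otimes J\subseteq J$ I would distribute $\id$ over the union and then show $\id(I^{(j)}\parallel J)\subseteq J$ by a nested induction on $j$, with base case $\id(I^{(0)}\parallel J)=\id(J)=J$ and induction step rewriting $\id(I\parallel I^{(j)}\parallel J)$ as $\id(I\parallel\id(I^{(j)}\parallel J))\subseteq\id(I\parallel J)\subseteq J$ via (\ref{id.para.multiply}) and monotonicity. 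This makes $\id(2^{\pom(\Sigma)})$ a bi-Kleene algebra; for $\id_{sp}(2^{\pom(\Sigma)})$ the argument is identical with $\id$ replaced by $\id_{sp}$ (valid since one stays within $\pom_{sp}$), or alternatively, by (\ref{odot.eqn}), the modified parallel product of sp-ideals is simply $\odot$, which is already associative and commutative.

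\textbf{Step 3 (the homomorphism $L\mapsto\id(L)$).} Write $\Phi(L)=\id(L)$. It maps into $\id(2^{\pom(\Sigma)})$ trivially, and it is injective since for sp-ideals $L,M$ one recovers $L=\id(L)\cap\pom_{sp}$ from $\id(L)$. That $\Phi$ sends $0,1$ to $0,1$ and commutes with $+,\cdot,{}^{*}$ is precisely (\ref{id.union.eqn}) and (\ref{id.seq.multiply}). For the parallel clauses I must show, for sp-ideals $L,M$, that $\id(\id_{sp}(L\parallel M))=\id(\id(L)\parallel\id(M))$ and $\id(\bigcup_{j}\id_{sp}(L^{(j)}))=\bigcup_{j}\id(\id(L)^{(j)})$. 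The right-hand sides reduce to $\id(L\parallel M)$ and $\bigcup_{j}\id(L^{(j)})$ by repeated use of (\ref{id.para.multiply}); for the left-hand sides the point is that $L,M\subseteq\pom_{sp}$ forces $L\parallel M\subseteq\pom_{sp}$ and $L^{(j)}\subseteq\pom_{sp}$, so $L\parallel M\subseteq\id_{sp}(L\parallel M)\subseteq\id(L\parallel M)$ while every element of $\id(L\parallel M)$ is an ideal of some element of $L\parallel M$, whence $\id(\id_{sp}(L\parallel M))=\id(L\parallel M)$, and likewise for the powers. Thus $\Phi$ respects all bi-Kleene operations. The only steps that are not pure bookkeeping are the induction axioms for the modified parallel star in Step~2 (needing the nested induction on parallel powers) and the observation in Step~3 that applying $\id$ collapses the gap between $\id_{sp}$ and $\id$ in the parallel clauses; both rest on identity (\ref{id.para.multiply}), which is the real engine of the theorem.
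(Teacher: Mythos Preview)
Your proposal is correct and follows essentially the same architecture as the paper's proof: a subalgebra argument for the sequential Kleene structure, direct verification of the parallel axioms via Lemma~\ref{oslash.852t.para.induct.lem} (with (\ref{id.para.multiply}) doing the heavy lifting), and the homomorphism and injectivity handled exactly as the paper does.

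The one point of genuine divergence is the verification of the parallel induction axiom. You prove $\id(I^{(j)}\parallel J)\subseteq J$ by an explicit induction on $j$, stepping via (\ref{id.para.multiply}). The paper instead exploits the ambient commutative Kleene structure on $2^{\pom(\Sigma)}$: from $\id(I\parallel J)\subseteq J$ one gets $I\parallel J\subseteq J$, then applies the ordinary induction axiom in $2^{\pom(\Sigma)}$ to obtain $I^{(*)}\parallel J\subseteq J=\id(J)$, and finally takes $\id$ of both sides. The paper's route is shorter and avoids re-proving a special case of something already known, while your route is more self-contained and makes the role of (\ref{id.para.multiply}) fully explicit. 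Both are valid; this is a local difference in an otherwise identical proof.
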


\begin{proof}
We first consider $ \id (2^{\pom( \Sigma)})$. Since this set is closed under the Kleene operations $ 0,1, +, \cdot, ^*$, it is a Kleene subalgebra of $2^{\pom( \Sigma)}$. Thus it remains to prove the validity of the bi-Kleene axioms mentioning $ \parallel $ and $ ^{(*)}$.  Associativity of $ \parallel$ follows since for $ I', I'', I''' \in   \id (2^{\pom( \Sigma)})$, 
$$ \id(I' \parallel \id(I'' \parallel I''')) =  \id(I' \parallel I'' \parallel I''') =  \id( \id(I' \parallel I'') \parallel I'''     ) $$ by (\ref{id.para.multiply}) in Lemma \ref{oslash.852t.para.induct.lem}.   The remaining axioms involving only $ 0,1, +, \parallel$ are clear.
The identities in (\ref{eqn.kleene.star.add}) for $ \parallel, ^{(*)}$ follow since for $I \in   \id (2^{\pom( \Sigma)})$,
\begin{align*}
 &   \cup_{j \ge 0}  \id (I^{(j)}  ) && = \,  \cup_{j \ge 1}  \id (I^{(j)}  ) \, \cup \{ 1\} 
   \\
 &  = \, 
 \cup_{j \ge 0} \id(I \parallel  I^{(j)}  ) \, \cup \{ 1\}  
  && =  \,
 \cup_{j \ge 0} \id(I \parallel  \id (I^{(j)}  )) \, \cup \{ 1\} \text{ by (\ref{id.para.multiply})}
 \\  
 & = \,  \id \big( \cup_{j \ge 0}(I \parallel  
  \id ( I^{(j)}  ) ) \big) \cup \{ 1 \}  && \text{ by (\ref{id.union.eqn})}
  \\
&  = \, \id \big(I \parallel  \cup_{j \ge 0} 
  \id (I^{(j)}  ) \big)  \cup \{ 1 \}
 &&      \text{ since  $\parallel$ distributes over unions}   \\
 & =\,  \id(I \parallel  \cup_{j \ge 0} 
  I^{(j)}   ) \, \cup \{ 1 \}   &&  \text{ by (\ref{id.union.eqn}) and (\ref{id.para.multiply}) }
    \\
 &   = \,  \id(I \parallel 
  I^{(*)}   )  \, \cup \{ 1 \}.
  \end{align*} 
  The induction axiom $ s \parallel t \le t \Rightarrow s^{(*)} \parallel t \le t$ follows since for  $I, J \in  \id (2^{\pom( \Sigma)})$  
  \begin{align*}
  &  \id(I \parallel J)  \subseteq  J  \,     \Rightarrow \, I \parallel J  \subseteq  J \, \Rightarrow \,
   I^{(*)} \parallel J \, \subseteq  J = \id(J) \\
   & \Rightarrow \, \id (  I^{(*)}   \parallel J )  \subseteq  J  .
\end{align*}   
 The corresponding result for  $ \id_{sp} (2^{\pom( \Sigma)})$ is proved analogously. 
 We now show that the  function given by (\ref{id.func.eqn})   is a bi-Kleene homomorphism. For  the Kleene operation $  +$  this follows from (\ref{id.union.eqn}). For $ \parallel$, observe that
  $ \id( \id_{sp} (I \parallel I') )= \id  (I \parallel I' )=   
   \id ( \id(I) \parallel \id(I'))$ using (\ref{id.para.multiply}),  and the case of $ ^{(*)}$ then follows from (\ref{id.union.eqn}). 
  The cases of $\cdot$  and 
  $ ^*$  are given by (\ref{id.seq.multiply}). 
 To show injectivity, observe that there is a partial inverse function 
  $$ L \mapsto \id_{sp}(L), $$
  since if $I \in \id_{sp} (2^{\pom( \Sigma)})$ then $ \id_{sp}( \id (I))= \id(I) \cap \pom_{sp}(\Sigma) =\id_{sp}(I) =I$ holds. \qed
\end{proof}

Restricting the homomorphism from $\id_{sp}(2^{\pom(\Sigma)})$ to $\id(2^{\pom(\Sigma)})$ given by (\ref{id.func.eqn}) to the subalgebra of $\id_{sp}(2^{\pom(\Sigma)})$ generated by the set of singleton pomsets $ \big\{ \{ \sigma\} \big\vert \, \sigma \in \Sigma \big\}$ gives an isomorphism onto the subalgebra of $\id(2^{\pom(\Sigma)})$ generated by this set. 

\begin{thm} \label{thm.id.subalgbra.iso.bi-Kleene}
Let $ \Sigma$ be an alphabet. Then the bi-Kleene algebras 
\begin{equation}
\big\{ \id_{sp}( \lb t \rb ) \big\vert \, t \in T_{bi-KA}(\Sigma) \big\} \text{ and }  
\big\{ \id( \lb t \rb ) \big\vert \, t \in T_{bi-KA}(\Sigma) \big\} ,  \label{eqn.Sigma.generate.biKleene}
\end{equation}
with the operations $ \parallel, ^{(*)}$ interpreted as given in (\ref{id.sp.interp.eqn}) and (\ref{id.interp.eqn}) respectively, and the Kleene operations $ 0,1, +, \cdot, ^*$ interpreted as given in Proposition \ref{prop.powerset.kleene.natural}, are isomorphic; an isomorphism is given by 
\begin{equation*}
\id_{sp}( \lb t \rb ) \mapsto \id( \lb t \rb ).
\end{equation*} 
\end{thm}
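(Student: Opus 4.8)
The plan is to obtain this as a consequence of Theorem~\ref{thm.ideal.biKleene.homo}, where it was shown that $h\colon L\mapsto\id(L)$ is an \emph{injective} bi-Kleene homomorphism from $\id_{sp}(2^{\pom(\Sigma)})$ to $\id(2^{\pom(\Sigma)})$. Write $\phi\colon T_{bi-KA}(\Sigma)\to\id_{sp}(2^{\pom(\Sigma)})$ and $\psi\colon T_{bi-KA}(\Sigma)\to\id(2^{\pom(\Sigma)})$ for the evaluation homomorphisms determined by $\phi(\sigma)=\psi(\sigma)=\{\sigma\}$ for $\sigma\in\Sigma$, with $\parallel$ and $^{(*)}$ interpreted as in (\ref{id.sp.interp.eqn}) and (\ref{id.interp.eqn}) respectively. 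Being images of homomorphisms that contain the singleton pomsets, $\operatorname{im}(\phi)$ is precisely the subalgebra of $\id_{sp}(2^{\pom(\Sigma)})$ generated by the singletons, and likewise $\operatorname{im}(\psi)$ inside $\id(2^{\pom(\Sigma)})$. So the first task is to identify these images with the two algebras displayed in (\ref{eqn.Sigma.generate.biKleene}).

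To that end I would prove, by induction on the structure of $t\in T_{bi-KA}(\Sigma)$, that $\psi(t)=\id(\lb t\rb)$ and $\phi(t)=\id_{sp}(\lb t\rb)$. The base cases $t\in\Sigma\cup\{0,1\}$ and the case $t=t_1+t_2$ are immediate; the cases $t=t_1 t_2$ and $t=s^*$ follow from (\ref{id.seq.multiply}); and the cases $t=t_1\parallel t_2$ and $t=s^{(*)}$ follow from (\ref{id.para.multiply}) together with (\ref{id.union.eqn}), after using commutativity and associativity of $\parallel$ to move the $\id$-closures onto the factors so that they match the ``modified'' interpretations $(I,I')\mapsto\id(I\parallel I')$ and $I\mapsto\cup_{j}\id(I^{(j)})$. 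For the $\phi$-version of the two parallel cases one uses the $\id_{sp}$-analogues of the equations in Lemma~\ref{oslash.852t.para.induct.lem} (and, if convenient, (\ref{odot.eqn})); these apply because every rational pomset language satisfies $\lb s\rb\subseteq\pom_{sp}$ — each of the bi-Kleene operations preserves series-parallelness, starting from singletons — and $\pom_{sp}$ is closed under $\parallel$, so all the languages that arise in the induction are subsets of $\pom_{sp}$. This yields $\operatorname{im}(\phi)=\{\,\id_{sp}(\lb t\rb):t\in T_{bi-KA}(\Sigma)\,\}$ and $\operatorname{im}(\psi)=\{\,\id(\lb t\rb):t\in T_{bi-KA}(\Sigma)\,\}$; in particular both of these are bi-Kleene subalgebras, so the objects in (\ref{eqn.Sigma.generate.biKleene}) really are bi-Kleene algebras and the theorem statement is meaningful.

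Since $T_{bi-KA}(\Sigma)$ is the free algebra over $\Sigma$ in the bi-Kleene signature, a homomorphism out of it is determined by the images of $\Sigma$; as $h(\{\sigma\})=\id(\{\sigma\})=\{\sigma\}$ (a one-vertex pomset admits only the trivial order), we get $h\circ\phi=\psi$. Consequently $h$ maps $\operatorname{im}(\phi)$ onto $\operatorname{im}(\psi)$ by $\id_{sp}(\lb t\rb)=\phi(t)\mapsto h(\phi(t))=\psi(t)=\id(\lb t\rb)$, which is exactly the assignment in the statement; this map is a homomorphism because it is the restriction of the homomorphism $h$, it is surjective onto $\operatorname{im}(\psi)$ by construction, and it is injective because $h$ is injective by Theorem~\ref{thm.ideal.biKleene.homo}. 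A bijective homomorphism is an isomorphism, so the theorem follows.

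I expect the only mildly delicate point to be the parallel cases of the two inductions: keeping track of where the $\id$- or $\id_{sp}$-closure sits when comparing the standard pointwise $\parallel$ used inside $\lb t_1\parallel t_2\rb$ with the closure-of-$\parallel$ interpretation in (\ref{id.sp.interp.eqn})/(\ref{id.interp.eqn}), and similarly for $^{(*)}$. This is routine rearrangement using Lemma~\ref{oslash.852t.para.induct.lem} and the fact $\lb s\rb\subseteq\pom_{sp}$, and requires no new idea; one could equally bypass $\phi$ and $\psi$ and instead verify directly that $\id(\lb t\rb)\star\id(\lb t'\rb)=\id(\lb t\star t'\rb)$, together with its $\id_{sp}$-analogue, for each bi-Kleene operation $\star$, which amounts to the same computation.
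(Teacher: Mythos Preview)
Your proposal is correct and follows the same route as the paper: the paper's proof is the single line ``Immediate from Theorem~\ref{thm.ideal.biKleene.homo}'', and you have simply unpacked what ``immediate'' means, namely that the two displayed sets are the subalgebras generated by the singletons (which you verify by the structural induction showing $\phi(t)=\id_{sp}(\lb t\rb)$ and $\psi(t)=\id(\lb t\rb)$), and that restricting the injective homomorphism $L\mapsto\id(L)$ of Theorem~\ref{thm.ideal.biKleene.homo} to the first subalgebra gives a bijection onto the second. The paper signals this intent in the sentence preceding the theorem statement, so there is no difference in strategy, only in the level of detail.
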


\begin{proof}
Immediate from Theorem \ref{thm.ideal.biKleene.homo}. \qed
\end{proof}

\begin{prop} \label{prop.id.exch.satisfy}
The classes of pomset ideals and sp-ideals, 
with the operation $ \parallel$ interpreted as in  (\ref{id.interp.eqn}) and (\ref{id.sp.interp.eqn}) respectively, satisfy the exchange law (\ref{exch.eqn}).
\end{prop}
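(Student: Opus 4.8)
The plan is to establish the exchange law first at the level of individual pomsets and then lift it to ideal, and respectively sp-ideal, languages using the algebraic properties of $\id$ recorded in Lemma \ref{oslash.852t.para.induct.lem}. The pomset-level claim is: for any pomsets $a,b,c,d$, the pomset $(a \parallel b)\cdot(c \parallel d)$ is an ideal of $(b \cdot d) \parallel (a \cdot c)$. To see this I would choose representatives so that both pomsets have the common vertex set $V_a \uplus V_b \uplus V_c \uplus V_d$ with the common labelling; then the partial order of $(a \parallel b)\cdot(c \parallel d)$ is the union of the internal orders of $a,b,c,d$ together with all pairs in $(V_a \cup V_b) \times (V_c \cup V_d)$, whereas the partial order of $(b \cdot d)\parallel(a \cdot c)$ is the union of those same internal orders together with the pairs in $V_b \times V_d$ and in $V_a \times V_c$. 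Since $V_b \times V_d$ and $V_a \times V_c$ are both contained in $(V_a \cup V_b)\times(V_c \cup V_d)$, the order of $(b\cdot d)\parallel(a\cdot c)$ is contained in that of $(a\parallel b)\cdot(c\parallel d)$; i.e. the latter has an order at least as strict on the same labelled vertex set, which is exactly the assertion that $(a\parallel b)\cdot(c\parallel d)$ is an ideal of $(b\cdot d)\parallel(a\cdot c)$. This one-line order-relation comparison is the only geometric content of the proposition.

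Next I would unwind what the exchange law (\ref{exch.eqn}) asserts when the variables $u,v,x,y$ are instantiated by pomset ideals $I_u,I_v,I_x,I_y$. Recalling the definition of $\le$ from (\ref{prec.abbrev.eqn}), the interpretations in (\ref{id.interp.eqn}), and that the product of two ideals is again an ideal by (\ref{id.seq.multiply}), the left-hand side of (\ref{exch.eqn}) evaluates to the ideal $\id(I_u \parallel I_v)\cdot\id(I_x \parallel I_y)$ and the right-hand side to $\id\big((I_v \cdot I_y)\parallel(I_u \cdot I_x)\big)$; since the former is already an ideal, the required inequality amounts to the set inclusion $\id(I_u \parallel I_v)\cdot\id(I_x \parallel I_y) \subseteq \id\big((I_v \cdot I_y)\parallel(I_u \cdot I_x)\big)$. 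I would prove this by taking a typical element $p\cdot q$ of the left-hand side, with $p$ an ideal of some $a\parallel b$ ($a\in I_u$, $b\in I_v$) and $q$ an ideal of some $c\parallel d$ ($c\in I_x$, $d\in I_y$). Applying $\id(LL')=\id(L)\id(L')$ from (\ref{id.seq.multiply}) to singletons shows that $p\cdot q$ is an ideal of $(a\parallel b)\cdot(c\parallel d)$; by the pomset-level claim this is in turn an ideal of $(b\cdot d)\parallel(a\cdot c)$, and transitivity of the relation "is an ideal of" (equivalently, the fact that $\id$ is a closure operator) shows that $p\cdot q$ is an ideal of $(b\cdot d)\parallel(a\cdot c)$. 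Since $b\cdot d\in I_v\cdot I_y$ and $a\cdot c\in I_u\cdot I_x$, the pomset $(b\cdot d)\parallel(a\cdot c)$ lies in $(I_v\cdot I_y)\parallel(I_u\cdot I_x)$, so $p\cdot q\in\id\big((I_v\cdot I_y)\parallel(I_u\cdot I_x)\big)$, as required.

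Finally, the sp-ideal case goes through verbatim with $\id$ replaced by $\id_{sp}$: the additional points are that $\pom_{sp}$ is closed under both $\cdot$ and $\parallel$ (so $p\cdot q$, $b\cdot d$, $a\cdot c$ and $(b\cdot d)\parallel(a\cdot c)$ are all series-parallel), that the product of two sp-ideals is an sp-ideal, and that $\id_{sp}(L)=\id(L)\cap\pom_{sp}$, so that membership in an sp-ideal closure reduces to membership in the ordinary ideal closure together with being series-parallel; these let the argument of the previous paragraph be repeated inside $\pom_{sp}$. I expect the only real care required is bookkeeping: keeping straight that $\cdot$ is interpreted ordinarily while $\parallel$ is the modified operation of (\ref{id.interp.eqn}) or (\ref{id.sp.interp.eqn}), and noting that because the left-hand side of (\ref{exch.eqn}) is already closed, the "$+$" in the definition of $\le$ collapses to a plain inclusion. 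There is no genuine obstacle beyond the order-relation comparison of the first paragraph.
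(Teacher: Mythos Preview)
Your proposal is correct and follows essentially the same approach as the paper: establish the pomset-level fact that $(a\parallel b)\cdot(c\parallel d)$ is an ideal of $(b\cdot d)\parallel(a\cdot c)$, then lift this to ideal languages. The only cosmetic difference is that the paper first shows the inclusion $(u\parallel v)\cdot(x\parallel y)\subseteq\id(v\cdot y\parallel u\cdot x)$ for the \emph{raw} parallel products and then applies $\id$ to the left-hand side via (\ref{id.seq.multiply}), whereas you work directly with a generic element $p\cdot q$ of $\id(I_u\parallel I_v)\cdot\id(I_x\parallel I_y)$; both routes amount to the same thing.
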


\begin{proof}
We  consider the class of pomset ideals; the proof for the case of sp-ideals is analogous, with $ \id$ replaced by $ \id_{sp}$.  
Let $u,v,x,y \in \id(2^{\pom})$ and suppose $p_u \in u$ and similarly for $ p_v, p_x, p_y$. 
Then the pomset 
$(p_u \parallel p_v)\cdot ( p_x \parallel p_y) \in \id ( p_v \cdot p_y   \parallel p_u \cdot p_x)$ holds, 
and hence 
 \begin{equation*}
 (u \parallel v)\cdot ( x \parallel y) \subseteq   \id( v \cdot y   \parallel u \cdot x )  \label{eqn.id.set.exch} 
\end{equation*}
 follows.
Applying $ \id $ to the left side by using (\ref{id.seq.multiply}) gives $  \id(u \parallel v)\cdot \id( x \parallel y) \subseteq   \id( v \cdot y   \parallel u \cdot x )   $ and thus  (\ref{exch.eqn}) holds. \qed
\end{proof}

\begin{defn}[The  $=_{EX} $ relation] \label{defn.term.EX.reln} \rm
Let $t,t' \in T_{bw-Rat}(\Sigma)$ for an alphabet $ \Sigma$. We say that   $ t =_{EX}  t'$ if $t=t'$ holds in every bw-rational algebra in which the exchange law (\ref{exch.eqn}) also holds.
  We also  define the partial ordering  $ \le_{EX}$ by analogy with 
  (\ref{prec.abbrev.eqn}). 
  
  In view of Theorem \ref{thm.bwrat.equal.imply.bw-equiv}, we will broaden the use of the relations $ \le_{EX}$ and $ =_{EX}$. 
 Clearly $ \le_{bw-Rat} \, \subseteq \, \le_{EX}$ holds, and we exploit this by allowing bw-rational pomset languages to occur in the arguments of $ \le_{EX}$ and $ =_{EX}$; for example, $L  =_{EX} t'$ for term $t'$ and language $L$ if   $ t =_{EX} t'$ holds for at least one (and hence every) term $t \in T_{bw-Rat}(\Sigma)$ satisfying   $ L = \lb t \rb $. 
\end{defn}

\subsection{Summary of proof of our main theorems on pomset ideals}

 \label{subsect.summary.proof.ideal.thms}

The reader is advised to study the proof of Theorem \ref{3960h.mice.rational.exchange.derive.thm}, our last main theorem,  in order to have an insight into the purpose of the lemmas and theorems preceding  it. This proof is straightforward if it is assumed that for any bw-rational term  $t $, the language  
 $ \id_{sp}(\lb t\rb)$ is bw-rational and satisfies 
$  \id_{sp} (\lb t\rb) \le_{EX}  t $. This is precisely the content of Theorem \ref{3696r.trex.ideal.rational.thm}, which is proved by induction on the structure of $t$. The only non-trivial case in this proof is that where $t$ is a parallel product; $t= r_1 \parallel r_2 $, which implies $ \id_{sp}(\lb t\rb)=  \id_{sp}(\lb r_1\rb) \odot  \id_{sp}(\lb r_2\rb)$ by (\ref{odot.eqn}) in Lemma \ref{oslash.852t.para.induct.lem}. Thus it is necessary to prove  Theorem \ref{3696r.trex.ideal.rational.thm} for the special case that $\lb t \rb$ is a $ \odot$-product 
of two  bw-rational ideal  languages. This is implied by Lemma \ref{thm.odot.preserve.bwRat}, which states that for bw-rational terms $r_1,r_2$, $\lb  r_1 \rb  \odot \lb   r_2  \rb \le_{EX} r_1 \parallel r_2 $ holds.  Its proof is by induction on the sum of the widths of $r_1 $ and $r_2$ and entails proving that for each $k \ge 1$,  $  \big( \lb  r_1 \rb  \odot \lb   r_2  \rb   \big)\cap \para_k \le r_1 \parallel r_2$ holds.  The cases $ k \ge 2$ can be inferred from the case $k =1 $ using the inductive hypothesis and Corollary \ref{2058f.turkey.fun.cor}. The case $k=1$ follows from Corollary \ref{cor.odot.into.reg.over.para}, which shows that for bw-rational terms $L_1, L_2$, the language $(L_1 \odot L_2) \cap \seq$ is definable by a regular term with $\odot$-product languages substituted for its ground terms.

\section{Two automata-theoretic lemmas}

\label{sect.auto.lems}

In order to prove our main theorems, we need the following automata-theoretic results.

\begin{lem} \label{4801j.santa.auto.lem}
Let $ \Gamma$ be a finite  alphabet and let $  L$ be a regular language over $ \Gamma$.   
Let $ \approx$ be a congruence  of finite index of the monoid $( \Gamma^*  , 1, \cdot)$ and assume that $L$ is a union of some of the  $ \approx$-congruence classes.  Define 
a finite set $\Delta$ and a function $ \theta : \Delta \to \Gamma^*/ \approx $. 
Define  the set 
 \begin{align*} V=  & \mathlarger{\{} \\
  &  \delta_{1}  \ldots   \delta_{b}   \vert \, b \ge 0, \\ 
                         &  \text{each } \delta_{i} \in \Delta , \\
   &     \theta(\delta_{1}) \ldots  \theta(\delta_{b})  \subseteq   L \\
    &  \mathlarger{\}}. 
    \end{align*} 
 Then $V$ is a regular language over $ \Delta$.
\end{lem}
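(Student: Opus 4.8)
The plan is to build a finite automaton for $V$ whose states record, for a prefix $\delta_1\ldots\delta_k$ already read, the "residual obligation" imposed on the remaining letters by the membership condition $\theta(\delta_1)\ldots\theta(\delta_b)\subseteq L$. The key observation is that the condition $x_1\cdots x_k\cdot w \subseteq L$ for all $w$ in a future language depends on $x_1\cdots x_k$ only through a finite amount of information, because $\approx$ has finite index and $L$ is a union of $\approx$-classes. Concretely, for a word $u=\delta_1\ldots\delta_k\in\Delta^*$ consider the set
\[
  \Phi(u) \;=\; \bigl\{\, [y]_{\approx} \;\bigm|\; \exists\, x_i\in\theta(\delta_i)\text{ for }i\le k \text{ with } [x_1\cdots x_k]_\approx=[y]_\approx \,\bigr\}\;\subseteq\;\Gamma^*/\!\approx,
\]
together with a Boolean flag recording whether all choices made so far have stayed "safe" (i.e. whether $\theta(\delta_1)\cdots\theta(\delta_j)\subseteq L$ has already failed for some $j\le k$ — though actually the monotonicity of the condition under extension makes this flag redundant once one sets up the acceptance correctly). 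Since $\Gamma^*/\!\approx$ is finite, $\Phi(u)$ ranges over a finite set $2^{\Gamma^*/\approx}$, so $\Phi$ factors through a finite-state process.

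The steps I would carry out: (1) Define the transition structure on $Q = 2^{\Gamma^*/\approx}$ by $q \xrightarrow{\delta} q'$ where $q' = \{\,[x\gamma]_\approx : [x]_\approx\in q,\ \gamma\in\theta(\delta)\,\}$ — well-defined because $\approx$ is a monoid congruence, and computable because each $\theta(\delta)$ is a union of finitely many $\approx$-classes and multiplication descends to $\Gamma^*/\!\approx$. (2) Take the start state to be $\{[1]_\approx\}$, and observe by induction on $k$ that after reading $\delta_1\ldots\delta_k$ the automaton is in state $\Phi(\delta_1\ldots\delta_k)$, which equals $\{\,[x_1\cdots x_k]_\approx : x_i\in\theta(\delta_i)\,\}$. (3) Identify the accepting states: $V$ should contain $\delta_1\ldots\delta_b$ iff $\theta(\delta_1)\cdots\theta(\delta_b)\subseteq L$, i.e. iff every element of $\theta(\delta_1)\cdots\theta(\delta_b)$ lies in $L$; since $L$ is a union of $\approx$-classes, this holds iff $\Phi(\delta_1\ldots\delta_b)\subseteq L/\!\approx$ (the set of classes contained in $L$). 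So the accepting set is $F = \{\, q\in Q : q\subseteq L/\!\approx \,\}$, which is a well-defined finite set of states. (4) Conclude that $V$ is exactly the language accepted by the finite automaton $(Q,\Delta,\{[1]_\approx\},F,\to)$, hence regular over $\Delta$.

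The main point requiring care — rather than a genuine obstacle — is verifying that the condition being checked is correctly captured by "$\Phi(u)\subseteq L/\!\approx$" and does \emph{not} require separately tracking whether some \emph{proper} prefix already violated the inclusion. Here one uses that $\theta(\delta_1)\cdots\theta(\delta_j)\subseteq L$ does \emph{not} in general imply $\theta(\delta_1)\cdots\theta(\delta_{j+1})\subseteq L$, so membership in $V$ is genuinely the condition on the full word, not a prefix-closed one; but that is fine, since the automaton's acceptance condition is evaluated only at the end of the input. One should also note the degenerate case $b=0$: the empty word $1\in\Delta^*$ lies in $V$ because the empty product $\theta(\delta_1)\cdots\theta(\delta_b)$ is $\{1\}$, and $\{1\}\subseteq L$ iff $[1]_\approx\subseteq L$, which is exactly the condition $\{[1]_\approx\}\in F$ for the start state — consistent with the construction. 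Everything else is the routine observation that a language accepted by a finite automaton is regular.
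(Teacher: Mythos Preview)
Your proposal is correct and takes essentially the same route as the paper: construct a finite automaton over $\Delta$ whose state records the $\approx$-information of the running product $\theta(\delta_1)\cdots\theta(\delta_k)$, accepting exactly when that information lands inside $L$. The only difference is that you take state set $2^{\Gamma^*/\!\approx}$, whereas the paper uses $\Gamma^*/\!\approx$ itself; this economy is available because $\theta(\delta)$ is a \emph{single} $\approx$-class (not ``a union of finitely many $\approx$-classes'' as you wrote), so the product $\theta(\delta_1)\cdots\theta(\delta_k)$ is always contained in a single class and every reachable state in your automaton is a singleton.
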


\begin{proof}
We define a deterministic finite state automaton $B$ as follows. $B$ has state set  $\Gamma^*/ \approx $. Its initial state is the $ \approx$-class containing $1$,  and its final states are those whose union is $L$. For each $ \delta
 \in \Delta$, $B$ has  a binary transition relation  $\underset{\delta }{\leadsto}$ on $\Gamma^*/ \approx $ as follows; 
for  $S \in \Gamma^*/ \approx $, 
we define $ S \underset{\delta }{\leadsto} S'$, where $S \theta( \delta) \subseteq S'$. 
Since $ \approx$ is a congruence,  the class  $S'$ exists and is uniquely determined by $S$ and $ \delta$. 

Let $S_0$ be the initial state of $B$, so $ 1 \in S_0$.  
Given $ \delta_1, \ldots, \delta_b \in \Delta$ for $ b \ge 0$,  there are states $S_1, \ldots , S_b$ of $B$ such that 
$ S_0  \underset{\delta_1 }{\leadsto} S_1  \underset{\delta_2 }{\leadsto} \ldots \underset{\delta_b }{\leadsto} S_b$ holds. Thus  $  S_b \subseteq L \iff   S_0 \, \theta(\delta_{1}) \ldots  \theta(\delta_{b})  \subseteq   L    \iff  \theta(\delta_{1}) \ldots  \theta(\delta_{b})  \subseteq   L   \iff    \delta_1 \ldots \delta_b \in V $, proving that $B$  accepts $V$. 
\qed
\end{proof}

\begin{lem} \label{2960u.elk.cartesian.lang.lem}
Let $ \Gamma$ be a finite  alphabet and let $  L_1, L_2$ be regular languages over $ \Gamma$.   
Let $ \approx$ be a congruence  of finite index of the monoid $( \Gamma^*  , 1, \cdot)$ and assume that each  $L_i$ is a union of some of the  $ \approx$-congruence classes.  Define 
a finite set $\Delta$ and a function $ \theta : \Delta \to \Gamma^*/ \approx $,  
 and define the language 
 \begin{align*}
  U= \;\;\; & \mathlarger{\{} \\
  & ( \delta_{11},  \delta_{21} ) \ldots  ( \delta_{1b},   \delta_{2b} ) \vert \, b \ge 0, \\ 
&  \text{each } \delta_{ij} \in \Delta , \\
  &     \theta(\delta_{i1}) \ldots  \theta(\delta_{ib})  \subseteq   L_i  \text{ for each } i =1,2 \\
   &  \mathlarger{\}} 
\end{align*}
over $ \Delta \times \Delta$. 
Then $U$ is regular.

\end{lem}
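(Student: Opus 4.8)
The plan is to run two copies of the automaton $B$ from the proof of Lemma~\ref{4801j.santa.auto.lem} side by side: one copy reads the first components of the letters of a word in $(\Delta \times \Delta)^*$, the other reads the second components. This is just the product automaton, and the verification is the product of the two verifications already carried out for Lemma~\ref{4801j.santa.auto.lem}.

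Concretely, I would build a deterministic finite automaton $C$ over the alphabet $\Delta \times \Delta$ with state set $(\Gamma^* / \approx) \times (\Gamma^* / \approx)$, which is finite since $\approx$ has finite index. Its initial state is $(S_0, S_0)$, where $S_0$ is the $\approx$-class of $1$, and its final states are the pairs $(T_1, T_2)$ with $T_1 \subseteq L_1$ and $T_2 \subseteq L_2$; this is meaningful because each $L_i$ is a union of $\approx$-classes. For a letter $(\delta_1, \delta_2) \in \Delta \times \Delta$ and a state $(S_1, S_2)$, the transition sends $(S_1, S_2)$ to $(S_1', S_2')$, where $S_i\, \theta(\delta_i) \subseteq S_i'$ for $i = 1, 2$; exactly as in Lemma~\ref{4801j.santa.auto.lem}, the classes $S_i'$ exist and are uniquely determined because $\approx$ is a congruence, so $C$ is well-defined and deterministic.

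The verification then proceeds componentwise as in Lemma~\ref{4801j.santa.auto.lem}: reading $(\delta_{11}, \delta_{21}) \ldots (\delta_{1b}, \delta_{2b})$ from the initial state leads to the state whose $i$-th component is the $\approx$-class of $\theta(\delta_{i1}) \ldots \theta(\delta_{ib})$, so the word is accepted iff $\theta(\delta_{i1}) \ldots \theta(\delta_{ib}) \subseteq L_i$ for both $i = 1, 2$, which is precisely the condition defining membership in $U$. Hence $U = L(C)$ is regular. Equivalently, one can avoid constructing $C$ explicitly: if $V_i$ is the regular language over $\Delta$ that Lemma~\ref{4801j.santa.auto.lem} associates with $L_i$, and $\pi_i : (\Delta \times \Delta)^* \to \Delta^*$ is the monoid homomorphism induced by the $i$-th coordinate projection, then $U = \pi_1^{-1}(V_1) \cap \pi_2^{-1}(V_2)$, which is regular by closure of the regular languages under inverse homomorphism and finite intersection. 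I do not anticipate any genuine difficulty here: the only point requiring care is the well-definedness of the transition relation, and that is inherited verbatim from Lemma~\ref{4801j.santa.auto.lem} through the congruence property of $\approx$.
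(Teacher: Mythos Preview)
Your proposal is correct and matches the paper's approach: the paper defines $V_j$ as the language over $\Delta\times\Delta$ obtained by imposing only the $i=j$ condition, notes $U=V_1\cap V_2$, and derives regularity of each $V_j$ from Lemma~\ref{4801j.santa.auto.lem} via closure under substitution---which is exactly your $U=\pi_1^{-1}(V_1)\cap\pi_2^{-1}(V_2)$ argument phrased with inverse homomorphisms instead. Your explicit product-automaton construction is simply a more concrete realisation of the same intersection, so there is nothing to add.
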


\begin{proof}
For each $j = 1,2$, let $V_j$ be the language defined as $U$ is, but satisfying only the condition  $\theta(\delta_{i1}) \ldots  \theta(\delta_{ib})  \subseteq   L_i $ for $i =j$.  Thus $U = V_1 \cap V_2$. It suffices thus to prove that each $V_j$ is regular, and this follows from Lemma  \ref{4801j.santa.auto.lem}, since  regularity is preserved by substitution. 
\qed
\end{proof}

\section{Expressing the sequential sublanguage of a $\odot$-product as a regular function of `smaller'  non-sequential sublanguages of  $\odot$-products }

Our main result in this section is  Corollary \ref{cor.odot.into.reg.over.para}, which is an essential intermediate result for proving that the $ \odot$ operation preserves bw-rationality, 
as indicated in Section \ref{subsect.summary.proof.ideal.thms}. 


\begin{lem} \label{lem.language.overCi.equals.P}
Let $\Sigma$ be an alphabet and let   $C_1,\ldots, C_m \in\pom_{sp}(\Sigma) $ be non-sequential  and let  $ \{\gamma_1, \ldots, \gamma_m \} $ be an alphabet and  for $i =1,2$  let 
$$L_i=L_i( \gamma_1, \ldots, \gamma_m)   \subseteq  \{ \gamma_1, \ldots, \gamma_m \}^* . $$ 
 Let $\Delta$ be a  set and let $ \approx$ be a congruence  on the monoid $(\{\gamma_1, \ldots, \gamma_m \}^*,1, \cdot)$ such that each language  $ L_i $ is a union of $\approx$-classes.
  Let $\phi$ be a function from $ \Delta$ onto $\{\gamma_1, \ldots, \gamma_m \}^* / \approx$. Define the language 
  \begin{align*}
  U \, = \, & \mathlarger{\{} \\
  & ( \delta_{11},  \delta_{21} ) \ldots  ( \delta_{1b},   \delta_{2b} ) \vert \, b \ge2, \\ 
&  \text{each } \delta_{ij} \in \Delta , \\
  &     \phi(\delta_{i1}) \ldots  \phi(\delta_{ib})  \subseteq   {L}_i    \text{ for each } i =1,2 \\
   &  \mathlarger{\}}
\end{align*}  
over the alphabet $\Delta \times \Delta$, 
and write
 \begin{equation*}
\tilde{U} =  U\big( (\delta_1, \delta_2) \mathlarger{\setminus} \big(  \phi(\delta_1)( C_1, \ldots , C_m)   \odot  \phi(\delta_2)(C_1,\ldots, C_m)  \big)  \cap (\para \cup \Sigma) \big\vert \; (\delta_1, \delta_2) \in \Delta \big),
\end{equation*} 
where for each $ \delta \in \Delta$, we write $ \phi(\delta)( C_1, \ldots , C_m)$ to denote the language $\phi(\delta)$ with each letter $ \gamma_i$ replaced by the language $  C_i $. 
 Then
\begin{equation*}
\big( L_1 (C_1,\ldots, C_m)  \odot     L_2 (C_1,\ldots, C_m)    \big) \cap \seq = \tilde{U}
\end{equation*}
 holds.

\end{lem}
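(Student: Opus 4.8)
The plan is to establish the two set-inclusions separately. The right-hand side $\tilde U$ consists of the sequential products $r_1\cdots r_b$ with $b\ge 2$ for which $(\delta_{11},\delta_{21})\cdots(\delta_{1b},\delta_{2b})\in U$ and each $r_k$ is a non-sequential pomset different from $1$ lying in $\phi(\delta_{1k})(C_1,\dots,C_m)\odot\phi(\delta_{2k})(C_1,\dots,C_m)$; the left-hand side consists of the sequential pomsets $r$ belonging to $p\odot q$ for some $p=w_1(C_1,\dots,C_m)$ and $q=w_2(C_1,\dots,C_m)$ with $w_1\in L_1$ and $w_2\in L_2$. Writing $w_1=\gamma_{i_1}\cdots\gamma_{i_a}$ and $w_2=\gamma_{j_1}\cdots\gamma_{j_c}$, we then have $p=C_{i_1}\cdots C_{i_a}$ and $q=C_{j_1}\cdots C_{j_c}$ as iterated sequential products, and $r\in\pom_{sp}$ since $\odot$ produces only series-parallel pomsets.

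The technical heart of the argument, and the step I expect to be the main obstacle, is the following observation, which is exactly where the hypothesis that each $C_i$ is non-sequential is used: if $r=r_1\cdots r_b$ is the decomposition of a series-parallel pomset into non-sequential factors each different from $1$ (which exists and is unique by Lemma~\ref{lem.pomset.decomp.unique}(2)), and a non-sequential pomset $C$ occurs as the restriction of $r$ to some subset $W$ of its vertices, then $W$ lies inside a single $r_k$. Indeed, otherwise, letting $k_0$ be least with $W\cap V(r_{k_0})\ne\emptyset$, the sets $D=W\cap V(r_{k_0})$ and $W\setminus D$ are both non-empty, every vertex of $D$ lies strictly below every vertex of $W\setminus D$ in $r$ and hence in $C$, so $C$ equals the sequential product of its restrictions to $D$ and to $W\setminus D$, contradicting non-sequentiality. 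Applying this to a sequential element $r$ of $p\odot q$, written in its series decomposition $r=r_1\cdots r_b$ (so $b\ge 2$), at the factors $C_{i_s}$ of $p$ and $C_{j_t}$ of $q$ yields non-decreasing maps $f$ and $g$ sending each such factor to the index of the block $r_k$ containing all its vertices; consequently $w_1$ factors as a concatenation $u_1\cdots u_b$ of (possibly empty) consecutive sub-words with $u_k$ collecting the letters $\gamma_{i_s}$ for which $f(s)=k$, and $w_2$ factors likewise as $v_1\cdots v_b$, and --- since each $r_k\ne 1$ --- every index $k$ is met by $f$ or by $g$.

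With this in place the inclusion $\subseteq$ is bookkeeping: putting $p_k=u_k(C_1,\dots,C_m)$ and $q_k=v_k(C_1,\dots,C_m)$, one checks that $r_k$ has vertex set $V(p_k)\uplus V(q_k)$, restricts to $p_k$ and to $q_k$ on the two parts, and is series-parallel (a restriction of the N-free pomset $r$, by Theorem~\ref{2606y.monk.N-free.thm}), so $r_k\in p_k\odot q_k$. Choosing $\delta_{1k}$ and $\delta_{2k}$ with $\phi(\delta_{1k})$ the $\approx$-class of $u_k$ and $\phi(\delta_{2k})$ that of $v_k$ --- possible as $\phi$ is onto --- and using that $\approx$ is a congruence and each $L_i$ is a union of $\approx$-classes, the product $\phi(\delta_{11})\cdots\phi(\delta_{1b})$ equals the $\approx$-class of $w_1$, which is contained in $L_1$, and similarly for the second coordinate, so $(\delta_{11},\delta_{21})\cdots(\delta_{1b},\delta_{2b})\in U$; since $p_k\odot q_k\subseteq\phi(\delta_{1k})(C_1,\dots,C_m)\odot\phi(\delta_{2k})(C_1,\dots,C_m)$ and each $r_k$ is non-sequential and different from $1$, it follows that $r=r_1\cdots r_b\in\tilde U$. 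For the inclusion $\supseteq$, given $r=r_1\cdots r_b\in\tilde U$, pick words $u_k\in\phi(\delta_{1k})$ and $v_k\in\phi(\delta_{2k})$ with $r_k\in u_k(C_1,\dots,C_m)\odot v_k(C_1,\dots,C_m)$; the congruence property of $\approx$ again gives $w_1:=u_1\cdots u_b\in L_1$ and $w_2:=v_1\cdots v_b\in L_2$, and a routine check on vertex sets and induced orderings (using that a sequential product of series-parallel pomsets is series-parallel) shows that $\odot$ is compatible with sequential composition, whence $r=r_1\cdots r_b\in w_1(C_1,\dots,C_m)\odot w_2(C_1,\dots,C_m)\subseteq L_1(C_1,\dots,C_m)\odot L_2(C_1,\dots,C_m)$; finally $r$ is sequential since $b\ge 2$ and every $r_k\ne 1$. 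This proves both inclusions.
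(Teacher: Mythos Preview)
Your proof is correct and follows essentially the same approach as the paper's: both decompose a sequential pomset $r$ into its non-sequential factors $r_1\cdots r_b$, use the non-sequentiality of the $C_j$ to show that each factor of $p$ and of $q$ lands entirely inside a single $r_k$ (your explicit argument for this via a least index $k_0$ is in fact clearer than the paper's one-line appeal to ``ordering preserved''), gather the factors to split the words $w_i$ as $u_1\cdots u_b$ and $v_1\cdots v_b$, and then invoke surjectivity of $\phi$ together with the congruence property to place the resulting $\Delta\times\Delta$ word in $U$; the converse inclusion is handled identically in both proofs, via the compatibility $(A_1\odot B_1)(A_2\odot B_2)\subseteq (A_1A_2)\odot(B_1B_2)$. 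One minor remark: the paper's proof treats the $C_j$ as non-sequential \emph{languages} (as needed in the subsequent corollary), whereas you follow the literal statement $C_j\in\pom_{sp}(\Sigma)$ and treat them as single pomsets; the adaptation is trivial (replace $p=C_{i_1}\cdots C_{i_a}$ by $p=c_1\cdots c_a$ with $c_s\in C_{i_s}$).
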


\begin{proof}  
Let  $p \in \big(  L_1 (C_1,\ldots, C_m)  \odot      L_2 (C_1,\ldots, C_m)    \big) \cap \seq $. We will show that $ p \in \tilde{U}$.   We have
 \begin{equation}
  p= p_1 \ldots p_b \label{eqn.p.decomp.seq}
\end{equation} 
  for $b \ge 2$ and each $ p_j \in \para \cup \Sigma$ and  there are pomsets 
$$q_i \in  L_i (C_1,\ldots, C_m)  $$ 
such that $$ p \in   q_1 \odot   q_2   .$$
 Clearly 
  each 
  $q_i = r_{i1} \ldots r_{ia_i}$ for some $a_i \ge 0$, where each pomset $r_{ij} \in \cup_{k=1}^m C_k  $  and is hence non-sequential. Hence the vertices in any pomset $r_{ij}$ all lie in one of the pomsets $p_l$  and since their ordering in each $q_i$ is preserved in $p$, 
  for any $j < j'$   the vertices in  $r_{ij}$ and $r_{ij'}$  lie in $p_l$ and $p_{l'}$ respectively for some $l \le l'$. Hence by gathering together adjacent pomsets $r_{ij}$ whose vertices lie in the same pomset  $p_l$,  we may write  
  $$q_i = w_{i1} \ldots w_{ib}$$
   where each $w_{ij}$ is a sequence of pomsets all lying  in  $\cup_{k=1}^m C_k $ and the vertices in $w_{ij}$ occur in the pomset $p_j$. Thus each
   \begin{equation*}
    p_j \in \,  w_{1j} \odot  w_{2j}  \,  \cap \, (\para \cup \Sigma) 
\end{equation*}   
   holds.

     Clearly each language
 $$   L_i (C_1,\ldots, C_m) = \bigcup_{ \subalign{
 & v \,  = v( \gamma_1, \ldots , \gamma_m) \, \in  {L}_i \\  
 &    } 
  }
    v(C_1,\ldots, C_m) 
    $$
    and so each $ w_{i1} \ldots w_{ib} = q_i \in v_i(C_1,\ldots, C_m) $ for words
    $ v_i = v_i( \gamma_1, \ldots , \gamma_m) \in  
    {L}_i  $. Since the pomsets in the language $\cup_{k=1}^m C_k $  are non-sequential,   by Part (2) of Lemma \ref{lem.pomset.decomp.unique} there are words 
  $  v_{ij} =  v_{ij}(\gamma_1, \ldots, \gamma_m)$ such that $v_i = v_{i1} \ldots v_{ib}$ and 
  $ w_{ij} \in  v_{ij}(C_1,\ldots, C_m)   $ holds. Hence  
  \begin{equation}
   v_{i1} \ldots v_{ib} \in   {L}_i \label{eqn.L-i.and.v-ij.contain}
\end{equation}   holds.

  Since $ \phi$ is onto, we may suppose each $v_{ij} \in   \phi(\delta_{ij})  $ for $ \delta_{ij} \in \Delta$. Then  each 
   $ w_{ij} \in   \phi( \delta_{ij}) (C_1,\ldots, C_m)   $ and so 
   \begin{equation}  p_j \in  \,
       \phi( \delta_{1j})(C_1, \ldots, C_m)  \odot   \phi( \delta_{2j})(C_1, \ldots, C_m) \, \cap \,  (\para \cup \Sigma)  \label{eqn.odot.cap.Para.and.Sigma} 
    \end{equation}
  holds. Also,  $ \phi( \delta_{i1}) \ldots  \phi( \delta_{ib})  \subseteq W_i $ for some $\approx$-class $W_i$, since $\approx$ is a congruence. From (\ref{eqn.L-i.and.v-ij.contain}), $L_i \cap W_i \neq \emptyset$ holds, and so  $ \phi( \delta_{i1}) \ldots  \phi( \delta_{ib})  \subseteq L_i $ follows since each $L_i$ is a union of $\approx$-classes.   
   Hence $p \in \tilde{U}$ follows from (\ref{eqn.p.decomp.seq}) and (\ref{eqn.odot.cap.Para.and.Sigma}). Thus we have proved $  \big(  L_1 (C_1,\ldots, C_m)  \odot      L_2 (C_1,\ldots, C_m)    \big) \cap \seq \subseteq  \tilde{U}$.

 Conversely, suppose  that $ p \in \tilde{U}$ holds.  
 Then   there exist  $b \ge 2$ and elements $ \delta_{ij} \in \Delta$ such that 
  \begin{align*}
 &    \phi(\delta_{i1}) \ldots \phi( \delta_{ib})  \subseteq  {L}_i(\gamma_1, \ldots, \gamma_m) 
  \text{ for each } i =1,2   \\
& \text{and } 
  p = p_1\ldots p_b, \text{ where}   \\
  & \text{each } p_j \in \big(  \phi( \delta_{1j})(C_1, \ldots, C_m)  \odot    \phi( \delta_{2j})(C_1, \ldots, C_m) \big) \cap (\para \cup \Sigma)   
  \end{align*} 
  and so there exist words $ v_{ij} = v_{ij}( \gamma_1, \ldots, \gamma_m) \in   \phi( \delta_{ij})  $ such that 
  each $$ p_j \in \big(  v_{1j}(C_1, \ldots, C_m)  \odot   v_{2j}(C_1, \ldots, C_m) \big) \cap (\para \cup \Sigma)$$ and hence there exist pomsets 
  $$w_{ij} \in  v_{ij}(C_1, \ldots, C_m) $$ such that 
  each $ p_j \in \big(  w_{1j}  \odot  w_{2j}  \big)   \cap (\para \cup \Sigma)$. 
   Thus 
  \begin{equation*}
  p \in  \prod_{j=1}^b \big( ( w_{1j}  \odot  w_{2j} )    \cap (\para \cup \Sigma)  \big)
 \, \subseteq 
  \, (  w_{11} \ldots w_{1b}  \odot      w_{21} \ldots w_{2b}   ) \cap \seq 
\end{equation*} 
   where for each $ i =1,2$, clearly 
 $   v_{i1} \ldots v_{ib}  \in   {L}_i(\gamma_1, \ldots, \gamma_m)     $ and thus    
 \begin{align*}
  w_{i1} \ldots w_{ib} &\in   v_{i1}(C_1, \ldots, C_m)  \ldots v_{ib}(C_1, \ldots, C_m)   \\
  & \subseteq   \phi( \delta_{i1}) (C_1,\ldots, C_m) \ldots \phi( \delta_{ib})(C_1,\ldots, C_m)   \\ & \subseteq   {L}_i(C_1,\ldots, C_m) 
\end{align*} 
  holds.   Thus $ p \in \big( L_1 (C_1,\ldots, C_m)  \odot     L_2 (C_1,\ldots, C_m)    \big) \cap \seq$ follows, as required. 
     \qed 
\end{proof}

The main result of this section follows.

\begin{cor} \label{cor.odot.into.reg.over.para}
Let $ \Sigma$ be an alphabet and let  terms $c_1,\ldots, c_m \in T_{bw-Rat} ( \Sigma) $ be non-sequential with each $ \lb c_j \rb = C_j$  and let 
$$t_1=t_1( \gamma_1, \ldots, \gamma_m),  t_2= t_2( \gamma_1, \ldots, \gamma_m)$$
 be regular terms over an alphabet
 $ \{\gamma_1, \ldots, \gamma_m \}$. 
 Let $\Delta$ be a finite set and let $ \approx$ be a congruence of finite index on the monoid $(\{\gamma_1, \ldots, \gamma_m \}^*,1, \cdot)$ such that each language  $ \lb t_i \rb$ is a union of $\approx$-classes.
  Let $\phi$ be a function from $ \Delta$ onto $\{\gamma_1, \ldots, \gamma_m \}^* / \approx$. 
  Then there is a regular term $u$ over the alphabet $ \Delta \times \Delta$ such that 
\begin{align}
&  t_1 (C_1,\ldots, C_m)  \odot      t_2 (C_1,\ldots, C_m)    \,  \cap   \,\seq \nonumber
 \\
& = \nonumber \\
&  u\big( (\delta_1, \delta_2) \mathlarger{\setminus}  \phi(\delta_1)(  C_1,\ldots, C_m)   \odot  \phi(\delta_2)(C_1,\ldots, C_m)    \cap (\para \cup \Sigma) \big\vert \; (\delta_1, \delta_2) \in \Delta \big) \label{eqn.reg.over.odot.intersect.not_seq}
 \end{align}
 holds and for each word $ ( \delta_{11},  \delta_{21} ) \ldots  ( \delta_{1b},   \delta_{2b} ) \in \lb u \rb $ and  $ i =1,2$,
 \begin{equation}
   \phi(\delta_{i1}) \ldots  \phi(\delta_{ib})  \subseteq \lb  {t}_i(\gamma_1, \ldots, \gamma_m) \rb .
   \label{eqn.delta.without.C-j.into.t-i}   
 \end{equation}
 Also, 
   for each  $(\epsilon_1, \epsilon_2 )  \in \supp(u)$ and $i =1,2$,  
 \begin{equation}
  \width \big( \phi(\epsilon_i)( C_1, \ldots, C_m) \big) \le      \width \big( t_i( C_1, \ldots, C_m) \big) \label{eqn.width.phi.less.than.t-i}
\end{equation} 
 holds. 
\end{cor}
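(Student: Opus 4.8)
The plan is to read off the corollary by feeding Lemma~\ref{lem.language.overCi.equals.P} into the automata-theoretic Lemma~\ref{2960u.elk.cartesian.lang.lem}: the former supplies the identity~(\ref{eqn.reg.over.odot.intersect.not_seq}) at the level of an abstract language $U\subseteq(\Delta\times\Delta)^*$, and the latter shows that $U$ is regular, hence equal to $\lb u\rb$ for a suitable $u\in T_{Reg}(\Delta\times\Delta)$. First I would put $L_i:=\lb t_i(\gamma_1,\ldots,\gamma_m)\rb$ for $i=1,2$; since $t_i$ is a regular term, $L_i$ is a regular language, and it is a union of $\approx$-classes by hypothesis. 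Thus Lemma~\ref{lem.language.overCi.equals.P} applies with these $L_i$, with the given non-sequential pomsets $C_j=\lb c_j\rb$, and with the given onto map $\phi$; it produces the language $U$ whose words $(\delta_{11},\delta_{21})\cdots(\delta_{1b},\delta_{2b})$ are exactly those with $b\ge 2$ and $\phi(\delta_{i1})\cdots\phi(\delta_{ib})\subseteq L_i$ for $i=1,2$, together with the equation $\big(t_1(C_1,\ldots,C_m)\odot t_2(C_1,\ldots,C_m)\big)\cap\seq=\tilde U$, where $\tilde U$ denotes $U$ with each letter $(\delta_1,\delta_2)$ replaced by $\big(\phi(\delta_1)(C_1,\ldots,C_m)\odot\phi(\delta_2)(C_1,\ldots,C_m)\big)\cap(\para\cup\Sigma)$.

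Next I would check that $U$ is regular. Lemma~\ref{2960u.elk.cartesian.lang.lem} applies with $\Gamma=\{\gamma_1,\ldots,\gamma_m\}$, the same $L_1,L_2$, $\approx$ and $\Delta$, and $\theta=\phi$ (finiteness of $\Delta$ and of the index of $\approx$ are part of our hypotheses), and it shows that the analogous language with the weaker constraint $b\ge 0$ is regular; intersecting with the regular set of words of length at least two yields $U$, still regular. Fix $u\in T_{Reg}(\Delta\times\Delta)$ with $\lb u\rb=U$. Since substituting pomset languages into a regular term agrees with substituting them word by word into $\lb u\rb$ (the $T_{Reg}$-analogue of Lemma~\ref{236w.shrew.lem}), and $\lb u\rb=U$, the right-hand side of~(\ref{eqn.reg.over.odot.intersect.not_seq}) equals $\tilde U$, which gives~(\ref{eqn.reg.over.odot.intersect.not_seq}). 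Property~(\ref{eqn.delta.without.C-j.into.t-i}) is then immediate: every word of $\lb u\rb=U$ satisfies $\phi(\delta_{i1})\cdots\phi(\delta_{ib})\subseteq L_i=\lb t_i(\gamma_1,\ldots,\gamma_m)\rb$ by the very definition of $U$.

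It remains to establish the width bound~(\ref{eqn.width.phi.less.than.t-i}). Given $(\epsilon_1,\epsilon_2)\in\supp(u)$, this letter occurs in some word of $\lb u\rb=U$, say in position $k$, so that $\delta_{ik}=\epsilon_i$ and $\phi(\delta_{i1})\cdots\phi(\delta_{ib})\subseteq L_i$ for each $i$. Each $\approx$-class $\phi(\delta_{ij})$ is non-empty; fixing a representative $v_j\in\phi(\delta_{ij})$ for $j\neq k$ and letting $v_k$ range over $\phi(\epsilon_i)=\phi(\delta_{ik})$, the word $v_1\cdots v_b$ always lies in $L_i$, so the substitution $\gamma_j\mapsto C_j$ places $v_1(C_1,\ldots,C_m)\cdots v_b(C_1,\ldots,C_m)$ inside $t_i(C_1,\ldots,C_m)$. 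As the width of a sequential product of pomsets is the maximum of the widths of the factors, every pomset of $v_k(C_1,\ldots,C_m)$ has width at most $\width\big(t_i(C_1,\ldots,C_m)\big)$, a finite number since $t_i$ is regular and each $C_j$ has bounded width; taking the supremum over $v_k\in\phi(\epsilon_i)$ gives~(\ref{eqn.width.phi.less.than.t-i}).

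The real work is done by the two ingredient lemmas, so the main obstacle is bookkeeping rather than ideas: one must carefully track the three layers in play — words over $\Delta\times\Delta$, their $\phi$-images as words over $\{\gamma_1,\ldots,\gamma_m\}$, and the pomset languages obtained by the substitution $\gamma_j\mapsto C_j$ — and, for the width estimate, exploit both the non-emptiness of $\approx$-classes (to pad a word of $\phi(\epsilon_i)$ up to a word of $L_i$) and the fact that sequential multiplication of pomsets takes the maximum of widths.
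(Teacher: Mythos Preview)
Your proposal is correct and follows essentially the same route as the paper: invoke Lemma~\ref{2960u.elk.cartesian.lang.lem} to obtain a regular term $u$ with $\lb u\rb=U$, then read off (\ref{eqn.reg.over.odot.intersect.not_seq}) from Lemma~\ref{lem.language.overCi.equals.P} and (\ref{eqn.delta.without.C-j.into.t-i}) from the definition of $U$, and deduce the width bound (\ref{eqn.width.phi.less.than.t-i}) from (\ref{eqn.delta.without.C-j.into.t-i}) via non-emptiness of the $\approx$-classes. You are in fact slightly more careful than the paper in two places: you explicitly intersect with words of length $\ge 2$ to pass from the $b\ge 0$ language of Lemma~\ref{2960u.elk.cartesian.lang.lem} to the $b\ge 2$ language $U$, and you explicitly justify that substituting into $u$ agrees with substituting into $\lb u\rb$ via the regular analogue of Lemma~\ref{236w.shrew.lem}.
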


\begin{proof}

By  Lemma \ref{2960u.elk.cartesian.lang.lem}, there is a term $u \in T_{Reg} ( \Delta \times \Delta)$ such that 
  \begin{align*}
  \lb u \rb =   & \, \mathlarger{\{}  \\
    &  ( \delta_{11},  \delta_{21} ) \ldots  ( \delta_{1b},   \delta_{2b} ) \vert \, b \ge2, \\ 
&  \text{each } \delta_{ij} \in \Delta , \\
  &     \phi(\delta_{i1}) \ldots  \phi(\delta_{ib})  \subseteq \lb  {t}_i(\gamma_1, \ldots, \gamma_m) \rb    \text{ for each } i =1,2 \\
   &  \mathlarger{\}}
\end{align*}
and so (\ref{eqn.reg.over.odot.intersect.not_seq}) holds by Lemma \ref{lem.language.overCi.equals.P}, with $ \lb t_i \rb $ in the role of the languages  $L_i$, and (\ref{eqn.delta.without.C-j.into.t-i}) holds from the definition of $u$.

To prove (\ref{eqn.width.phi.less.than.t-i}), let     $(\epsilon_1, \epsilon_2 ) \in \supp(u)$. Thus  there is  a word \\
 $  ( \delta_{11},   \delta_{21} ) \ldots   ( \delta_{1b},  \delta_{2b} )  \in \lb u \rb $  in which  $(\epsilon_1, \epsilon_2 )$ occurs and so there exists $ b' \le b$ such that for each 
 $j  =1,2$,   $  \, {\epsilon_j} = \delta_{jb'}$ holds and so from (\ref{eqn.delta.without.C-j.into.t-i}), 
   \begin{equation}
     \phi(\delta_{j1} )(C_1, \ldots, C_m) \ldots \phi(\delta_{jb})(C_1, \ldots, C_m)  \subseteq {t}_j(C_1, \ldots, C_m)         \label{eqn.deltas.into.t-i}
\end{equation}  
  holds. 
    Since the  elements of $ \phi(\Delta)$ are   non-empty sublanguages of $\{ \gamma_1, \ldots , \gamma_m \}^*$, each pomset language  $\phi(\delta_{jk} )(C_1, \ldots, C_m)$ is also non-empty and so 
     $$\width \big( \phi(\epsilon_j)( C_1, \ldots, C_m) \big) \le \width \big( \phi(\delta_{j1} )(C_1, \ldots, C_m) \ldots \phi(\delta_{jb})(C_1, \ldots, C_m) \big)$$
    holds. Thus  (\ref{eqn.width.phi.less.than.t-i}) follows from (\ref{eqn.deltas.into.t-i}). 
     \qed

\end{proof}

Lemma \ref{2360d.hart.gather.lem} relates the $ \odot$-product of two  languages defined as parallel products to  $ \odot$-products of  their respective parallel components.

\begin{lem}\label{2360d.hart.gather.lem} 
 Let $L_1, L_2 $ be bounded-width languages of sp-pomsets over an alphabet $ \Sigma$, where each 
 $$ L_i = S_{i1} \parallel \cdots\parallel  S_{im_i}  $$
 for $ m_i \ge 1$ and 
  pomset languages $S_{ij}$ satisfying $  S_{ij}  \subseteq \seq \cup \Sigma $.  Let $ k \ge 1$. Then
  the following holds; 
   the language $ ( L_1 \odot L_2 )   \cap \para_k$ is the union of all languages of the form 
 $  M_1 \parallel \cdots \parallel M_k   $, where each language 
 $$  M_j  =      ( \underset{b \in T_{1j}}{ \parallel} \!\!\!   S_{1b} \;\; \odot \;   \underset{b \in T_{2j}}{ \parallel} \!\!\! S_{2b}  \; ) \cap  (\seq \cup \Sigma)  ,   $$
where for each $i \in \{1,2\}$, sets
    $ T_{i1}, \ldots, T_{ik}$ partition the set  $  \{ 1, \ldots, m_i\}$, and   such that for each $j \le k$, the  set $T_{1j}  \cup T_{2j} \not= \emptyset $.  If additionally $ k \ge 2$ holds then 
   $$ \sum_{i=1}^2  \width(  \parallel_{b \in T_{ij} }   S_{ib}) <    \sum_{i=1}^2  \width (L_i )$$
  holds for each $ j \le k$.

\end{lem}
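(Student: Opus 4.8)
The plan is to unwind the definitions of $\odot$ and of parallel product, and to match up the parallel-decomposition of an element of $(L_1 \odot L_2) \cap \para_k$ with a partition of the index sets $\{1,\dots,m_i\}$.

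First I would take a pomset $p \in (L_1 \odot L_2) \cap \para_k$. By definition of $\odot$, $p$ is an sp-pomset built on the disjoint union of the vertex sets of some $q_1 \in L_1$ and $q_2 \in L_2$, retaining the labelling and ordering of each $q_i$; and by hypothesis $p = p_1 \parallel \cdots \parallel p_k$ with each $p_j$ non-parallel (i.e.\ in $\seq \cup \Sigma$) and $\ne 1$. Now $q_i = r_{i1} \parallel \cdots \parallel r_{im_i}$ with $r_{ib} \in S_{ib}$, so the vertex set of $q_i$ is partitioned into the $m_i$ blocks coming from the $r_{ib}$. The key combinatorial observation — which I would justify using that $p$ is N-free (Theorem~\ref{2606y.monk.N-free.thm}) together with the uniqueness of the parallel decomposition (Lemma~\ref{lem.pomset.decomp.unique}(1)) and the fact that the $p_j$ are non-parallel — is that each block $r_{ib}$ lies entirely within the vertex set of a single $p_j$. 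Indeed, the parallel components of $p$ are exactly the connected components of the comparability graph, and since ordering is retained from $q_i$, each connected $r_{ib}$ must sit inside one component $p_j$. This gives, for each $i$, a map $b \mapsto j(i,b)$, and I set $T_{ij} = \{ b : j(i,b) = j\}$; these partition $\{1,\dots,m_i\}$. Gathering the blocks landing in $p_j$, the vertices of $p_j$ are exactly those of $(\parallel_{b \in T_{1j}} r_{1b}) \;\cup\; (\parallel_{b \in T_{2j}} r_{2b})$ with induced orderings from $q_1,q_2$ respectively, so $p_j \in (\parallel_{b\in T_{1j}} S_{1b} \odot \parallel_{b\in T_{2j}} S_{2b}) \cap (\seq \cup \Sigma)$; and $p_j \ne 1$ forces $T_{1j}\cup T_{2j}\ne\emptyset$. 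Hence $p \in M_1 \parallel\cdots\parallel M_k$. Conversely, given such a decomposition with the $M_j$ as described, assembling choices of $p_j \in M_j$ and the corresponding blocks reconstructs $q_i \in S_{i1}\parallel\cdots\parallel S_{im_i} = L_i$ and a pomset $p = p_1\parallel\cdots\parallel p_k \in (q_1 \odot q_2)\cap\para_k \subseteq (L_1\odot L_2)\cap\para_k$; the ordering bookkeeping here is routine.

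For the width inequality when $k \ge 2$: fix $j \le k$. Since the $T_{ij}$ for $j\le k$ partition $\{1,\dots,m_i\}$ and $k\ge 2$, for at least one $j'\ne j$ the set $T_{1j'}\cup T_{2j'}$ is nonempty, so some $S_{ib}$ with $b \notin T_{1j}\cup T_{2j}$ is a nonempty factor of $L_i$. Using that parallel product of nonempty languages adds widths and that each $S_{ib}$ contributes width at least $1$ (being a nonempty language of nonempty pomsets), we get $\width(\parallel_{b\in T_{ij}} S_{ib}) \le \width(L_i) - 1$ whenever $T_{1j'}\cup T_{2j'}\ne\emptyset$ for some $j'\ne j$ with a factor on side $i$; summing over $i=1,2$ and being slightly careful to place the "missing" block on whichever side it occurs gives $\sum_{i=1}^2 \width(\parallel_{b\in T_{ij}} S_{ib}) < \sum_{i=1}^2 \width(L_i)$, since at least one unit of width is omitted from the total.

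The main obstacle I anticipate is the first direction's key claim — that each $r_{ib}$ falls inside a single parallel component $p_j$ of $p$ — which needs the N-freeness of $p$ to rule out an $r_{ib}$ whose vertices straddle two components; this is where Theorem~\ref{2606y.monk.N-free.thm} and Lemma~\ref{lem.pomset.decomp.unique}(1) do the real work, and care is needed because a priori the ordering of $p$ on the vertices of $q_1 \cup q_2$ could be strictly stronger than that of either $q_i$ alone, only agreeing with $q_i$ on $V_i$-internal pairs.
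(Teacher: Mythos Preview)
Your proposal is correct and follows essentially the same approach as the paper's own proof. For the key claim the paper uses exactly your connected-components observation---each $r_{ib} \in \seq \cup \Sigma$ has a connected comparability graph, and since its ordering is preserved in $p$ its vertices lie in a single parallel component $p_j$---so Theorem~\ref{2606y.monk.N-free.thm} is not actually needed, and your worry about $p$'s ordering being strictly stronger is immaterial, since extra comparabilities only reinforce connectedness.
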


\begin{proof}
Observe first that each metaterm  $S_{ij}$ occurs exactly once in both  the expressions $ L_1 \odot L_2$ and $  M_1 \parallel \cdots \parallel M_k   $ under the  conditions on the sets $T_{ij}$ given in   the Lemma.   Additionally, if any ordering occurs in  a pomset in  $  M_1 \parallel \cdots \parallel M_k   $ between vertices in a pomset in $S_{ij}$ and in $S_{i'j'}$, then either  $i= i' \wedge j=j'$ or $ i \not= i'$ holds, and hence the same ordering can occur in $ L_1 \odot L_2$. Thus any pomset in a language $  M_1 \parallel \cdots \parallel M_k   $
  under the given conditions lies in $ (L_1 \odot L_2) \cap \para_k$. 
  
Conversely, let $ p \in  (  L_1 \odot L_2) \cap \para_k $. Thus each language $S_{ab} $ contains a pomset $q_{ab}$ such that the vertex set of $p$ is the pairwise disjoint union of all the vertex sets of the pomsets $q_{ab}$, with the same labelling and the same vertex ordering within each $q_{ab}$.

 Write  $p = p_1 \parallel \ldots \parallel p_k$, with each $p_j \in \seq \cup \Sigma$. Given any  $j \le k$ and $i \in \{ 1,2 \}$, let $T_{ij}$ be the set of elements of $ \{1, \ldots, m_i\} $ such that $p_j$ contains at least one vertex from $  q_{ib}  $ if and only if $b \in T_{ij}$.  Since each  $  q_{ib} \notin \para  $, and its ordering is preserved in $p$, all vertices  in  $  q_{ib}  $ occur in $p_j$ if $b \in T_{ij}$. Hence 
$ j \not= j' \Rightarrow T_{ij} \cap T_{ij'} = \emptyset$ follows, and since  the vertices of every pomset $q_{ib}$ must occur in some $p_j$, $ \{1, \ldots, m_i\} = \cup_{j=1}^k  T_{ij} $ holds, proving 
 the partitioning property of the sets $T_{ij}$ asserted by the Lemma.

Since the ordering of the vertices in $ \parallel_{ b\le m_i} q_{ib}$ and hence in $ \parallel_{ b \in T_{ij}} q_{ib}$ is preserved in $p$, and  each $p_j \in \seq \cup \Sigma$, it follows that  $p_j \in M_j $, with $ M_j $ defined as in the statement of the Lemma using the sets $T_{ij}$. The  assertion that  $T_{1j}  \cup T_{2j} \not= \emptyset $ holds follows since $p \in \para_k$ and each $ M_j \not= \{ 1 \}$. 

The width property asserted by the Lemma holds if  $ k \ge 2$ since for each $ j\le k$ and  $ i \in \{1,2\} $, $T_{ij} \subseteq \{ 1, \ldots , m_i \} $ and so 
$$  \width(  \parallel_{b \in T_{ij} }   S_{ib}) \le   \width(  S_{i1} \parallel \cdots \parallel S_{im_i} ) =   \width ( L_i )$$
 holds, with strict inequality for at least one $ i \in \{1,2 \}$, since given any $ j, j' \le k$ with $j' \not= j$,   $T_{ij'} \not= \emptyset$ holds for at least one element $ i \in \{1,2\}$, and so for  every $ b \in T_{ij'} =  T_{ij'} - T_{ij} $, the term $S_{ib'}$ occurs in the middle term but not on the left side of 
 the above inequality, and $ S_{ib'} \ni q_{ib} \not= \{ 1 \}$. Thus we have proved the Lemma. 
 \qed
\end{proof}

Corollary \ref{2058f.turkey.fun.cor} gives an inductive step in the proof of Theorem \ref{thm.odot.preserve.bwRat}, our third main theorem.

\begin{cor} \label{2058f.turkey.fun.cor}
Let $L_1, L_2 $ be  bw-rational languages of sp-pomsets over an alphabet $ \Sigma$, and assume that for any bw-rational languages $L_1', L_2'$ satisfying $\sum_{i=1}^2 \width(L_i') < \sum_{i=1}^2 \width(L_i)$, the language   $(L_1' \odot L_2') \cap (\seq \cup \Sigma)$ is bw-rational and satisfies 
$$(L_1' \odot L_2') \cap (\seq \cup \Sigma)  \le_{EX} L_1' \parallel L_2'  .$$
 Let $ k \ge 2$. 
Then  $  (L_1 \odot L_2)    \cap \para_k$ is bw-rational and 
 $$  (L_1 \odot L_2)    \cap \para_k \le_{EX} L_1 \parallel L_2$$ holds.
\end{cor}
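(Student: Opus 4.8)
The plan is to reduce the statement to Lemma~\ref{2360d.hart.gather.lem} and then feed that lemma's output into the hypothesis of the corollary. Since a bw-rational language need not be a single parallel product, I first decompose each $L_i$. Using the bw-rational analogues of Lemma~\ref{rat.lang.jay249y.partition.lem} and Proposition~\ref{devil.386f.kite.prop} (obtained, as in the proof of Theorem~\ref{thm.bwrat.equal.imply.bw-equiv}, by discarding the clauses about $^{(*)}$, and using that bounded width lets any $^{(*)}$ be replaced by a finite parallel sum), I write each $L_i$ as a finite union $L_i=\bigcup_\alpha P_{i\alpha}$ in which every $P_{i\alpha}$ is either $\{1\}$ or a genuine parallel product $S_{i\alpha 1}\parallel\cdots\parallel S_{i\alpha m_{i\alpha}}$ with $m_{i\alpha}\ge1$ and each $S_{i\alpha j}$ a bw-rational sublanguage of $\seq\cup\Sigma$. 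Because $\odot$ and $\parallel$ are pointwise operations they distribute over unions, so $(L_1\odot L_2)\cap\para_k$ is the finite union of the languages $(P_{1\alpha}\odot P_{2\alpha'})\cap\para_k$, and it suffices to treat each such pair, noting $\width(P_{i\alpha})\le\width(L_i)$. If one of the two pieces is $\{1\}$, say $P_{1\alpha}=\{1\}$, then $P_{1\alpha}\odot P_{2\alpha'}=P_{2\alpha'}$, so the piece is $P_{2\alpha'}\cap\para_k$, which is bw-rational and satisfies $P_{2\alpha'}\cap\para_k\le_{EX}P_{2\alpha'}=P_{1\alpha}\parallel P_{2\alpha'}\le_{EX}L_1\parallel L_2$; the case where both are $\{1\}$ is trivial. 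So assume from now on both pieces are genuine parallel products and rename them $L_1=S_{11}\parallel\cdots\parallel S_{1m_1}$, $L_2=S_{21}\parallel\cdots\parallel S_{2m_2}$.

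Now I apply Lemma~\ref{2360d.hart.gather.lem} with this $k\ge2$: it writes $(L_1\odot L_2)\cap\para_k$ as a finite union of languages $M_1\parallel\cdots\parallel M_k$, ranging over the pairs of partitions $\{T_{ij}\}_{j\le k}$ of $\{1,\dots,m_i\}$ ($i=1,2$) with $T_{1j}\cup T_{2j}\neq\emptyset$ for every $j$, where $M_j=\bigl((\parallel_{b\in T_{1j}}S_{1b})\odot(\parallel_{b\in T_{2j}}S_{2b})\bigr)\cap(\seq\cup\Sigma)$ and, because $k\ge2$, $\width(\parallel_{b\in T_{1j}}S_{1b})+\width(\parallel_{b\in T_{2j}}S_{2b})<\width(L_1)+\width(L_2)$ for each $j$. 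Each $\parallel_{b\in T_{ij}}S_{ib}$ is a finite parallel product of bw-rational sp-pomset languages, hence is itself a bw-rational language of sp-pomsets, so the displayed strict width inequality is exactly what is needed to invoke the hypothesis of the corollary on the pair $\bigl(\parallel_{b\in T_{1j}}S_{1b},\ \parallel_{b\in T_{2j}}S_{2b}\bigr)$: each $M_j$ is bw-rational and $M_j\le_{EX}(\parallel_{b\in T_{1j}}S_{1b})\parallel(\parallel_{b\in T_{2j}}S_{2b})$.

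It remains to assemble. Each $M_1\parallel\cdots\parallel M_k$ is bw-rational (bw-rational languages are closed under $\parallel$), and by monotonicity of $\parallel$ in any idempotent commutative semiring together with the previous inequality, $M_1\parallel\cdots\parallel M_k\le_{EX}\parallel_{j=1}^k\bigl((\parallel_{b\in T_{1j}}S_{1b})\parallel(\parallel_{b\in T_{2j}}S_{2b})\bigr)$, and the right-hand side equals $L_1\parallel L_2$ by associativity and commutativity of $\parallel$ and because each family $\{T_{ij}\}_j$ partitions $\{1,\dots,m_i\}$. Taking the finite union over all admissible partition pairs, and then over all pairs $(\alpha,\alpha')$ from the first step, shows that $(L_1\odot L_2)\cap\para_k$ is a finite union of bw-rational languages, hence bw-rational, and is a finite sum of languages each $\le_{EX}L_1\parallel L_2$, hence $\le_{EX}L_1\parallel L_2$. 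The only point needing care is the first step, namely that every bw-rational language really is a finite union of parallel products of sequential/singleton languages and that the degenerate pieces (the $\{1\}$'s and the $m_i=1$ cases) are absorbed correctly; the rest is the standard monotonicity and closure bookkeeping once Lemma~\ref{2360d.hart.gather.lem} hands over its strict width bound.
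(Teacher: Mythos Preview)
Your proof is correct and follows essentially the same approach as the paper's: reduce via distributivity to the case where each $L_i$ is a single parallel product $S_{i1}\parallel\cdots\parallel S_{im_i}$ of bw-rational languages in $\seq\cup\Sigma$, invoke Lemma~\ref{2360d.hart.gather.lem}, feed its strict width bound into the inductive hypothesis to get each $M_j\le_{EX}(\parallel_{b\in T_{1j}}S_{1b})\parallel(\parallel_{b\in T_{2j}}S_{2b})$, and reassemble using commutativity and associativity of~$\parallel$. The paper compresses your first paragraph into the single phrase ``using the distributive law for~$\parallel$''; your more explicit justification via the bw-rational versions of Lemma~\ref{rat.lang.jay249y.partition.lem} and Proposition~\ref{devil.386f.kite.prop}, together with the separate treatment of the $\{1\}$ pieces, is a welcome elaboration of a step the paper leaves implicit.
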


\begin{proof}
Using the distributive law for $ \parallel$, we may assume that each  $ L_i = S_{i1} \parallel \cdots\parallel  S_{im_i}  $
 for $ m_i \ge 1$ and 
bw-rational   pomset languages $S_{ij}$ satisfying $  S_{ij}  \subseteq \seq \cup \Sigma $.   
  
  We first  prove that  $  M_1 \parallel \cdots \parallel M_k \le_{EX}   L_1 \parallel L_2$ holds, where the languages $M_j$ are as defined   using sets $T_{ij}$ as in Lemma \ref{2360d.hart.gather.lem}; in particular, $\cup_{j=1}^k T_{ij} = \{ 1, \ldots, m_i \}$ for each $ i \in \{ 1,2\}$. From the conclusion of that Lemma and the extra hypotheses assumed here,  each language  $M_j$ is bw-rational and  
 $$ M_j \le_{EX}  ( \underset{b \in T_{1j}}{ \parallel} \!\!\!   S_{1b} ) \; \parallel \; (  \underset{b \in T_{2j}}{ \parallel} \!\!\! S_{2b}  \; ) $$
 holds. 
 Thus 
  $$  M_1 \parallel \cdots \parallel M_k \le_{EX} 
    {\mathlarger{\parallel}}_{j=1}^k \big( ( \underset{b \in T_{1j}}{ \parallel} \!\!\!   S_{1b} ) \; \parallel \; (  \underset{b \in T_{2j}}{ \parallel} \!\!\! S_{2b}  \; ) \big) = L_1 \parallel L_2   $$
    holds. Thus the Corollary follows since there are finitely many ways of defining collections of sets $T_{ij}$ satisfying the conditions given in Lemma \ref{2360d.hart.gather.lem} and so by that Lemma,  $ L_1 \odot L_2    \cap \para_k $ is a finite union of bw-rational languages $R$ satisfying $ R \le_{EX} L_1 \parallel L_2$. 
 \qed
\end{proof}

\section{The main theorems for sp-ideals of rational languages}

\label{sect.main.thms.ideals}

We now show that $ \odot$ preserves bw-rationality of pomset languages, and defines a language that is $ =_{EX}$-equivalent to the parallel product of the languages. We first need an automata-theoretic lemma.

\begin{lem}\label{6299x.reg.lang.equiv.lem}
Let $ \Gamma$ be a finite alphabet, and let $ S$ be a finite set of regular languages over $\Gamma$.
Then there exists a congruence $\approx$ of finite index of the monoid $( \Gamma^*  , 1, \cdot)$
 such that each language  $L\in S$ is the union of  a subcollection of $\approx$-equivalence classes. 
\end{lem}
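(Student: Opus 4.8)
The plan is to build $\approx$ as the common refinement of the syntactic congruences of the finitely many languages in $S$.

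First I would recall that for any $L \subseteq \Gamma^*$ the syntactic congruence $\approx_L$, given by $u \approx_L v$ iff $xuy \in L \Leftrightarrow xvy \in L$ for all $x,y \in \Gamma^*$, is a two-sided congruence of the monoid $(\Gamma^*,1,\cdot)$ for which $L$ is a union of congruence classes (set $x=y=1$). By the Myhill--Nerode theorem $\approx_L$ has finite index exactly when $L$ is regular, so each $L \in S$ contributes a finite-index congruence $\approx_L$.

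Next, writing $S = \{L_1,\ldots,L_n\}$, I would put $\approx \,=\, \bigcap_{i=1}^n \approx_{L_i}$. An intersection of congruences is again a congruence, and since the canonical map $\Gamma^*/\approx \to \prod_{i=1}^n \Gamma^*/\approx_{L_i}$ is injective, the index of $\approx$ is at most $\prod_{i=1}^n [\Gamma^* : \approx_{L_i}] < \infty$. Because $\approx$ refines each $\approx_{L_i}$ and each $L_i$ is a union of $\approx_{L_i}$-classes, each $L_i$ is a fortiori a union of $\approx$-classes, which is exactly the conclusion required.

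Every ingredient here is standard, so I do not expect a genuine obstacle; the only claim needing care is the finite-index assertion, which is precisely the content of the Myhill--Nerode theorem. Alternatively, and perhaps more in keeping with the ``automata-theoretic'' nature of this section, I would instead fix for each $i$ a deterministic finite automaton $A_i$ with state set $Q_i$ recognising $L_i$ and declare $u \approx v$ iff $u$ and $v$ induce the same state transformation on every $A_i$; this is visibly a congruence, has index at most $\prod_i |Q_i|^{|Q_i|}$, and saturates each $L_i$.
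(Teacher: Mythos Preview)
Your argument is correct. The paper's proof is essentially your ``alternative'' suggestion: it first reduces to $|S|=1$ by the same observation that an intersection of finite-index congruences has finite index, then for a single $L$ takes a DFA with state set $Q$ and lets the congruence classes be the sets $K_\theta=\{w\mid q\overset{w}{\leadsto}\theta(q)\text{ for all }q\}$ for $\theta:Q\to Q$, i.e.\ words inducing the same state transformation. Your main route via the syntactic congruence and Myhill--Nerode is a minor variant---the syntactic congruence is the coarsest congruence saturating $L$ and is a quotient of the transition-monoid congruence---so the two approaches are equivalent and equally standard.
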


\begin{proof}
Since the conjunction of two  congruences  of finite index is itself a congruence  of finite index, we may assume that $S$ is a singleton, $S = \{ L \}$.  
Let $A$ be a deterministic finite state automaton accepting the language $ L $. We assume $A$ has state set $Q$ and a binary transition relation $ \underset{w}{\leadsto} \, \subseteq Q \times Q $ for each $w \in \Gamma^*$. For any function $ \theta: Q \to Q$, let $$ K_\theta = \{ w \in \Gamma^* \vert\, q \underset{w}{\leadsto} \theta(q) \, \forall q \in Q \}.$$ 
  For any $w \in \Gamma^*$, there is a function $ \theta: Q \to Q$ such that for any $q \in Q$, there exists a state $ \theta(q)$ satisfying $  q \underset{w}{\leadsto} \theta(q) $, and so $ w \in K_\theta$; furthermore, if any $ w \in K_\theta \cap K_{\theta'}$, then for each $q \in Q$ both  $  q \underset{w}{\leadsto} \theta(q) $ and  $  q \underset{w}{\leadsto} \theta'(q) $ hold. Since $A$ is deterministic, $ \theta(q) = \theta'(q)$ follows and so $ \theta = \theta'$. Thus the sets $K_\theta$ partition $ \Gamma^*$ and are clearly regular, and for each $ \theta, \theta': Q \to Q$, there exists $ \theta''$ satisfying $K_\theta K_{\theta'} \subseteq K_{\theta''}$. Clearly $ L$  is the union of  a collection of languages $K_\theta$; since there are finitely many functions from $Q$ into $Q$,  the Lemma follows. 
 \qed
\end{proof}

Lemma \ref{lem.ordering.reg.over.paral} will be used with Corollary \ref{cor.odot.into.reg.over.para} to transform a `regular over parallel' term given by  $u$ in the statement of this Corollary into a sum of parallel terms in Theorem \ref{thm.odot.preserve.bwRat}, which shows that $ \odot$ preserves bw-rationality.

\begin{lem} \label{lem.ordering.reg.over.paral}
Let $ \{\gamma_1, \ldots, \gamma_m \}$ be an alphabet and let terms
$d_1, \ldots, d_k  \in T_{Reg}( \gamma_1, \ldots, \gamma_m )$ be  such that the languages 
$  \lb d_1 \rb , \ldots, \lb d_k \rb $ partition $ \{ \gamma_1, \ldots, \gamma_m \}^*$  and for each $i,i' \le k $ there is a term $ d_j$ satisfying $ \lb d_i d_{i'} \rb \subseteq \lb d_j \rb$.

 Let $ \Delta$ be a set and let $ \phi : \Delta \to \{d_1, \ldots, d_k\}$ be a bijection.  
Then for any  term $u \in T_{Reg}(\Delta \times \Delta )$, there is a   set $\Lambda \subseteq \Delta \times \Delta $ such that  
$$ u( (\delta, \delta') \setminus \phi( \delta) \parallel \phi(\delta') \vert \; \delta, \delta' \in \Delta)   \le_{EX} \sum_{(\delta,\delta') \in \Lambda }  \phi( \delta) \parallel \phi(\delta') $$
 holds  and for each element  
$ (\epsilon_1 , \epsilon_2) \in \Lambda$, there is a word 
$(\delta_{11},  \delta_{21} ) \ldots   ( \delta_{1b},   \delta_{2b} ) \in \lb u \rb $ such that 
$ \lb \phi( \delta_{i1}) \ldots \phi (\delta_{ib}) \rb \subseteq \lb \phi( \epsilon_i) \rb$ for each $ i \in \{1,2\}$, where if $b=0$ the product 
$ \lb \phi( \delta_{i1}) \ldots \phi (\delta_{ib}) \rb$ is defined to be the language $\{ 1 \}$.

\end{lem}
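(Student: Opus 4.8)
The plan is to argue by induction on the structure of the regular term $u \in T_{Reg}(\Delta \times \Delta)$, tracking along the way a set $\Lambda \subseteq \Delta \times \Delta$ together with the ``witness word'' condition. The key observation that makes the induction go through is that the hypothesis on $d_1,\ldots,d_k$ (that their languages partition $\{\gamma_1,\ldots,\gamma_m\}^*$ and are closed under products up to $\lb \cdot \rb$-equality) says exactly that the quotient $\{\gamma_1,\ldots,\gamma_m\}^*/\!\approx$, where $w \approx w'$ iff $w,w'$ lie in the same $\lb d_i \rb$, is a finite monoid with the $d_i$ as its elements; via $\phi$ we may identify $\Delta$ with this finite monoid and so there is an induced monoid operation $\delta \cdot \delta'$ on $\Delta$ characterised by $\lb \phi(\delta)\phi(\delta')\rb \subseteq \lb \phi(\delta\cdot\delta')\rb$. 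The componentwise product makes $\Delta \times \Delta$ into a finite monoid, and the point of the lemma is that a regular language $\lb u \rb$ over this monoid, after substituting $\phi(\delta)\parallel\phi(\delta')$ for each letter $(\delta,\delta')$ and then using the exchange law to ``merge'' the parallel blocks along the sequential composition, collapses $\le_{EX}$ to a finite sum indexed by the images in $\Delta\times\Delta$ of the words of $\lb u \rb$ under the monoid multiplication.

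First I would set up the identification above and record the single algebraic fact driving everything: for terms $a,b,a',b'$ in $T_{bw-Rat}$ the exchange law (\ref{exch.eqn}) gives $(a \parallel a')(b \parallel b') \le_{EX} b' b \parallel a a' $, hence if $\lb a b \rb \subseteq \lb \phi(\delta)\rb$ and $\lb a' b'\rb \subseteq \lb \phi(\delta')\rb$ in the relevant sense then $(\phi(\delta_1)\parallel\phi(\delta_1'))(\phi(\delta_2)\parallel\phi(\delta_2')) \le_{EX} \phi(\delta_1\cdot\delta_2) \parallel \phi(\delta_1'\cdot\delta_2')$ up to reordering the two sequential factors inside each $\phi$ (harmless, since only the $\approx$-class matters and $\approx$ is a congruence). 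With this in hand, for $u$ a single letter $(\delta,\delta')$ take $\Lambda = \{(\delta,\delta')\}$; for $u = u_1 + u_2$ take the union of the two sets from the inductive hypothesis; for $u = u_1 u_2$ take $\Lambda = \{(\epsilon_1\cdot\epsilon_1', \epsilon_2\cdot\epsilon_2') \mid (\epsilon_1,\epsilon_2)\in\Lambda_1,\ (\epsilon_1',\epsilon_2')\in\Lambda_2\}$ and use the displayed exchange inequality to push the concatenation inside; for $u = v^*$, because $\Delta\times\Delta$ is a finite monoid the submonoid generated by the (finitely many) elements arising from $\lb v \rb$ is finite, so $\Lambda$ is the set of all products $(\delta,\delta')(\delta_1,\delta_1')\cdots$ of elements of $\Lambda_v$ together with the identity pair (coming from the empty word), which is a finite set; the $\le_{EX}$ bound follows by applying the concatenation step repeatedly and using $1 \le_{EX} \phi(\text{identity}) \parallel \phi(\text{identity})$ and idempotency of $+$. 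For $u = 0$ or $u = 1$ the statement is trivial (with $\Lambda = \emptyset$ or $\Lambda = \{(\delta_0,\delta_0)\}$ where $\phi(\delta_0)$ is the class of $1$). At each stage the witness-word condition is maintained by concatenating the witness words supplied by the inductive hypothesis, which is legitimate precisely because $\approx$ is a congruence so that $\lb \phi(\delta_{i1})\cdots\phi(\delta_{ib})\rb \subseteq \lb \phi(\epsilon_i)\rb$ is preserved under taking products of witnesses.

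The main obstacle I expect is the Kleene-star case: one has to be sure that after the exchange-law merging, an \emph{unbounded} concatenation of parallel blocks of bounded width really does collapse to a \emph{finite} sum. This is exactly where the finiteness of the monoid $\Delta\times\Delta$ is essential — an arbitrarily long product $w = (\delta_{11},\delta_{21})\cdots(\delta_{1b},\delta_{2b}) \in \lb u \rb$ has image in $\Delta\times\Delta$ equal to $(\phi^{-1}(\text{class of }\phi(\delta_{11})\cdots\phi(\delta_{1b})),\ \phi^{-1}(\text{class of }\phi(\delta_{21})\cdots\phi(\delta_{2b})))$, a single element of the finite set $\Delta\times\Delta$, and the exchange law lets us dominate the corresponding substituted pomset language by $\phi(\epsilon_1)\parallel\phi(\epsilon_2)$ for that image pair $(\epsilon_1,\epsilon_2)$; since there are only finitely many such pairs the sum is finite. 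A secondary technical point to handle carefully is bookkeeping the ``reordering inside $\phi$'' — one never actually needs $\phi(\delta)\phi(\delta')$ to equal $\phi(\delta')\phi(\delta)$ as languages, only that both are contained in the same $\lb d_j \rb$, which is guaranteed by the congruence hypothesis on $\approx$ and the closure condition on the $d_i$ assumed in the statement. Once these two points are settled the remaining cases are the routine induction sketched above, and the lemma follows.
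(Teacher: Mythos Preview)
Your plan is essentially the paper's proof: structural induction on $u$, with $\Delta$ made into a finite monoid via $\phi$ and the partition hypothesis, the exchange law handling concatenation, and the generated submonoid of $\Delta\times\Delta$ serving as $\Lambda$ in the star case. Two points, however, need correction.

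First, your worry about ``reordering inside $\phi$'' is both unfounded and based on a false claim. The exchange law as stated in (\ref{exch.eqn}) gives $(\phi(\delta_1)\parallel\phi(\delta_1'))\cdot(\phi(\delta_2)\parallel\phi(\delta_2')) \le_{EX} \phi(\delta_1')\phi(\delta_2') \parallel \phi(\delta_1)\phi(\delta_2)$ directly; commutativity of $\parallel$ and $\phi(\alpha)\phi(\beta) \le_{bw\text{-}Rat} \phi(\alpha\cdot\beta)$ (via Theorem~\ref{thm.bwrat.equal.imply.bw-equiv}) then give exactly what you want, with the factors in the correct sequential order. Your assertion that the congruence forces $\phi(\delta)\phi(\delta')$ and $\phi(\delta')\phi(\delta)$ into the same class $\lb d_j\rb$ is false in general --- a congruence need not be commutative --- and is in any case never needed.

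Second, and more substantively, ``applying the concatenation step repeatedly'' is not a valid $\le_{EX}$ argument for the star case. The relation $\le_{EX}$ is about derivability from the bw-rational axioms plus (\ref{exch.eqn}); you cannot reason word-by-word over $\lb v^*\rb$, since $\bar{v^*}$ is not a finite sum. The paper closes this by invoking the Kleene induction axiom (\ref{induct.eqn}): writing $\Psi$ for the submonoid of $\Delta\times\Delta$ generated by $\Lambda_v$, one checks (one exchange plus $\phi(\epsilon)\phi(\delta) \le_{bw\text{-}Rat} \phi(\epsilon\delta)$ plus closure of $\Psi$) that
\[
\Big(\sum_{(\epsilon_1,\epsilon_2)\in\Lambda_v}\phi(\epsilon_1)\parallel\phi(\epsilon_2)\Big)\cdot\sum_{(\delta_1,\delta_2)\in\Psi}\phi(\delta_1)\parallel\phi(\delta_2) \;\le_{EX}\; \sum_{(\delta_1,\delta_2)\in\Psi}\phi(\delta_1)\parallel\phi(\delta_2),
\]
whence the induction axiom yields $\bar v^{\,*}\cdot\sum_\Psi \le_{EX} \sum_\Psi$, and since $(1_\Delta,1_\Delta)\in\Psi$ gives $1 \le_{bw\text{-}Rat} \sum_\Psi$, one obtains $\bar{v^*} \le_{EX} \sum_\Psi$. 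You have the right $\Lambda$ and the right intuition about finiteness, but the step from ``each finite power is bounded'' to ``the star is bounded'' must go through (\ref{induct.eqn}), not through an informal limit.
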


\begin{proof}
This follows by induction on the structure of $u$.   For convenience, since $ \phi$ is a bijection we may define  $ 1_\Delta \in \Delta$ to be the element satisfying $ 1 \in \lb  \phi(1_\Delta) \rb $, and for any regular  term $x$ over the alphabet  $ \Delta \times \Delta$ we define  $ \bar{x}= x( (\delta, \delta') \setminus \phi( \delta) \parallel \phi(\delta') \vert \; \delta, \delta' \in \Delta)  $.   If $u$ is $0$ or an element of $\Delta \times \Delta $ then $ \Lambda$ is as follows; 
$$ u = \begin{cases}
0 &  \Lambda = \emptyset \\ 
(\delta_1, \delta_2) & \Lambda = \{ (\delta_1, \delta_2) \}
\end{cases}
$$
and the conclusion of the Lemma is immediate. If $u =1$, we define $  \Lambda = \{( 1_\Delta, 1_\Delta)\} $; for then by Theorem \ref{thm.bwrat.equal.imply.bw-equiv},  $ \bar{u} = 1 \le_{bw-Rat}  \phi(1_\Delta) \parallel  \phi(1_\Delta)$ and $\{ 1 \} \subseteq \lb \phi(1_\Delta) \rb$ hold, as required by the Lemma. 
If $u$ is a sum of terms, then the result follows from the inductive hypothesis applied to each of these terms. 

There remain two cases, in both of which it is convenient to define multiplication on the set $ \Delta$ as follows, using the fact that  $ \phi$ is a bijection; if $ \phi( \delta) \phi( \delta') \subseteq \phi( \delta'')$, then $ \delta \delta' = \delta''$ holds. 
  Clearly this definition turns $ \Delta$ into a monoid, with  $ 1_\Delta$ as the    identity element.  Furthermore, for any $ \delta_1, \ldots , \delta_r,  \in \Delta$, 
$$ \phi( \delta_1) \ldots \phi( \delta_r) \subseteq \phi( \delta_1 \ldots \delta_r )$$ 
follows by induction on $r$.    
   Thus  the condition $ \lb \phi( \delta_{i1}) \ldots \phi (\delta_{ib}) \rb \subseteq \lb \phi( \epsilon_i) \rb$ in the statement of the Lemma is equivalent to  $ \delta_{i1} \ldots \delta_{ib} = \epsilon_i$, where if $b =0$  this states that $ 1_\Delta=  \epsilon_i $. 
\begin{itemize}
\item
Suppose $ u = u_1 u_2$. By the inductive hypothesis,  for each $ i \in \{ 1,2\}$, there are    sets $\Lambda_i \subseteq \Delta \times \Delta $ such that
$$ \bar{u_i} \le_{EX} \sum_{(\delta,\delta') \in \Lambda_i }  \phi( \delta) \parallel \phi(\delta'). $$
Define the set 
$$ \Upsilon = 
\big\{ ( \epsilon_{11} \epsilon_{21} , \epsilon_{12}\epsilon_{22}) \vert \,  (\epsilon_{i1},  \epsilon_{i2}) \in  \Lambda_i \text{ for each $ i \in \{1,2 \}$}  \big\} . $$
From the exchange  law and the fact that by Theorem \ref{thm.bwrat.equal.imply.bw-equiv}, $ \phi(\delta) \phi( \delta') \le_{bw-Rat} \phi(\delta \delta')$ always holds, 
\begin{align*}
 \bar{u} \; \le_{EX} \;\;
   \big(  &  \sum_{  \;\;\;\;\;  (\epsilon_{11}, \epsilon_{12}) \in \Lambda_1 \;\;\;\;\;}   \phi( \epsilon_{11}) \parallel
 \phi(\epsilon_{12}) \big) \;  \big( \sum_{(\epsilon_{21}, \epsilon_{22}) \in \Lambda_2 }  \phi( \epsilon_{21}) \parallel \phi(\epsilon_{22}) \big) 
 && \le_{EX} \\
 & \sum_{ \;\;\;\;\;\begin{subarray}{12}
   (\epsilon_{11},  \epsilon_{12}) \in  \Lambda_1, \\
  (\epsilon_{21},  \epsilon_{22}) \in  \Lambda_2 
 \end{subarray} \;\;\;\;\; }  \phi(  \epsilon_{11}) \phi(\epsilon_{21}) \parallel \phi( \epsilon_{12}) \phi (\epsilon_{22})  
&&    \le_{bw-Rat}
   \\
 & \sum_{  \;(\epsilon_{11} \epsilon_{21}, \epsilon_{12}\epsilon_{22}) \in \Upsilon \;}  \phi(  \epsilon_{11}\epsilon_{21}) \parallel \phi( \epsilon_{12}\epsilon_{22})
  && \le_{bw-Rat} 
   \\
    & \sum_{\;\; \; \;\;\;\;\;(\delta,\delta') \in \Upsilon \;\;\;\;\;\; \;\;}  \phi( \delta) \parallel \phi(\delta')
 \end{align*}
holds. 
In addition, if $ \lambda \in \Upsilon$, say $ \lambda = ( \epsilon_{11} \epsilon_{21} , \epsilon_{12}\epsilon_{22})$ with  each $ (\epsilon_{i1},  \epsilon_{i2}) \in  \Lambda_i$,  by the inductive hypothesis
 there is a word 
$w_i = (\delta_{i11},  \delta_{i21} ) \ldots   ( \delta_{i1b_i},   \delta_{i2b_i} ) \in \lb u_i \rb $ such that 
$ \delta_{ij1} \ldots  \delta_{ijb_i} =\epsilon_{ij}$ for each $ i ,j \in \{1,2\}$.
Thus the word 
$ w_1 w_2 \in  \lb u \rb$, and the product of all the $j$th components of the letters of $ w_1 w_2$ is 
$   \delta_{1j1} \ldots  \delta_{1jb_1}   \delta_{2j1} \ldots \delta_{2jb_2} = \epsilon_{1j}  \epsilon_{2j}   $, proving the result. 
\item
Suppose $ u= t^*$.  By the inductive hypothesis, there is a set 
$\Lambda \subseteq \Delta \times \Delta $ such that
 $$ \bar{t} \;  \le_{EX}  \sum_{(\delta,\delta') \in \Lambda }  \phi( \delta) \parallel \phi(\delta').  $$ 
 Define  the set 
$$\Psi = \{ (\delta_{11} \ldots \delta_{b1}, \delta_{12} \ldots \delta_{b2}) \vert \, b \ge 0, \text{ each } (\delta_{j1} , \delta_{j2}) \in \Lambda \}$$
(note that $(1_\Delta, 1_\Delta) \in \Psi$). 
For each $ ( \epsilon_1, \epsilon_2)\in \Lambda$,  
\begin{align*}
 \phi( \epsilon_1) \parallel \phi( \epsilon_2)    & \sum_{( \delta_1,\delta_2)\in \Psi} \phi( \delta_1) \parallel \phi ( \delta_2) && \le_{EX}
  \\
&  \sum_{( \delta_1,\delta_2)\in \Psi} \phi( \epsilon_1)  \phi( \delta_1) \parallel \phi ( \epsilon_2) \phi( \delta_2)   && \le_{bw-Rat}
   \\
&   \sum_{( \delta_1,\delta_2)\in \Psi} \phi( \epsilon_1 \delta_1) \parallel \phi ( \epsilon_2 \delta_2) &&
\le_{bw-Rat} 
\\
& \sum_{( \delta_1,\delta_2)\in \Psi} \phi( \delta_1) \parallel \phi ( \delta_2)
\end{align*}
follows from the exchange law and Theorem \ref{thm.bwrat.equal.imply.bw-equiv} and the fact that $ (\delta_1,\delta_2)\in \Psi \Rightarrow ( \epsilon_1\delta_1, \epsilon_2 \delta_2)\in \Psi  $ always holds, 
 and hence 
\begin{align*}
 u = \bar{t^*}  \; & \le_{bw-Rat}    
   && \bar{t^*}        \sum_{( \delta_1,\delta_2)\in \Psi} \phi( \delta_1)     
      \parallel     \phi ( \delta_2)
 \\
 &\le_{EX}   && \big(        \sum_{(\epsilon_1,\epsilon_2)\in \Lambda} \phi( \epsilon_1)  \parallel \phi( \epsilon_2)   \big)^*  \sum_{( \delta_1,\delta_2)\in \Psi} \phi( \delta_1) \parallel \phi ( \delta_2)
 \\
 & \le_{EX}      &&    \sum_{( \delta_1,\delta_2)\in \Psi} \phi( \delta_1)  \parallel \phi ( \delta_2) 
  \end{align*}
  using the induction axiom of Kleene algebra. 
In addition, if $\lambda \in \Psi$, then
 $$ \lambda = (\delta_{11} \ldots \delta_{b1}, \delta_{12} \ldots \delta_{b2})$$
  for some  $ b \ge 0$, and  each $ (\delta_{j1} , \delta_{j2}) \in \Lambda $; and by the inductive hypothesis, for each $ j \le b$ and $ i \in \{ 1,2\}$  there is a word $w_j \in \lb t \rb $ such that the product of all the $i$th components of the letters of $w_j$ is $ \delta_{ji}$. Clearly the word  $ w=  w_1 \ldots w_b  \in \lb u \rb$, and the product of all the $i$th components of the letters of $w$ is $ \delta_{1i} \ldots \delta_{bi}$, completing the proof. \qed
\end{itemize}

\end{proof}

\begin{thm}[$ \odot$ preserves bw-rationality]
 \label{thm.odot.preserve.bwRat}
Let $r_1, r_2 $ be bw-rational terms over an alphabet. Then the language 
$ \lb  r_1 \rb  \odot \lb  r_2  \rb $ is bw-rational and satisfies $   \lb  r_1 \rb  \odot \lb   r_2  \rb    =_{EX} r_1 \parallel r_2  $. 
\end{thm}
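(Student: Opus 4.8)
The plan is to prove the two halves separately. One inclusion is immediate: every bw-rational term defines a language of series-parallel pomsets, and $p_1\parallel p_2\in p_1\odot p_2$ for all $p_1,p_2\in\pom_{sp}$, so $\lb r_1\rb\parallel\lb r_2\rb\subseteq\lb r_1\rb\odot\lb r_2\rb$, whence $r_1\parallel r_2\le_{bw\text{-}Rat}\lb r_1\rb\odot\lb r_2\rb$ and a fortiori $r_1\parallel r_2\le_{EX}\lb r_1\rb\odot\lb r_2\rb$. The real content is that $\lb r_1\rb\odot\lb r_2\rb$ is bw-rational and $\lb r_1\rb\odot\lb r_2\rb\le_{EX}r_1\parallel r_2$. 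I would prove this by induction on $\width(\lb r_1\rb)+\width(\lb r_2\rb)$, which is finite since bw-rational languages have bounded width; the base case $\width(\lb r_1\rb)+\width(\lb r_2\rb)=0$ is trivial, as then each $\lb r_i\rb\subseteq\{1\}$.

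For the inductive step I would decompose $\lb r_1\rb\odot\lb r_2\rb$, a language of series-parallel pomsets, as $(\lb r_1\rb\odot\lb r_2\rb)\cap\{1\}$ together with the finitely many non-empty pieces $(\lb r_1\rb\odot\lb r_2\rb)\cap\para_k$, $k\ge1$ (recall $\para_1=\seq\cup\Sigma$, and a width-$w$ pomset lies in no $\para_k$ with $k>w$), and show each piece is bw-rational and $\le_{EX}r_1\parallel r_2$; summing then finishes, since bw-rational languages are closed under finite union and $\le_{EX}$-bounds under finite sums. The $\{1\}$-piece is trivial: $1\in\lb r_1\rb\odot\lb r_2\rb$ forces $1\in\lb r_i\rb$ for $i=1,2$, whence $\{1\}\le_{bw\text{-}Rat}r_i$ and so $\{1\}\le_{bw\text{-}Rat}r_1\parallel r_2$. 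For $k\ge2$ I would invoke Corollary \ref{2058f.turkey.fun.cor} with $L_i=\lb r_i\rb$; its hypothesis — bw-rationality and the $\le_{EX}$ bound for the $(\seq\cup\Sigma)$-part of $\odot$-products of strictly smaller total width — is exactly the induction hypothesis, extracting the $(\seq\cup\Sigma)$-part via the bw-rational analogue of Lemma \ref{rat.lang.jay249y.partition.lem}.

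The crux is $k=1$, i.e.\ $(\lb r_1\rb\odot\lb r_2\rb)\cap(\seq\cup\Sigma)$. Using Lemma \ref{rat.lang.jay249y.partition.lem} and the sequential analogue of Proposition \ref{devil.386f.kite.prop}, I would write $\lb r_i\rb=\lb t_i(c_1,\dots,c_m)\rb$ with $t_i$ a regular term over a common atom alphabet $\{\gamma_1,\dots,\gamma_m\}$ and each $c_j$ a non-sequential bw-rational term; by Lemma \ref{6299x.reg.lang.equiv.lem} choose one finite-index congruence $\approx$ on $\{\gamma_1,\dots,\gamma_m\}^{*}$ making both $\lb t_i\rb$ unions of $\approx$-classes, and take $\Delta$ to be the quotient with $\phi$ the identity. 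Corollary \ref{cor.odot.into.reg.over.para} then produces a regular term $u$ over $\Delta\times\Delta$ with $(\lb r_1\rb\odot\lb r_2\rb)\cap(\seq\cup\Sigma)$ equal to $u$ after each ground letter $(\delta_1,\delta_2)$ is replaced by $g_{(\delta_1,\delta_2)}:=(\phi(\delta_1)(C)\odot\phi(\delta_2)(C))\cap(\para\cup\Sigma)$, where $C_j=\lb c_j\rb$. Each $g_{(\delta_1,\delta_2)}$ has finite singleton part, and its $\para_k$-part ($k\ge2$) is bw-rational and $\le_{EX}\phi(\delta_1)(C)\parallel\phi(\delta_2)(C)$ by Corollary \ref{2058f.turkey.fun.cor} applied to $\phi(\delta_1)(C),\phi(\delta_2)(C)$; the induction is still well-founded here, because by the width bound (\ref{eqn.width.phi.less.than.t-i}) we have $\sum_i\width(\phi(\delta_i)(C))\le\sum_i\width(\lb r_i\rb)$, so anything of width strictly below $\sum_i\width(\phi(\delta_i)(C))$ is of width strictly below $\sum_i\width(\lb r_i\rb)$. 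Hence every $g_{(\delta_1,\delta_2)}$ is bw-rational and $\le_{EX}\phi(\delta_1)(C)\parallel\phi(\delta_2)(C)$, so $(\lb r_1\rb\odot\lb r_2\rb)\cap(\seq\cup\Sigma)$ is bw-rational, and by monotonicity of the regular term $u$ it is $\le_{EX}$ the result of substituting $\phi(\delta_1)(C)\parallel\phi(\delta_2)(C)$ for $(\delta_1,\delta_2)$ in $u$. Applying Lemma \ref{lem.ordering.reg.over.paral} to the $\gamma$-level term $u$ (where the exchange law enters) and then substituting $C_j$ for $\gamma_j$, this is $\le_{EX}\sum_{(\epsilon_1,\epsilon_2)\in\Lambda}\phi(\epsilon_1)(C)\parallel\phi(\epsilon_2)(C)$; each $\phi(\epsilon_i)$ meets $\lb t_i\rb$ by property (\ref{eqn.delta.without.C-j.into.t-i}) of Corollary \ref{cor.odot.into.reg.over.para}, hence, being a single $\approx$-class while $\lb t_i\rb$ is a union of such, is contained in $\lb t_i\rb$, so $\phi(\epsilon_i)(C)\subseteq\lb t_i(C)\rb=\lb r_i\rb$ and the whole sum is $\le_{bw\text{-}Rat}r_1\parallel r_2$, completing the inductive step.

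The main obstacle is exactly this $k=1$ case: arranging the hypotheses of Corollary \ref{cor.odot.into.reg.over.para} (a common non-sequential atom alphabet for $r_1$ and $r_2$ and a common finite-index congruence), and then converting the purely combinatorial ``regular-over-$\odot$-products'' equality it delivers into a genuine $\le_{EX}$ inequality through the exchange law via Lemma \ref{lem.ordering.reg.over.paral}, all the while keeping the width accounting tight enough that every appeal to Corollary \ref{2058f.turkey.fun.cor} is underwritten by the induction hypothesis rather than circularly by the theorem being proved.
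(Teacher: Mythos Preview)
Your proposal is correct and follows essentially the same route as the paper: induction on $\width(r_1)+\width(r_2)$, with Corollary~\ref{2058f.turkey.fun.cor} handling the $\para_k$ pieces for $k\ge 2$, and the sequential piece treated via the decomposition $r_i =_{bw\text{-}Rat} t_i(c_1,\dots,c_m)$, Lemma~\ref{6299x.reg.lang.equiv.lem}, Corollary~\ref{cor.odot.into.reg.over.para}, and Lemma~\ref{lem.ordering.reg.over.paral}, followed by the $\approx$-class argument to get $\phi(\epsilon_i)\subseteq\lb t_i\rb$. One small slip: Corollary~\ref{cor.odot.into.reg.over.para} computes $(\lb r_1\rb\odot\lb r_2\rb)\cap\seq$, not $\cap(\seq\cup\Sigma)$, so the (finite, trivial) piece $(\lb r_1\rb\odot\lb r_2\rb)\cap\Sigma$ needs a one-line separate treatment, just as the paper lumps $\Sigma$ with $\{1\}$ rather than with $\seq$.
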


\begin{proof}
Assume that each term $r_i \in T_{bw-Rat}( \Sigma)$ for  an alphabet $ \Sigma$. 
 We will first prove that 
 \begin{equation}
  \lb  r_1 \rb  \odot \lb   r_2  \rb    \le_{EX} r_1 \parallel r_2  \label{eqn.odot.EX-prec.para}
\end{equation}     by induction  on 
 $ \sum_{i = 1}^2 \width(r_i)$.  
 We will do this as follows. Clearly 
  $\lb  r_1 \rb  \odot \lb   r_2  \rb \subseteq \Sigma \cup \{ 1 \} \cup \seq \cup \, \cup_{k=1}^{\width  ( \lb  r_1 \rb  \odot \lb   r_2  \rb   )}  \para_k$. 
  For any $ k \ge 2$, it follows  from Corollary \ref{2058f.turkey.fun.cor} and the inductive hypothesis that  the language $    \lb  r_1 \rb  \odot \lb   r_2  \rb    \cap \para_k$ is bw-rational 
and satisfies  $   \lb  r_1 \rb  \odot \lb   r_2  \rb    \cap \para_k \,  \le_{EX} r_1 \parallel r_2  $.
The same assertion with  $\para_k$ replaced by $ \Sigma \cup \{ 1 \}$ obviously holds. 
Thus it suffices to prove that 
  $   \lb  r_1 \rb  \odot \lb   r_2  \rb    \cap  \seq $ is bw-rational and \begin{equation}
  \lb  r_1 \rb  \odot \lb   r_2  \rb    \cap  \seq  \, \le_{EX}  r_1 \parallel r_2 \label{eqn.r1.odot.r2.cap.EX}
\end{equation} 
  holds. 
This will be proved using Corollary \ref{cor.odot.into.reg.over.para}.

By Lemma \ref{rat.lang.jay249y.partition.lem} and Proposition \ref{devil.386f.kite.prop} and its sequential counterpart, and by ignoring the cases in their proofs that refer to parallel iteration $ ^{(*)}$, 
 there are 
  regular terms $t_i= t_i( \gamma_1, \ldots, \gamma_m)$ over an alphabet
 $ \Gamma = \{\gamma_1, \ldots, \gamma_m \}$ satisfying  $ r_i =_{bw-Rat}  t_i(c_1,\ldots, c_m)$ for non-sequential  terms  $c_1,\ldots, c_m$. 
By Lemma \ref{6299x.reg.lang.equiv.lem}, there exists a congruence $\approx$ of finite index of the monoid $( \Gamma^*  , 1, \cdot)$
 such that each language  $ \lb t_i \rb$ is the union of  a subcollection of $\approx$-equivalence classes. Define the languages $C_j = \lb c_j \rb$ for each $ j \le m$.

 Let $ \Delta$ be a set such that there is a bijection $ \phi$ from $ \Delta$ to the set of  $\approx$-equivalence classes. 
 Then by Corollary \ref{cor.odot.into.reg.over.para}, there is a regular term $u$  with $\supp(u) \subseteq \Delta \times \Delta $ such that
 (\ref{eqn.reg.over.odot.intersect.not_seq}) holds, 
 and for each word $ ( \delta_{11},  \delta_{21} ) \ldots  ( \delta_{1b},   \delta_{2b} ) \in \lb u \rb $ and  $ i =1,2$, (\ref{eqn.delta.without.C-j.into.t-i}) holds. 
 
Let $(\delta_1, \delta_2 )  \in \supp(u)$. 
 Then by  (\ref{eqn.width.phi.less.than.t-i}), 
 $ \width \big( \phi(\delta_i)( C_1, \ldots, C_m) \big) \le      \width (r_i)$ holds for each  $i =1,2$. Hence by  the inductive hypothesis, we can apply 
  Corollary \ref{2058f.turkey.fun.cor}  to the languages 
$   \phi(\delta_i)(C_1,\ldots, C_m)$,  
and so    for each $k \ge 2$, 
\begin{align*}
&   \phi(\delta_1)(C_1,\ldots, C_m)   \odot  \phi(\delta_2)(C_1,\ldots, C_m)  \,   \cap \para_k \\
 &\le_{EX}\\ 
&\phi(\delta_1)(C_1,\ldots, C_m)  \parallel \phi(\delta_2)(C_1,\ldots, C_m) 
\end{align*}
holds. 
In addition, the same statement with $ \para_k$ replaced by $ \Sigma$ is clearly true. Thus
by taking the union of the languages
 $ \phi(\delta_1)(C_1,\ldots, C_m)  \odot  \phi(\delta_2)(C_1,\ldots, C_m)  \big)    \cap \para_k$ 
for each
\\
 $k \in \{ 2, \ldots, \width\big( \phi(\delta_1)(C_1,\ldots, C_m)   \odot  \phi(\delta_2)(C_1,\ldots, C_m)  \big)   \}$ and also the language \\
  $ \phi(\delta_1)(C_1,\ldots, C_m)   \odot  \phi(\delta_2)(C_1,\ldots, C_m)  \,   \cap \Sigma$  it follows that 
\begin{align*}
& \phi(\delta_1)(C_1,\ldots, C_m)   \odot  \phi(\delta_2)(C_1,\ldots, C_m)  \,     \cap (\para \cup \Sigma)\\
 &\le_{EX}\\ 
&\phi(\delta_1)(C_1,\ldots, C_m)  \parallel \phi(\delta_2)(C_1,\ldots, C_m) 
\end{align*}
holds. 
Hence by (\ref{eqn.reg.over.odot.intersect.not_seq}),  the language  $ \lb r_1 \rb \odot \lb    r_2 \rb \, \cap \, \seq $
is bw-rational and 
\begin{align}
&\lb r_1 \rb \odot \lb    r_2 \rb \, \cap \, \seq \nonumber \\
& \le_{EX} \nonumber   \\
&  u\big( (\delta_1, \delta_2) \mathlarger{\setminus} \big(  \phi(\delta_1)(C_1,\ldots, C_m) \parallel  \phi(\delta_2)(C_1,\ldots, C_m)  \big)  \vert \; \delta_1, \delta_2 \in \Delta \big) \label{eqn.reg.paraprod.prec.parasum} 
 \end{align}
holds. 
By  Lemma \ref{lem.ordering.reg.over.paral}, there exists a set $ \Lambda \subseteq \Delta \times \Delta$ such that 
\begin{equation}
  u \big((\delta_1, \delta_2) \mathlarger{\setminus} \phi(\delta_1)  \parallel \phi(\delta_2) \vert \; \delta_1, \delta_2 \in \Delta  \big)  \le_{EX} \sum_{(\delta_1, \delta_2) \in \Lambda} \phi(\delta_1) \parallel \phi(\delta_2)     \label{eqn.reg.over.para.to.sumpara}
\end{equation}
and
 for each $ (\epsilon_1, \epsilon_2) \in \Lambda $, there is a word 
$(\delta_{11},  \delta_{21} ) \ldots   ( \delta_{1b},   \delta_{2b} ) \in \lb u \rb $ such that 
$$ \emptyset \not=   \phi(\delta_{i1}) \ldots \phi(\delta_{ib})  \subseteq  \phi( \epsilon_i) $$
 for each $ i \in \{1,2\}$, where if $b=0$ the product 
$ \phi(\delta_{i1}) \ldots \phi(\delta_{ib}) $ is defined to be the language $\{ 1 \}$;
and since each language  $\lb  {t}_i \rb$ is  a union of $\approx$-equivalence classes, 
$  \phi( \epsilon_i) \subseteq \lb  {t}_i \rb $ follows from (\ref{eqn.delta.without.C-j.into.t-i}) and so by Theorem \ref{thm.bwrat.equal.imply.bw-equiv}, 
$$ \sum_{(\epsilon_1, \epsilon_2) \in \Lambda} \phi(\epsilon_1)(C_1,\ldots, C_m) \parallel \phi(\epsilon_2)(C_1,\ldots, C_m) \, \le_{EX} r_1 \parallel r_2 $$
and so  (\ref{eqn.r1.odot.r2.cap.EX})   follows from (\ref{eqn.reg.paraprod.prec.parasum}) and (\ref{eqn.reg.over.para.to.sumpara}) with the languages $C_j$ substituted for $ \gamma_j$.

 Hence we have proved   (\ref{eqn.odot.EX-prec.para}).  Since   $ \,  \lb  r_1 \rb  \odot \lb   r_2  \rb    \supseteq \lb r_1 \parallel r_2 \rb $ clearly holds,  by Theorem \ref{thm.bwrat.equal.imply.bw-equiv} we can replace $ \le_{EX}$  by $ =_{EX}$ in (\ref{eqn.odot.EX-prec.para}). 
    \qed  

\end{proof}

Our main theorems concerning pomset ideals follow. 

\begin{thm}[$\id_{sp}$ preserves bw-rationality of pomset languages] \label{3696r.trex.ideal.rational.thm}
Let $t$ be a  bw-rational pomset term. Then $ \id_{sp}(\lb t\rb)$ is bw-rational and \\
$  \id_{sp} (\lb t\rb) \le_{EX}  t $ holds. 
\end{thm}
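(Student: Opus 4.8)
The plan is to prove the theorem by induction on the structure of $t$, using the results on $\id_{sp}$ established in Lemma \ref{oslash.852t.para.induct.lem} together with Theorem \ref{thm.odot.preserve.bwRat} to handle the only difficult case. We may work with a fixed alphabet $\Sigma$ and assume, by Theorem \ref{thm.bwrat.equal.imply.bw-equiv} and Proposition \ref{transform.term.prop} (ignoring the clauses that mention $^{(*)}$), that $t$ has operations in $\{0,1,+,\cdot,\parallel,{}^!\}$, since this does not affect either the language $\id_{sp}(\lb t\rb)$ or the relation $=_{EX}$. Throughout we use that $\id_{sp}$ distributes over unions by (\ref{id.union.eqn}), and that $L \le_{EX} \id_{sp}(L)$ always holds since every pomset lies in its own sp-ideal, so it suffices to prove the upper bound $\id_{sp}(\lb t\rb) \le_{EX} t$ and bw-rationality.

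First I would dispose of the base cases: if $t \in \Sigma \cup \{0,1\}$ then $\id_{sp}(\lb t\rb) = \lb t\rb$ and there is nothing to prove. For $t = t_1 + t_2$ the result is immediate from (\ref{id.union.eqn}) and the inductive hypothesis applied to each $t_i$, since $\le_{EX}$ is a precongruence. For $t = t_1 \cdot t_2$ we use the first identity in (\ref{id.seq.multiply}) in its $\id_{sp}$ form, namely $\id_{sp}(\lb t_1\rb \lb t_2\rb) = \id_{sp}(\lb t_1\rb)\,\id_{sp}(\lb t_2\rb)$, so that by the inductive hypothesis $\id_{sp}(\lb t\rb)$ is a product of two bw-rational languages, hence bw-rational, and $\id_{sp}(\lb t\rb) \le_{EX} t_1 \cdot t_2 = t$ by monotonicity of $\cdot$ with respect to $\le_{EX}$. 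The case $t = s^!$, i.e.\ $t =_{bw-Rat} s^* s^2$, follows similarly: by the $\id_{sp}$-version of the second identity in (\ref{id.seq.multiply}), $\id_{sp}(\lb s^*\rb) = (\id_{sp}(\lb s\rb))^*$, and combining this with the product case and the inductive hypothesis for $s$ gives bw-rationality and $\id_{sp}(\lb t\rb) \le_{EX} (\,\id_{sp}(\lb s\rb)\,)^* (\id_{sp}(\lb s\rb))^2 \le_{EX} s^* s^2 = s^!$, using that Kleene star and squaring are monotone for $\le_{EX}$ (the star case needs only the Kleene axioms, which hold in every bw-rational algebra).

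The one substantive case is $t = r_1 \parallel r_2$. Here (\ref{odot.eqn}) in Lemma \ref{oslash.852t.para.induct.lem} gives $\id_{sp}(\lb t\rb) = \id_{sp}(\lb r_1 \parallel r_2\rb) = \id_{sp}(\lb r_1\rb) \odot \id_{sp}(\lb r_2\rb)$. By the inductive hypothesis, $L_i := \id_{sp}(\lb r_i\rb)$ is bw-rational — say $L_i = \lb r_i'\rb$ for bw-rational terms $r_i'$ — and satisfies $L_i \le_{EX} r_i$. Now Theorem \ref{thm.odot.preserve.bwRat} applied to $r_1', r_2'$ tells us that $\lb r_1'\rb \odot \lb r_2'\rb = L_1 \odot L_2$ is bw-rational and $L_1 \odot L_2 =_{EX} r_1' \parallel r_2'$. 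Since $\parallel$ is monotone for $\le_{EX}$, from $L_i \le_{EX} r_i$ we get $r_1' \parallel r_2' \le_{EX} r_1 \parallel r_2 = t$, and hence $\id_{sp}(\lb t\rb) = L_1 \odot L_2 \le_{EX} t$, completing the induction. The main obstacle is therefore already packaged in Theorem \ref{thm.odot.preserve.bwRat}, whose proof is the real work; given that theorem, the present argument is a clean structural induction, and the only points needing a little care are the reductions of general bw-rational terms to the normal form with operations in $\{0,1,+,\cdot,\parallel,{}^!\}$ and the verification that each term-forming operation is monotone with respect to $\le_{EX}$, which in the case of $^*$ relies on the Kleene induction axioms rather than on any pomset-specific fact.
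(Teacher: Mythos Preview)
Your proof is correct and follows essentially the same structural induction as the paper, with the parallel case handled via (\ref{odot.eqn}) and Theorem \ref{thm.odot.preserve.bwRat}; you even spell out the parallel case (passing to bw-rational terms $r_i'$ for $\id_{sp}(\lb r_i\rb)$ and using monotonicity of $\parallel$) more explicitly than the paper does. The only difference is that your reduction to the $\{0,1,+,\cdot,\parallel,{}^!\}$ normal form is unnecessary: since bw-rational terms already use $^*$ rather than $^{(*)}$, the paper simply treats the case $t=u^*$ directly using the $\id_{sp}$-version of (\ref{id.seq.multiply}), avoiding the detour through $^!$.
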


\begin{proof}
 The result  follows by induction on the structure of $t$. If $t \in \{0,1\}$ or $t$ is a letter, then the result is immediate. If $t = u + v$ then the result follows since $ \id_{sp} (\lb t\rb) =  \id_{sp} (\lb u\rb) +  \id_{sp} (\lb v\rb)$. 
If $t= u^*$ then the result follows since $ \id_{sp} (\lb t\rb) =( \id_{sp} (\lb u\rb))^*$. If $t= uv $ then  the result follows since  $ \id_{sp} ( \lb  t\rb) =  \id_{sp} (\lb u\rb) \id_{sp} ( \lb v\rb)$. Lastly if $t = r_1\parallel r_2  $ then the result follows from Theorem \ref{thm.odot.preserve.bwRat} since $  \id_{sp} (\lb t\rb) =  \id_{sp} ( \lb r_1 \rb) \odot  \id_{sp} (\lb r_2\rb)$  by (\ref{odot.eqn}) in Lemma \ref{oslash.852t.para.induct.lem}.
\qed
\end{proof}

\begin{thm}[free  bw-rational algebras with exchange law given as ideals] \label{3960h.mice.rational.exchange.derive.thm}
Let $ \Sigma$ be an alphabet and let  $a,b \in T_{bw-Rat}$. Suppose that $  \id_{sp} (\lb a \rb) =   \id_{sp} ( \lb b \rb)$ holds. Then $ a =_{EX} b $ holds. Thus the isomorphic bw-rational algebras
\begin{equation}
\big\{ \id_{sp}( \lb t \rb ) \big\vert \, t \in T_{bw-Rat}(\Sigma) \big\} \text{ and }  
\big\{ \id( \lb t \rb ) \big\vert \, t \in T_{bw-Rat}(\Sigma) \big\}  
\end{equation}
with $ \parallel$ interpreted as in (\ref{id.sp.interp.eqn}) and (\ref{id.interp.eqn}) respectively, 
are both freely generated in the class of bw-rational algebras satisfying (\ref{exch.eqn})
 by the elements $ \{ \sigma \}$ for $ \sigma \in \Sigma$. 
\end{thm}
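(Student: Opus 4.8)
The plan is to obtain this theorem by assembling Theorem~\ref{3696r.trex.ideal.rational.thm}, Theorem~\ref{thm.bwrat.equal.imply.bw-equiv}, Proposition~\ref{prop.id.exch.satisfy}, Theorem~\ref{thm.ideal.biKleene.homo} and the isomorphism of Theorem~\ref{thm.id.subalgbra.iso.bi-Kleene}; no new combinatorics is needed, so the argument is essentially bookkeeping around these ingredients.

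First I would establish the auxiliary fact that $t =_{EX} \id_{sp}(\lb t \rb)$ for every $t \in T_{bw-Rat}(\Sigma)$, using the broadened convention of Definition~\ref{defn.term.EX.reln}. By Theorem~\ref{3696r.trex.ideal.rational.thm} there is a bw-rational term $\hat t$ with $\lb \hat t \rb = \id_{sp}(\lb t \rb)$ and $\hat t \le_{EX} t$. Conversely, since $\lb t \rb \subseteq \pom_{sp}$ and $\id_{sp}$ is a closure operator we have $\lb t \rb \subseteq \lb \hat t \rb$, hence $\lb t + \hat t \rb = \lb \hat t \rb$, and Theorem~\ref{thm.bwrat.equal.imply.bw-equiv} gives $t + \hat t =_{bw-Rat} \hat t$, that is $t \le_{bw-Rat} \hat t$ and a fortiori $t \le_{EX} \hat t$. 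Combining the two inequalities yields $t =_{EX} \hat t = \id_{sp}(\lb t \rb)$. The first assertion of the theorem then follows at once: if $\id_{sp}(\lb a \rb) = \id_{sp}(\lb b \rb)$ then $a =_{EX} \id_{sp}(\lb a \rb) = \id_{sp}(\lb b \rb) =_{EX} b$.

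Next I would set up the freeness statement. Write $A = \{ \id_{sp}(\lb t \rb) \mid t \in T_{bw-Rat}(\Sigma)\}$, with $\parallel$ and $^{*}$ interpreted as in (\ref{id.sp.interp.eqn}); by Theorem~\ref{thm.ideal.biKleene.homo} and Proposition~\ref{prop.id.exch.satisfy}, $A$ is a bw-rational algebra satisfying (\ref{exch.eqn}). Using (\ref{id.union.eqn}), (\ref{id.seq.multiply}) and (\ref{id.para.multiply}) of Lemma~\ref{oslash.852t.para.induct.lem} (in their $\id_{sp}$ forms) one checks, by induction on the structure of $t$, that evaluating $t$ in $A$ yields exactly $\id_{sp}(\lb t \rb)$; concretely, each operation $\square$ of $A$ satisfies $\id_{sp}(\lb s \rb)\,\square\,\id_{sp}(\lb t \rb) = \id_{sp}(\lb s\,\square\,t \rb)$, and similarly for $^{*}$. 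In particular $A$ is generated as a bw-rational algebra by the singletons $\{\sigma\} = \id_{sp}(\lb \sigma \rb)$, $\sigma \in \Sigma$. Given any bw-rational algebra $K$ satisfying (\ref{exch.eqn}) and any map $f\colon \Sigma \to K$, I would define $\bar f(\id_{sp}(\lb t \rb))$ to be the value of $t$ in $K$ under the assignment $\sigma \mapsto f(\sigma)$. This is well defined because $\id_{sp}(\lb t \rb) = \id_{sp}(\lb t' \rb)$ forces $t =_{EX} t'$ by the first part and $K$ validates every $=_{EX}$-identity; it is a homomorphism by the displayed reduction of the operations of $A$; and it extends $f$ since $\id_{sp}(\lb \sigma \rb) = \{\sigma\}$. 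Uniqueness of $\bar f$ is immediate since $A$ is generated by the $\{\sigma\}$. Finally the analogous statement for $\{ \id(\lb t \rb) \mid t \in T_{bw-Rat}(\Sigma)\}$ follows by transporting along the isomorphism $\id_{sp}(\lb t \rb) \mapsto \id(\lb t \rb)$ of Theorem~\ref{thm.id.subalgbra.iso.bi-Kleene} restricted to bw-rational terms.

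The step I expect to require the most care is the inductive verification that the operations of $A$ collapse to $\id_{sp}(\lb s \rb)\,\square\,\id_{sp}(\lb t \rb) = \id_{sp}(\lb s\,\square\,t \rb)$: for $\cdot$ and $^{*}$ this is direct from (\ref{id.seq.multiply}), but for $\parallel$ it needs the iterated-closure identity $\id_{sp}(\id_{sp}(\lb s \rb) \parallel \id_{sp}(\lb t \rb)) = \id_{sp}(\lb s \rb \parallel \lb t \rb)$, which comes from two applications of (\ref{id.para.multiply}) together with commutativity of $\parallel$ and the closure property of $\id_{sp}$. Everything else — closure of $A$ under the operations, well-definedness and the homomorphism property of $\bar f$, and the transport along the isomorphism — is routine.
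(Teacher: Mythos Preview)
Your proof is correct and follows essentially the same route as the paper. The paper's argument for $a =_{EX} b$ is the same chain of inequalities you give (it writes it directly as $a \le_{bw-Rat} \id_{sp}(\lb a\rb) \le_{bw-Rat} \id_{sp}(\lb b\rb) \le_{EX} b$ and then symmetrises), and for the freeness assertion the paper simply appeals to Theorem~\ref{thm.id.subalgbra.iso.bi-Kleene} without spelling out the universal-property verification that you carry out; your explicit treatment of the operations of $A$ and the construction of $\bar f$ just fills in what the paper leaves to the reader.
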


\begin{proof}
By   Theorem \ref{3696r.trex.ideal.rational.thm},   the languages  $\id_{sp} ( \lb a \rb) $ and $  \id_{sp} ( \lb b \rb) $ are bw-rational, and so  we  have $$   a \le_{bw-Rat}  \id_{sp} ( \lb a \rb) \le_{bw-Rat}  \id_{sp} ( \lb b \rb) \le_{EX}  b, $$ 
where the first two relations follow from   Theorem \ref{thm.bwrat.equal.imply.bw-equiv}
and  the last relation follows from Theorem \ref{3696r.trex.ideal.rational.thm}. Interchanging $a$ and $b$ in this argument gives  $ a =_{EX} b $. 
The freeness assertion then follows from Theorem \ref{thm.id.subalgbra.iso.bi-Kleene}.
\qed
\end{proof}

\section{Conclusions}
\label{sect.conclusions}

We have proved, in this paper, that the class of pomset languages is closed under all Boolean operations, and that every identity that is valid for all pomset languages is a consequence of the set of valid regular and commutative-regular identities. We have also shown that the problem of establishing whether two pomset terms define the same language is decidable. The complexity of this is not clear however. 
  It is known that decidability of equivalence of two regular terms is PSPACE-complete \cite{Stockmeyer:Meyer:1973:word:exponential:problems,phd-stockmeyer:1974:Automata:Logic}, and can be shown that the analogous problem for commutative-regular terms lies in PSPACE, hence it is possible that generalising to pomset terms does not increase the bound beyond PSPACE. On the other hand, this problem may be EXPTIME-complete or EXPSPACE-complete. This is worth investigating further. 

\bibliographystyle{elsart-num}
\bibliography{pomset}

\begin{thebibliography}{10}
\expandafter\ifx\csname url\endcsname\relax
  \def\url#1{\texttt{#1}}\fi
\expandafter\ifx\csname urlprefix\endcsname\relax\def\urlprefix{URL }\fi

\bibitem{ConwayJH:71:regafm}
J.~H. Conway, Regular Algebra and Finite Machines, Chapman and Hall, 1971.

\bibitem{Pratt:82:OCP}
V.~R. Pratt, On the composition of processes, in: R.~A. DeMillo (Ed.), POPL 82,
  ACM, 1982, pp. 213--223.

\bibitem{Pratt:85:COTM}
V.~R. Pratt, Some constructions for order-theoretic models of concurrency, in:
  R.~Parikh (Ed.), Logic of Programs, Vol. 193 of LNCS, Springer, 1985, pp.
  269--283.

\bibitem{Brookes:02:TPFFA}
S.~D. Brookes, Traces, pomsets, fairness and full abstraction for communicating
  processes, in: L.~Brim, P.~Jancar, M.~Kret\'{\i}nsk{\'y}, A.~Kucera (Eds.),
  CONCUR 2002, Vol. 2421 of LNCS, Springer, 2002, pp. 466--482.

\bibitem{Gaston:Mislove:TCS:2002:pomset}
P.~Gastin, M.~Mislove, A truly concurrent semantics for a process algebra using
  resource pomsets, Theoretical Computer Science 281 (2002) 369--421.

\bibitem{Zhao:10:TPSSV}
Y.~Zhao, X.~Wang, Z.~H., Towards a pomset semantics for a shared-variable
  parallel language, in: S.~Qin (Ed.), UTP 2010, Vol. 6445 of LNCS, Springer,
  2010, pp. 271--285.

\bibitem{Lodaya:weil:98:SPPAAl:27}
K.~Lodaya, P.~Weil, Series-parallel posets: Algebra, automata and languages,
  in: M.~Morvan, M.~C., D.~Krob (Eds.), STACS 98, Vol. 1373 of LNCS, Springer,
  1998, pp. 555--565.

\bibitem{Lodaya98:weil:98:SPLBW}
K.~Lodaya, P.~Weil, Series-parallel languages and the bounded-width property,
  Theoretical Computer Science 237~(1-2) (2000) 347--380.

\bibitem{Gischer:1988:EqThPo}
J.~L. Gischer, The equational theory of pomsets, Theoretical Computer Science
  61~(2--3) (1988) 199--224.

\bibitem{CKA}
T.~Hoare, B.~M{\"o}ller, G.~Struth, I.~Wehrman, Concurrent {K}leene algebra and
  its foundations, Journal of Logical Algebraic Programming 80~(6) (2011)
  266--296.

\bibitem{Grabowski}
J.~Grabowski, On partial languages, Fundamenta Informaticae 4~(2) (1981)
  427--498.

\bibitem{kozen:complete.kleene.reg}
D.~Kozen, A completeness theorem for {K}leene algebras and the algebra of
  regular events, Information and Computation 110~(2) (1994) 366--390.

\bibitem{burris1981course}
S.~Burris, H.~Sankappanavar, A course in universal algebra, Graduate texts in
  mathematics, Springer-Verlag, 1981.
\newline\urlprefix\url{https://books.google.co.uk/books?id=2grvAAAAMAAJ}

\bibitem{Stockmeyer:Meyer:1973:word:exponential:problems}
L.~J. Stockmeyer, A.~R. Meyer, Word problems requiring exponential time:
  Preliminary report, in: A.~V. Aho, B.~A., R.~L. Constable, R.~W. Floyd, M.~A.
  Harrison, R.~M. Karp, H.~R. Strong (Eds.), STOC 73, ACM, 1973, pp. 1--9.

\bibitem{phd-stockmeyer:1974:Automata:Logic}
L.~J. Stockmeyer, The complexity of decision problems in automata theory and
  logic, {PhD} thesis, MIT, Cambridge, Massachusetts, USA (1974).

\end{thebibliography}

\end{document}